\documentclass[11pt]{article}

\usepackage[a4paper,margin=2cm]{geometry} 
\usepackage{setspace}
\usepackage{xspace}
\usepackage[dvipsnames]{xcolor}

\usepackage{amsmath,amssymb,amsthm,mathtools}

\newtheorem{algorithm}{Algorithm}
\newtheorem{theorem}{Theorem}
\newtheorem{lemma}{Lemma}
\newtheorem*{theorem*}{Theorem}

\usepackage{enumitem}

\usepackage{hyperref}
\hypersetup{
  colorlinks=true,
  linkcolor=black,
  urlcolor=blue
}
\usepackage{cleveref}

\usepackage{tcolorbox}

\usepackage{caption}
\newcommand{\N}{\ensuremath{\mathbb{N}}\xspace}
\newcommand{\R}{\ensuremath{\mathbb{R}}\xspace}
\newcommand{\ER}{\ensuremath{\exists\mathbb{R}}\xspace}
\newcommand{\PR}{\ensuremath{\text{P}\mathbb{R}}\xspace}
\newcommand{\NP}{\ensuremath{\text{NP}}\xspace}
\newcommand{\PSPACE}{\ensuremath{\text{PSPACE}}\xspace}

\newcommand{\inv}{\ensuremath{\texttt{inv}}\xspace}

\renewcommand{\L}{\ensuremath{\mathcal{L}}\xspace}

\newcommand{\wordRAM}{word RAM\xspace}
\newcommand{\WordRAM}{Word RAM\xspace}
\newcommand{\realRAM}{real RAM\xspace}
\newcommand{\RealRAM}{Real RAM\xspace}
\newcommand{\PosSLP}{\textsc{PosSLP}\xspace}
\newcommand{\stretchability}{\textsc{Stretchability}\xspace}
\newcommand{\ordertype}{\textsc{Order Type}\xspace}
\newcommand{\etrinv}{\textsc{ETRINV}\xspace}
\newcommand{\etr}{\textsc{ETR}\xspace}
\newcommand{\satisfiability}{\textsc{Satisfiability}\xspace}
\newcommand{\artgalleryProblem}{\textsc{Art Gallery Problem}\xspace}
\newcommand{\partialOrderTypeProblem}{\textsc{Partial Order Type Problem}\xspace}
\newcommand{\OrderTypeProblem}{\textsc{Order Type Problem}\xspace}

\newcommand{\Vis}{\operatorname{Vis}}

\title{Beyond Bits:\\  An Introduction to Computation over the Reals}
\author{Till Miltzow}
\date{}

\begin{document}

\maketitle

We aim to introduce, in a lightweight and accessible way, how to think about computation over the real numbers and why such models are useful.
The material is intended for multiple audiences: instructors who wish to include real computation in an algorithms course, their students, and PhD students encountering the subject for the first time.

We cover a range of basic concepts and present several complementary techniques and perspectives.
Rather than aiming for completeness, we deliberately cherry-pick a small number of results that can be proved in a classroom setting, while still conveying some of the technical core ideas that recur throughout the field.
As a consequence, the exposition is occasionally informal.
This is a conscious choice: our goal is not full technical precision at every point, but to communicate the underlying ideas and to illustrate how the concepts are used in research practice.

There are only a small number of modern introductory texts on computation over the reals.
Most prominently, Ji{\v{r}}{\'\i} Matou{\v{s}}ek’s lecture notes~\cite{Matousek2014ETR} provide a beautifully streamlined proof of the \ER-completeness of stretchability, one of the central \ER-complete problems.
Along the way, the reader is introduced to the complexity class itself through a single, coherent narrative.
A second reference is the recent compendium by Schaefer, Cardinal, and Miltzow~\cite{SchaeferCardinalMiltzow2024Compendium}, whose goal is to provide a comprehensive catalogue of known \ER-complete problems.
As a reference work, it is necessarily dense and does not include proofs or exercises.

The book by Blum, Cucker, Shub, and Smale~\cite{BlumCuckerShubSmale1998} is a foundational reference for computation over the real numbers and played a crucial role in establishing the subject as a coherent area of study.
It introduces a real computation model and develops a rich theory of complexity over continuous domains.
At the same time, the computational model adopted there differs in several respects from formulations that are most commonly used today.
In particular, the definition of real Turing machines is technically involved, and the model allows arbitrary access to real constants, which complicates comparisons with classical discrete complexity theory.
Modern treatments typically emphasize binary input, real-valued witnesses, and a restricted use of constants, leading to models that align more closely with contemporary notions of computation and complexity.
Our approach follows this more recent perspective, while still viewing the framework by Blum, Shub and Smale as an important historical and conceptual precursor.

\newpage
\tableofcontents

\newpage
\section{Models of Computation}

\subsection{Motivation}
An algorithm is a description of how we can do computations on a computer.
Often, those algorithms are given using free text, as in \Cref{alg:freetextLinearSearch}.

\begin{algorithm}[Linear Search]
\label{alg:freetextLinearSearch}
Given a finite list of numbers $L = (x_1, x_2, \dots, x_n)$ and a number $t$, we inspect the elements of the list from left to right.
Once we find an $x_i$ such that $x_i = t$, then the algorithm returns the index $i$.
If no such index exists, the algorithm reports that $t$ does not occur in the list.
\end{algorithm}

Students and teachers might find this style not precise enough and prefer a description using pseudocode, as in \Cref{alg:pseudocodeLinearSearch}.

\begin{algorithm}[Linear Search]
\label{alg:pseudocodeLinearSearch}

\ 

\noindent \textbf{Input:} A finite list of numbers $L = (x_1, x_2, \dots, x_n)$ and a number $t$ (the target).

\noindent \textbf{Output:} The position of $t$ in the list, or the statement ``$t$ is not in the list''.

\noindent \textbf{Procedure:}
\begin{enumerate}[noitemsep,topsep=0pt]
  \item Start with the first element of the list.
  \item Compare the current element with the target $t$.
  \item If they are equal, stop and report the position of this element.
  \item Otherwise, move to the next element in the list.
  \item Repeat steps 2--4 until either the target is found or the end of the list is reached.
  \item If the end of the list is reached without finding $t$, report that $t$ is not in the list.
\end{enumerate}
\end{algorithm}

This second version adds at points more detail and helps the reader to clarify ambiguity. 
It serves two purposes. 

First, it is a useful step towards implementing this algorithm on a computer.
Second, this is often used in introductory algorithms classes for mathematical analysis.
We can show that the algorithm always terminates, always returns the correct solution, takes a total of $X$ steps, and uses at most $Y$ words of working memory.
But if we take a step back, we normally carry out mathematical analysis on mathematically well-defined objects
such as sets, numbers, functions, relations, vector spaces, and many more.
In what sense is either of the two descriptions of an algorithm a well-defined mathematical object?
In the following, we define an algorithm as a well-defined mathematical object and explain why the descriptions above are fine and desirable to use, even if we know how to define algorithms more rigorously.

\subsection{\WordRAM}
The \wordRAM model is meant to capture how physical computers actually handle data, while still being simple enough to reason about mathematically.
The abbreviation RAM stands for Random Access Memory, as it can access all its memory without restrictions.
The central idea of the \wordRAM is that the machine works with \emph{words} of fixed size, rather than with individual bits.
A word is large enough to store any memory address or any integer of reasonable size, typically on the order of $\Theta(\log n)$ bits when the input size is $n$.

\begin{center}
    \includegraphics{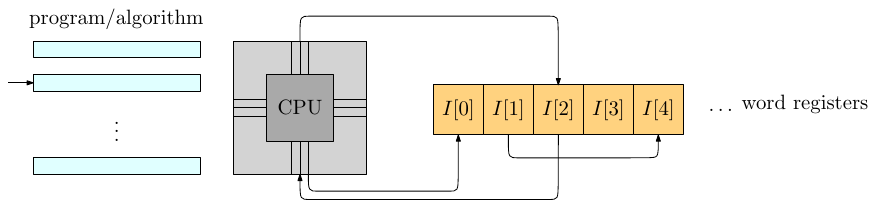}
    \captionof{figure}{An algorithm on the \wordRAM is a finite list of instructions. Input, output, and all working memory take place in an infinite array of registers.}\label{fig:WordRAM}
\end{center}

Memory is viewed as a large array of such words, and each word can be accessed directly using its address.
This reflects the fact that modern computers allow random access to memory: accessing the contents of a specific memory cell takes constant time, independent of where that cell is located.
The model therefore abstracts away from physical details such as caches or memory hierarchies, while keeping the essential feature of direct addressing.

The machine can perform basic operations on words in constant time.
These include arithmetic operations (such as addition and subtraction), comparisons, bitwise operations, and reading from or writing to memory.
The keypoint is that there is a fixed number of allowed primitive operations.

\begin{center}
\begin{tabular}{|l|p{0.68\linewidth}|}
\hline
\textbf{Operation type} & \textbf{Standard operations (unit cost)} \\
\hline
Assign Constants & $M[i] \gets j$, with $i,j$ being constants. \\
\hline
Copy & $M[i] \gets M[j]$, with $i,j$ being constants  \\
\hline
Copy via Indirect Access &  $M[i] \gets M[M[j]]$, with $i,j$ being constants \\
& This copies the content of the cell that $M[j]$ points to into $M[i]$.\\

\hline
Arithmetic on words & $M[i] = M[j] \circ M[k]$, with $i,j,k$ being constants and $\circ \in \{+,-,\times, / \}$  \\
& Can be combined with indirect access,i.e., $M[M[1]] = M[2] - M[3]$.\\
\hline
Bitwise operations & AND, OR, XOR, NOT on words. (optional) \\
\hline
Shifts & Left/right shift by a constant (optional). \\
\hline
Comparisons & $=$, $\neq$, $<$, $\le$, $>$, $\ge$ between words. \\
\hline
Control flow & Conditional branch (based on a comparison result) and unconditional jump (go to line $\ell$). 
 If $M[1] > M[2]$ GoTo line $\ell$.  \\
\hline
\end{tabular}
\captionof{table}{A typical set of standard primitive operations in the \wordRAM model.}\label{tab:wordram-ops}
\end{center}

One way to think of a program on the \wordRAM is as the machine code that is produced by a compiler on a physical computer.
It has only very primitive operations, but this is not how most people would program.
\Cref{fig:WordRAM} gives a visualization of the \wordRAM, the set of instructions, the central processing unit performing the instructions and the memory cells.
For ilustrative purposes, we implement the previous example of linear search being on the \wordRAM, see \Cref{alg:WordRAM-LinearSearch}.

\begin{algorithm}[LinearSearch on WordRAM]
\label{alg:WordRAM-LinearSearch}
We assume a \wordRAM with word size $w = \Theta(\log n)$.
Memory is an array $M[\cdot]$ of words.
The input is stored as follows:
$M[0] = n$, $M[1] = t$, and $M[3], \ldots, M[n+2]$ contain the list of elements $x_1,\ldots,x_n$.

\medskip

\noindent
\begin{tabular*}{\linewidth}{@{}l@{\extracolsep{\fill}}r@{}}
\textbf{1:} $M[0] \gets M[0] + 2$
  & adjust the length to match the final index \\

\textbf{2:} $M[2] \gets 3$
  & initialize the index variable \\

\textbf{3:} $\text{if } M[M[2]] = M[1] \text{ then return } M[2]$
  & found the number, return the index \\

\textbf{4:} $M[2] \gets M[2] + 1$
  & move to the next cell \\

\textbf{5:} $\text{if } M[2] > M[0] \text{ then return ``NO''}$
  & the number is not in the list \\

\textbf{6:} $\text{go to line 3}$
  & repeat the search loop
\end{tabular*}

\end{algorithm}

\Cref{tab:wordram-ops} gives a rough overview of possible allowed operations.
Even in the algorithm described above, we did not very strictly adhere to the set of allowed operations.

\paragraph{Mathematical Analysis.}
For mathematical analysis, the very features that make a \wordRAM program unpleasant to read becomes its strength.

Once an algorithm is written as a \wordRAM program, every operation has a precise and agreed-upon meaning and cost. There is no ambiguity about what counts as a step, what kind of memory access is allowed, or how large the numbers involved may be. This matters because analysis is ultimately about counting operations, not about intentions. A \wordRAM program fixes a concrete execution model on which such counting makes sense.

The explicit nature of the \wordRAM formulation forces us to be precise. 
This makes assumptions about random access, word size, and arithmetic power visible. 
As a result, claims like “this runs in linear time” or “this uses constant extra space” can be checked mechanically against the model rather than accepted informally.

\paragraph{Independent of Physical Computers.}
The \wordRAM also draws a clean boundary between algorithmic ideas and hardware assumptions. By idealizing word-level operations and constant-time memory access, it captures what is essential about modern machines while ignoring ``irrelevant'' engineering details. 
This makes it possible to reason mathematically about algorithms that rely on bit tricks, hashing, or packed data structures, without having to commit to a specific processor architecture.

\paragraph{Limited Communication Value.}
A \wordRAM program is excellent for analysis, but rather poor as a communication device.

The main reason is that it exposes details that are irrelevant for understanding the idea of an algorithm, while hiding the structure that humans actually reason about. 
A \wordRAM program forces the reader to think in terms of registers, memory locations, and jumps. 
As a result, the high-level logic—what is being searched for, why a loop terminates, what invariant is being maintained—has to be reconstructed mentally from low-level operations. This is cognitively expensive and error-prone, even for experienced programmers.

\WordRAM descriptions also scale poorly. 
As algorithms grow more complex, the number of instructions increases rapidly, and the global structure of the algorithm becomes fragmented across jumps and register updates. Properties that are obvious in higher-level descriptions—such as “repeat until the list ends” or “maintain the smallest value seen so far”—are no longer explicit but must be inferred.

\paragraph{Implicitness Assumption.}
In theoretical computer science, algorithms will typically be described in words and sometimes using high-level pseudocode.
It is implicitly assumed that any such description can, if necessary, be translated into an equivalent program in the \wordRAM model. 
We will not usually carry out this translation explicitly, since doing so adds little insight and obscures the underlying idea of the algorithm. 
Nevertheless the \wordRAM model therefore serves as a reference point for analysis rather than as a language for presentation.
And whenever we do a mathematical analysis of an algorithm, we do the analysis on the implicit \wordRAM program.

\subsection{Turing Machines}
Talking about a definition of computation without mentioning Turing machines is like cake without coffee: something is missing.
The main value of Turing machines, compared to the \wordRAM, is their mathematical simplicity.
They originated from the following thought experiment.

Imagine a very patient but extremely simple worker sitting in front of an endless strip of paper, divided into squares.
Each square can hold a single symbol. The worker can look at exactly one square at a time, change the symbol written there, and then move one step to the left or to the right.
(Immediately forgetting what they saw on the previous square.)
The worker also has a small instruction sheet—a finite list of rules—that tells them what to do depending on the symbol they see and on which line of the instruction sheet they are currently following.
In other words, the “memory” of the worker is just their place on this finite list of rules.
That is all the worker can do. There is no overview of the tape, no jumping to arbitrary positions, no arithmetic beyond what can be encoded in those tiny steps. Everything happens locally, one square at a time. 
Computation, in this model, is the process of repeatedly applying these simple rules, potentially for a very long time.
Instead of giving a mathematically formal definition, let's have a short discussion of such a machine.

\begin{center}
    \includegraphics[page = 3]{figures/RegiserMachine.pdf}
    \captionof{figure}{Turing Machines originated from a simple thought experiment about a simple worker.}\label{fig:Turing-Machine}
\end{center}

\paragraph{Mathematically Simplistic and Formal.}
Although this picture is intentionally informal, the model can be made fully mathematically rigorous. The worker, the tape, and the instruction sheet can all be described using finite sets, functions, and precise transition rules, leaving no ambiguity about how the computation proceeds. At the same time, the model is mathematically simplistic: it relies on only a handful of basic operations and avoids any appeal to advanced mathematics or physical assumptions. This combination is deliberate. By keeping the model as simple as possible while still capturing all computable procedures, the Turing machine provides a clean and robust foundation for rigorous reasoning about what computation can and cannot do.

\paragraph{Equivalent to \wordRAM.}
A Turing machine and the \wordRAM model differ greatly in how computation looks, but they are closely related in power. Any algorithm that can be executed on a \wordRAM can be simulated by a Turing machine with at most a polynomial slowdown, and conversely, any Turing machine running can be simulated on a \wordRAM with only constant multiplicative overhead. 
(The simulation of a \wordRAM program on a Turing machine is tedious and requires a bunch of clever tricks.)
As a consequence, when we classify algorithms by polynomial time versus superpolynomial time, the choice between Turing machines and \wordRAM{}s does not matter: they define the same notion of efficient computation, up to polynomial factors.

\paragraph{Algorithm Design on Turing Machines.}
At first it might feel difficult to design algorithms on a single-tape Turing machine, because at any time the worker can store only finitely much information.
A basic trick is to use explicit start and end markers and to program the head to sweep back and forth between them, ``remembering'' information by writing it onto the tape and picking it up on later passes.
For algorithm design it is often more convenient to work with a \emph{multitape} Turing machine: besides the input tape we have a constant number of work tapes, each with its own head, so we can keep, say, a binary counter on one tape, a marked current position on another, and scratch space on a third.
Interestingly, multitape machines can be simulated with the standard single-tape model.
The classical simulation encodes all tapes on one tape separated by delimiters, and marks the simulated head position on each encoded track.
To simulate one multitape step, the single-tape machine scans to find the marked symbols, determines the next transition, and then makes another pass to update the affected symbols and move the marks.
This causes only polynomial overhead.
To formulate all those tricks and to show that the Turing machine can indeed simulate the \wordRAM with only polynomial overhead might take several lectures to fully explain, so we skip it here.

\subsection{Real Models of Computation}

Real models of computation are abstract models in which real numbers are treated as basic objects of computation, rather than being encoded as finite bit strings. In these models, operations such as addition, subtraction, comparison, and multiplication or division on real numbers are assumed to take constant time and to be exact. 
Before, we give more details on how this is defined, let's look at some reasons why one might or might not want to consider those models of computation.

\paragraph{Hide Numerical Issues.}
One motivation for real models of computation is that they deliberately abstract away from numerical representation issues such as rounding, precision loss, and floating-point errors. 
In many algorithmic settings—especially in geometry or optimization—these issues are not central to the combinatorial or structural difficulty of the problem, but can dominate the analysis when working in bit-level models. 
By assuming exact real arithmetic, real models allow us to study the logical structure and asymptotic behavior of algorithms without being distracted by numerical concerns. 
This can clarify which difficulties are truly algorithmic and which arise only from finite-precision representations.
(Yet those issues might bite you again later.)
This is the main reason that the \realRAM is used in Computational Geometry.

\paragraph{Unrealistic Assumption.}
A common criticism of real models of computation is that they rely on assumptions that cannot be realized on physical machines. 
Real numbers generally require infinite precision, and exact arithmetic operations on them cannot be performed in constant time on actual hardware. 
As a result, algorithms that appear efficient in a real model may hide substantial computational costs when translated to bit-level implementations. 
For this reason, results proved in real models must be interpreted with care.

In extreme cases, this idealization allows one to effectively ``cheat'' by encoding an enormous amount of information into a single real number and manipulating it using constant-time operations, potentially collapsing computations that would take exponential time in more realistic models into polynomial time. 
We will show an example on how this can be done.

\paragraph{Naturalness.}
In some settings, real models of computation arise not because one is explicitly interested in real numbers, but because they enter the problem formulation in a natural and unavoidable way. 
Even when the underlying question is combinatorial or logical in nature, continuous parameters, coordinates, or thresholds may appear as part of a clean mathematical description. 
In such cases, attempting to eliminate real numbers entirely can lead to artificial encodings or unnecessarily complicated definitions. 
Introducing a real model of computation then becomes a methodological step: it provides a formal framework in which these naturally occurring real quantities can be handled directly and consistently. The model does not assert that the problem is “about” real numbers, but that allowing them simplifies the expression and analysis of the underlying computational phenomenon.
One example is convex optimization that often hides their model of computation, which necessarily needs to involve real numbers the way that their algorithms and their analysis is formulated.
We will give more concrete examples later.

\begin{center}
    \includegraphics[page = 2]{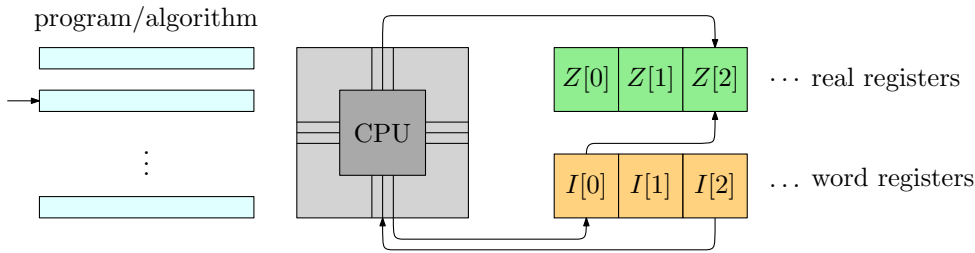}
    \captionof{figure}{The \realRAM has two infinite arrays of registers.
    One that can operate on words and one that is allowed to operate on real numbers.}\label{fig:realRAM}
\end{center}

\paragraph{\RealRAM and Real Turing Machine.}
The two models of computation that we had seen earlier can be extended to the \realRAM and the real Turing machine respectively.
We give now a description on how one might do this.

A \realRAM can be viewed as a direct extension of the \wordRAM model by enriching it with an additional set of registers that store real numbers instead of machine words. 
Alongside the usual word registers, which hold integers, addresses, and control information, the machine is equipped with real registers that can contain arbitrary real values as input. 
The instruction set is extended accordingly: in addition to word-level operations, the machine may perform constant-time arithmetic operations and comparisons on the contents of real registers.
These operations are $+,-,\times,\div$, and a sign test of the form $R>0$.
The last operation returns yes if the content of register $R$ is greater than zero, and no otherwise.
Note that we cannot simply use arbitrary real constants such as $\pi$ or take square roots.
In this way, the \realRAM preserves the structural features of the \wordRAM---random access and a clear cost model---while allowing algorithms to manipulate real numbers directly as atomic objects.
It is important that the set of real operations do not give any direct access to the bit representation of the real numbers, this includes the rounding operation.

A real Turing machine can be described as a natural extension of the classical Turing machine by adding an extra tape dedicated to real numbers. 
In addition to the usual tape, which stores symbols from a finite alphabet and is used for control and bookkeeping, the machine is equipped with a separate real tape whose cells contain real values. The transition rules are extended so that the machine may read, write, and compare real numbers on this tape, and perform basic real arithmetic in a single step. As with the standard Turing machine, the control remains finite and all actions are local. 
Again it is important that the access to the real number is limited to comparison and real arithmetic.

\subsection{Cheating with the \RealRAM.}
There are many ways that one might give access to the number representation to the \realRAM. 
One such way is to allow rounding. 
Specifically given $x\in \R$ we return the unique integer $i$ such that $i\leq x < i+1$.
One way to use this power is to solve the \textsc{Factoring Problem}.
That is: find a factor of a natural number $n$. 
We show the following theorem due to Shamir~\cite{Shamir1977}.

\begin{theorem}
    \label{thm:FastFactoring}
    We can solve the \textsc{Factoring Problem} in polynomial time on a \realRAM that has a rounding operation.
\end{theorem}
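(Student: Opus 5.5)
The plan is to follow Shamir's idea: use rounding together with exact real arithmetic to compute $n! \bmod n$-type quantities. Concretely, we will compute $\gcd(k!, n)$ for well-chosen $k$ and locate a nontrivial factor by binary search. Testing primality is not needed separately. If $n$ is composite with smallest prime factor $p$, then $\gcd(k!,n)=1$ for $k<p$, while $1<\gcd(k!,n)<n$ for $p\le k<n$ (when $n\neq 4$; small cases are handled by hand). So binary search over $k\in\{1,\dots,n\}$ with $O(\log n)$ gcd computations finds $p$. If $\gcd((n-1)!,n)=1$, then $n$ is prime and we report that. The gcd of two integers of polynomially many bits is computed by the Euclidean algorithm. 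Here $a \bmod b = a - b\lfloor a/b\rfloor$ uses one division, one rounding, one multiplication and one subtraction, and the algorithm needs $O(\log)$ of the bit length iterations. So everything reduces to computing $k! \bmod n$ with $O(\mathrm{poly}\log n)$ operations.

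The key step is computing $k! \bmod n$ quickly. Write $k! = \binom{k}{k/2}\,(k/2)!^2$ (with a correction factor for odd $k$). Recursively this gives $O(\log k)$ levels, each needing one central binomial coefficient modulo $n$. We compute $\binom{m}{j}$ exactly from a single big number: choose $B=2^{m+1}>\sum_i \binom{m}{i}$ and compute $(1+B)^m$ by repeated squaring, using $O(\log m)$ multiplications. Its base-$B$ digits are exactly the binomial coefficients $\binom{m}{i}$, since each is less than $B$. The digit $\binom{m}{j}$ is extracted by $\lfloor (1+B)^m / B^j\rfloor - B\lfloor (1+B)^m/B^{j+1}\rfloor$, again using rounding. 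Then we reduce modulo $n$ with one more division and rounding. Powers of $2$ such as $B=2^{m+1}$ are obtained by repeated squaring from the constant $2$.

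Correctness rests on exact integer arithmetic. The real RAM performs $+,-,\times,\div$ exactly, and rounding of a quotient of integers gives the exact floor. Hence all intermediate quantities are exact integers, albeit of exponential bit length ($\approx m^2$ bits, where $m\le n$, so $O(n^2)$ bits). Their size does not matter, since each operation costs unit time. The total count is $O(\log n)$ binary-search steps, times $O(\log n)$ recursion levels for the factorial, times $O(\log n)$ squarings plus gcd work. This gives $O(\log^3 n)$ or so operations, which is polynomial in the input length $\log n$.

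I expect the main obstacle to be the factorial step: getting $k!\bmod n$ in polylogarithmically many operations rather than $k$ multiplications, together with the bookkeeping in the recursion for odd $k$. An alternative I would try if the binomial recursion gets messy is to encode $k!$ directly as a ratio $\lfloor B^k/\binom{B}{k}\rfloor$ for $B$ huge, say $B=2^{2k^2}$. Here $\binom{B}{k}$ is extracted as a digit of $(1+B)^B$ in a larger base, and $B^k/\binom{B}{k}$ differs from $k!$ by less than $1$ for sufficiently large $B$. This yields $k!$ with a constant number of exponentiations by repeated squaring.
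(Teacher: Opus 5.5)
Your main argument is correct and is essentially the paper's proof. It uses binary search for the least $k$ with $\gcd(k!,n)>1$, the halving recursion $(2m)!=\binom{2m}{m}(m!)^2$ with the central binomial coefficient read off by rounding as a base-$2^{l}$ digit of $(2^{l}+1)^{2m}$, and a rounding-based gcd (Euclid instead of the paper's binary-style variant), for $O(\log^3 n)$ operations overall.

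Two side remarks are wrong, though neither is used by the main argument:
\begin{itemize}
\item The claim $\gcd(k!,n)<n$ fails for many $p\le k<n$: for composite $n\neq 4$ one has $n\mid (n-1)!$ (e.g.\ $n=6$, $k=3$).
\item Your fallback with $B=2^{2k^2}$ is not polylogarithmic. Forming $(1+C)^{B}$ by repeated squaring takes $\log_2 B=2k^2$ multiplications, and any $B$ with $B^k/\binom{B}{k}-k!<1$ has $\log B=\Omega(k\log k)$.
\end{itemize}
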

\begin{proof}
    The idea is that $\gcd(n,k!) > 1$ if and only if $n$ has a nontrivial factor at most $k$.
    We need three ingredients to turn this into an algorithm.
    First, we need a fast way to compute $k$ factorial.
    Second, we need a way to compute the greatest common divisor of two numbers quickly using rounding.
    Third, we do binary search on the value of $k$.
    We explain the first two steps and then outline the overall procedure.
    
   \paragraph{Computing Factorials Fast.}
    We prove the following lemma.
    \begin{lemma}\label{lem:factorial-wordram}
    Let $n$ be an integer whose binary representation has length $b=O(\log n)$. Then $n!$ can be computed in $O(b^2)$ steps on a \realRAM with the rounding operation.
    \end{lemma}

    As preparation, and to become familiar with the techniques, we first prove the following lemma.
    \begin{lemma}
      \label{lem:fastExponents}
      Let $k,n$ be integers and let $n$ have a binary representation of length $b=O(\log n)$. Then $k^n$ can be computed in $O(b)$ steps on a \realRAM.
    \end{lemma}
    \begin{proof}
      We do not need rounding, and we assume that $k$ is already stored in a real register.
      If $n = 2m$ is even, then
      \[k^n = k^{2m} = \left( k^{m} \right )^2.\]
      If $n = 2m + 1$ is odd, then
      \[k^n = k^{2m + 1 } = \left( k^{m} \right )^2 \cdot k.\]
      Using this recursion (i.e., exponentiation by squaring), we can compute $k^n$ in $O(b)$ steps, with each step taking constant time.
    \end{proof}

    We are now ready for the proof of \Cref{lem:factorial-wordram}.
    First note that if $n$ is odd, we can use the recursion $n! = n \times (n-1)!$.
    This occurs at most every second step and takes constant time.
    Now assume that $n = 2m$.
    Then
    \[\binom{2m}{m} = \frac{(2m)!}{(m!)^2} \leq 2^{2m}.\]
    Hence,
    \[(2m)! = \binom{2m}{m} (m!)^2.\]
    Thus, we can recurse provided that we can compute $\binom{2m}{m}$
    in $O(b)$ steps. We show this in the remainder of this paragraph.
    First, recall the binomial identity
    \[(x+y)^k = \sum_{i=0}^{k} \binom{k}{i} x^i y^{k-i}.\]
    Now, for $x=2^l$, $y=1$, and $k=2m$, we obtain
    \[(2^l+1)^{2m} = \sum_{i=0}^{2m} \binom{2m}{i}\, 2^{l i}.\]
    Using repeated squaring, we can compute $(2^l+1)^{2m}$ efficiently, 
    see \Cref{lem:fastExponents}.
    Choose an integer block length $l$ with $l \ge 2m$ (for instance, $l=2m$ suffices).
    Then each term $\binom{2m}{i}2^{li}$ occupies (at most) the $l$ consecutive bits in positions $li,\ldots,l(i+1)-1$:
    indeed, $\binom{2m}{i}\le 2^{2m}\le 2^l$, so adding these shifted terms causes no carries between different $l$-bit blocks.
    Consequently, the binary expansion of $(2^l+1)^{2m}$ consists of the binomial coefficients $\binom{2m}{0},\binom{2m}{1},\ldots,\binom{2m}{2m}$ written one after another in blocks of length $l$.
    As this is perhaps a bit difficult to see, we look at the example
    \[(2^6 + 1)^6 = \binom{6}{0} 2^{0\cdot 6} + \binom{6}{1} 2^{1 \cdot 6} + \binom{6}{2} 2^{2 \cdot 6} + \binom{6}{3} 2^{3 \cdot 6} + \binom{6}{4} 2^{4 \cdot 6} + \binom{6}{5} 2^{5 \cdot 6} + \binom{6}{6} 2^{6 \cdot 6}.\]
    Now, when we calculate $(2^6 + 1)^6$ in binary, we get 
    \[(2^6 + 1)^6 =  \texttt{ 000001 \ 000110 \  001111 \  010100 \  001111 \  000110 \  000001}.
 \]
  Now we compute the binomial coefficients as well, and we see that they agree with the binary digits of $(2^6 + 1)^6$:
    \[
    \begin{array}{rcl}
    \binom{6}{0} &=& 1  \;=\; \texttt{000001} \\
    \binom{6}{1} &=& 6  \;=\; \texttt{000110} \\
    \binom{6}{2} &=& 15 \;=\; \texttt{001111} \\
    \binom{6}{3} &=& 20 \;=\; \texttt{010100} \\
    \binom{6}{4} &=& 15 \;=\; \texttt{001111} \\
    \binom{6}{5} &=& 6  \;=\; \texttt{000110} \\
    \binom{6}{6} &=& 1  \;=\; \texttt{000001}
    \end{array}
    \]

    We now show how to extract a contiguous block of binary digits from a large integer.
    \begin{lemma}
      Let $R$ be an integer stored in a real register, 
      with binary representation
      $b_k b_{k-1} \ldots b_{0}$
      and let $0\le a < b \leq k$ be integers.
      Assume we do computations on a \realRAM with the rounding operation.
      Then we can compute the number $S$ with binary representation 
      $b_{b-1} b_{b-2} \ldots b_{a+1} b_{a}$ in constant time.
    \end{lemma}
    \begin{proof}
      Let
      \[
        R_1 \;=\; \left\lfloor \frac{R}{2^a}\right\rfloor.
      \]
      This discards the $a$ least significant bits of $R$.

      Next, we discard all bits of weight at least $2^{b-a}$ by reducing $R_1$ modulo $2^{b-a}$.
      Using the identity $x \bmod M = x - \lfloor x/M\rfloor M$, we define
      \[
        S \;=\; R_1 - \left\lfloor \frac{R_1}{2^{b-a}}\right\rfloor \,2^{b-a},
      \]
      which is exactly the required block of bits of $R$.
    \end{proof}

    We apply this lemma with $R  = (2^l+1)^{2m}$ and choose the block boundaries $a=lm$ and $b=l(m+1)$.
    This extracts exactly the $l$-bit block encoding $\binom{2m}{m}$, as required, in constant time.

\paragraph{Computing the Greatest Common Divisor.}
    We will show the following lemma.
    \begin{lemma}
    \label{lem:Euclidalgorithm}
    Let $a,b$ be natural numbers.
    There is an algorithm that computes the greatest common divisor $\gcd(a,b)$ in $O(\min (\log a , \log b ))$ time steps on a \realRAM with the rounding operation.
    \end{lemma}

    The standard algorithm for computing the greatest common divisor is Euclid's algorithm.
    Here we instead use a simpler but slower algorithm that is easier to analyze. 
    For notational convenience, we assume throughout the following proof that $a\leq b$.

    \begin{proof}
    First, we show how to compute the remainder $r < a$ such that
    \(b = a \cdot m + r\) and replace $b$ by $r$.
    We have
    \[
      r = b - \lfloor b/a \rfloor \cdot a.
    \]
    Note that $\gcd(b,a) = \gcd(a,r)$ and that $r$ has at most $\log a + 1$ bits.
    Thus, even if one of the numbers is very large, we can reduce it to size at most $a$ in a constant number of steps.

    Next, we show how to ``shave off'' at least one bit in each iteration.
    First, check whether both numbers are even.
    This can happen only at the beginning, so we divide both numbers by two until at least one of them is odd.
    We keep track of the number of removed factors of two and multiply this power of two back into the final result.

    Second, check whether exactly one of the two numbers is even.
    If so, divide the even number by two and continue.
    In this step, we clearly shave off one bit.
    Otherwise, both numbers are odd.
    Replace $b$ by $b' = b-a$.
    Then $\gcd(a,b) = \gcd(a,b')$.
    Since both $a$ and $b$ are odd, $b'$ is even, so in the next iteration we can divide by two and again shave off at least one bit.

    We terminate when one of the two numbers is zero, one, or equal to the other.
    At that point, no further reduction is needed.

    This finishes the proof of \Cref{lem:Euclidalgorithm}.
    \end{proof}
    
We finish the proof of \Cref{thm:FastFactoring} with a runtime analysis of aboves procedure.
The binary search needs at most $O(\log n)$ steps to find the appropriate $k$.
Furthermore, in each step we compute $k!$ in at most $O(\log^2 n)$ steps and the greatest common divisor in $O(\log n)$ steps as well.
This leads to a total running time of $O(\log^3 n)$, which is polynomial in the length of the binary representation of~$n$.
\end{proof}

It is not known whether the \textsc{Factoring Problem} can be solved in polynomial time,
and it is also not known to be \NP-hard.
The best known upper bound shows that it can be computed in polynomial time on a quantum computer~\cite{Shor1997}.
There are also other ways in which we can ``cheat'' with the \realRAM,
but all such examples involve operations that, in some way, access the bit representation of real numbers.

The example above and similar examples are the reason why we strictly prohibit any operation that provides access to the binary representation of real numbers.
On the positive side, there are no ``cheating'' examples known once we restrict ourselves in this way.

\paragraph{No Division Needed.}
It might be surprising, but we can simulate division (denoted by $\div$) using only multiplication.
We will later use the fact that our \realRAM has only the three operations $+$, $-$, and $\times$; therefore this is a good time to explain how to simulate division.
\begin{lemma}
  Let $A$ be a program on a \realRAM that uses the operations $+$, $-$, $\times$, and $\div$ and decides some decision problem~$P$.
  Then there is another program~$B$ that decides $P$ in the same asymptotic running time and uses only $+$, $-$, and $\times$.
\end{lemma}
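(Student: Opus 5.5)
The plan is to represent every real register of $A$ as a fraction. Program~$B$ replaces each real register $R_i$ of $A$ by two real registers $P_i$ and $Q_i$, and it maintains the invariant $R_i = P_i/Q_i$ with $Q_i \neq 0$. Word registers and control flow are copied unchanged. Input reals $x$ are loaded as $P=x$, $Q=1$. Because division is not itself a ring operation, we never evaluate $P_i/Q_i$. Every real instruction of $A$ is replaced by a constant-length block of $+,-,\times$ instructions on the numerator/denominator pairs. This gives the same asymptotic running time, since each step of $A$ costs $O(1)$ steps of $B$.

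The arithmetic translation is the usual fraction calculus:
\[
\frac{p_1}{q_1}\pm\frac{p_2}{q_2}=\frac{p_1q_2\pm p_2q_1}{q_1q_2},\qquad
\frac{p_1}{q_1}\cdot\frac{p_2}{q_2}=\frac{p_1p_2}{q_1q_2},\qquad
\frac{p_1}{q_1}\div\frac{p_2}{q_2}=\frac{p_1q_2}{q_1p_2}.
\]
Each uses a constant number of operations. Assigning a constant $c$ becomes $P=c$, $Q=1$. Copying, including indirect copying through word addresses, copies both halves of the pair, which is simple bookkeeping via interleaved addresses, e.g.\ $P_i$ stored at $2i$ and $Q_i$ at $2i+1$. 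The invariant $Q\neq 0$ is preserved for $\pm$ and $\times$ since the product of nonzero denominators is nonzero. For division we use the assumption that $A$ is a correct program, so it divides only by nonzero values. Hence $p_2\neq 0$ whenever $A$ executes $\div$, and the new denominator $q_1p_2$ is nonzero.

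The main obstacle is the sign test $R>0$, since $B$ cannot form the quotient. The trick I would use is
\[
\operatorname{sign}\!\left(\frac{P}{Q}\right)=\operatorname{sign}(P\cdot Q),
\]
valid because $Q\neq 0$ and $PQ = (P/Q)\,Q^2$ with $Q^2>0$. So $B$ computes $T=P\times Q$ and performs the test $T>0$, branching exactly as $A$ does. Equality tests reduce to $P=0$.

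By induction on the number of executed steps, the configuration of $B$ encodes the configuration of $A$: identical word registers and program counter, and real registers satisfying $R_i=P_i/Q_i$. Hence both programs follow the same control path and give the same yes/no answer. They therefore decide the same problem~$P$, with $B$ using only $+,-,\times$ and a constant-factor overhead.
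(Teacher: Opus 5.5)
Your proposal is correct and follows essentially the same approach as the paper: you store each real as a numerator/denominator pair, simulate the arithmetic with the fraction rules, and replace the sign test of $P/Q$ by a sign test of $P\cdot Q$. Your extra bookkeeping fills in details the paper leaves implicit, namely the invariant $Q\neq 0$, interleaved addressing for indirect access, and the induction over executed steps.
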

The idea is to represent each rational number $q=a/b$ by storing the pair $(a,b)$.
Initially, an input number $q$ is represented as $(q,1)$.
Later, if we have two numbers $q_1=a_1/b_1$ and $q_2=a_2/b_2$, then
\[
\frac{q_1}{q_2} = \frac{a_1/b_1}{a_2/b_2} = \frac{a_1b_2}{a_2b_1}.
\]
Similarly, we can simulate subtraction via
\[
q_1-q_2 = \frac{a_1}{b_1}-\frac{a_2}{b_2} = \frac{a_1b_2-a_2b_1}{b_1b_2}.
\]
Finally, to test whether $q>0$ for $q=a/b$, it suffices to check whether $a\cdot b>0$.

\subsection{Relation between Real and Discrete Computations.}
One might wonder how the \realRAM and the \wordRAM relate to one another. 
Surprisingly, we can give a very precise answer to this question as far as polynomial-time computations of decision problems are concerned.
To describe this relation, we have to introduce two concepts: computational models with access to an oracle, and straight-line programs.

\paragraph{Oracle access.}
A standard way to increase the computational power of a model is to equip it with access to an oracle. Informally, an oracle can be viewed as a black-box function that can be queried in a single computational step. 
For any input, it returns either yes or no.

\paragraph{Straight-line programs.}
When we write a program, we typically use loops and conditional statements.
If we want to compute a number without loops or conditionals, we refer to this as a straight-line program.
More precisely, a \emph{straight-line program} is a finite sequence of assignments
\[
a_0 = 1 \quad \text{and} \quad 
a_k = a_{i} \,\circ \, a_{j},
\quad \text{with } i,j < k \text{ and } \circ \in \{+,-,\times\}.
\]
Clearly, we can think of each $a_i$ both as a program describing how to compute the value and as the value itself.
The problem \textsc{Positive Straight-Line Program} (\PosSLP) asks if a given straight-line program computes a positive number or not.
At first glance this problem seems deceptively easy.
We give the following example that shows that we cannot easily compute the straight-line program on a \wordRAM.

Let 
\[ a_0 = 1, \quad a_1 = 1+1 = 2, \quad \text{and}\]
\[ a_{k+1} = a_k \times a_k,\quad  k = 1,\ldots, n.\]
An easy computation shows that \[a_k = 2^{2^k}.\]
Now it is clear that $a_n$ is positive, but it is also clear that the number has $2^n$ many bits in its binary representation and thus cannot be even stored in polynomial space on a \wordRAM.
Now things become complicated if our straight-line program computes $2^{2^{n_1}} - 3^{2^{n_2}} - 5^{2^{n_3}}$ for some $n_1,n_2,n_3$. 
Then it is not so easy to see if this is a positive number or not.
And we can imagine many even more complicated situations easily.

As a matter of fact, we do not know much of the complexity of \PosSLP and it seems very hard to understand the precise nature of this algorithmic problem.
It is known that \PosSLP is in \PSPACE and thus decidable, and there are some conditional NP-hardness proofs, but those conditions are maybe not true~\cite{burgisser2023hardness, perrucci2007realroots}.

Yet it plays a crucial role in understanding the interplay between real and discrete computations.

\paragraph{Simulating \RealRAM Computations.}
We have now the tools to formulate and prove the following theorem~\cite{allender2009complexity}.
We denote by $\text{P}^\PosSLP$ the set of languages computable on a \wordRAM in polynomial time with oracle access to $\PosSLP$
and with $\PR$ the set of Boolean languages computable in polynomial time on a \realRAM.

\begin{theorem}
    $\text{P}^\PosSLP = \PR$.
\end{theorem}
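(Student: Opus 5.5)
We prove the two inclusions separately. Throughout, by the lemma on simulating division we may assume the \realRAM program uses only $+$, $-$, $\times$ on real registers and the sign test $R>0$. Since the languages are Boolean, the input consists of bits, so every real register initially holds either a constant or a value derived from input bits.

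\textbf{The inclusion $\PR \subseteq \text{P}^\PosSLP$.} Let $A$ be a \realRAM program running in time $p(n)$. We simulate $A$ on a \wordRAM. The word registers and control flow are simulated directly. For every real register we do not store its value. Instead we maintain a straight-line program (a list of assignments $a_k = a_i \circ a_j$) that computes the current value. The constants $0$ and $1$ come from $a_0=1$ and $a_0-a_0$. Input bits are obtained the same way, since they are $0$ or $1$, and any integer constant of the program can be built with $O(1)$ gates. When $A$ executes $R_k \gets R_i \circ R_j$, we append one line to the straight-line program and record that register $R_k$ now points to that line. When $A$ executes a sign test $R>0$, we hand the straight-line program truncated at the line representing $R$ to the \PosSLP oracle and branch on its answer. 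Each step of $A$ adds $O(1)$ lines, so every oracle query has size $O(p(n))$ and there are at most $p(n)$ queries. The total running time is therefore polynomial. Indirect addressing is allowed only on word registers, so it needs no special treatment.

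\textbf{The inclusion $\text{P}^\PosSLP \subseteq \PR$.} Let $M$ be a polynomial-time \wordRAM with a \PosSLP oracle. The \realRAM simulates $M$ step by step on its word registers. This is possible because the \realRAM extends the \wordRAM; we assume its word part runs with the same word size, which is the intended model. When $M$ writes an oracle query, that query is a straight-line program $a_0,\dots,a_m$ encoded in memory, with $m$ polynomial. The \realRAM evaluates it literally: it sets a real register to $1$ and then, for $k=1,\dots,m$, reads the indices $i,j$ and the operation from the word memory and computes $a_k = a_i\circ a_j$ in a real register. The indices are used via indirect access into a real array. Each line costs $O(1)$ real operations, since arithmetic on arbitrarily large reals is unit cost. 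Finally it tests $a_m>0$ and returns the answer to the simulated $M$. The total time is polynomial.

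\textbf{Main obstacle.} The only delicate point is model bookkeeping rather than mathematics. In the first inclusion we must make sure the \realRAM has no access to non-constructible real constants, and that the word part of the \realRAM cannot exploit real values beyond the sign tests. This holds by the model's restrictions: no rounding and no arbitrary constants. In the second inclusion we need indirect addressing of real registers by word values, so that the straight-line program can be evaluated. I would state these model conventions explicitly before giving the two simulations.
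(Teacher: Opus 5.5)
Your proof is correct and follows the paper's argument: for $\PR \subseteq \text{P}^{\PosSLP}$ you keep a straight-line program for each real register and resolve every sign test with a \PosSLP query on the prefix that computes that register, and for the other inclusion you write out the direct evaluation of the queried straight-line program, which the paper only calls easy. One small inconsistency: the paper's \realRAM has random access to its array of real registers, and your second inclusion relies on exactly this, so in the first simulation drop the claim that indirect addressing concerns only word registers. Instead, note that the table mapping real-register addresses to lines of the straight-line program can be indexed by computed word values.
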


\begin{proof}
    It is easy to see that $\text{P}^\PosSLP \subseteq \PR$, as the \realRAM can compute \PosSLP easily.

    The reverse direction is a bit more difficult. 
    We assume that the \realRAM has only the basic operations $+$, $-$, and $\times$ to operate on the real numbers.
    Let $A$ be an algorithm on the \realRAM that we want to simulate in polynomial time with some algorithm $B$ on the \wordRAM.
    Now, our algorithm $B$ on the \wordRAM stores not the result of each computation, but the straight-line program of each real number computation.
    (Note that although the numbers are real, they are computed from Boolean input.)
    Otherwise, $A$ and $B$ are doing exactly the same steps.
    Now, it could be that $A$ has a comparison operation 
    \[ \text{If } x> 0 \text{ then } \ldots \]
    If we want $B$ to do exactly the same as $A$, it needs to be able to evaluate this comparison.
    Therefore, we replace this part of the program by taking the straight-line program that computes $x$ and then calling our \PosSLP oracle to resolve the comparison.
    In summary, $B$ does exactly the same as $A$ and also branches in the same way as $A$ using \PosSLP. 
    Therefore, $A$ and $B$ will return the same yes/no answer for the given decision problem.

    It remains to argue that $B$ runs in polynomial time.
    As $A$ runs in polynomial time, the length of the straight-line program is also of polynomial length.
    Thus, each step takes at most a polynomial number of additional steps in $B$ compared to $A$.
    As the product of two polynomials is a polynomial, the runtime of $B$ is still polynomial.
\end{proof}

To summarize, the problem \PosSLP encapsulates precisely the difference of the real and discrete computations as far as polynomial time computation of decision problems is concerned.

\subsection{Exercises}
These exercises are meant to review and deepen the content of this section.

    \begin{enumerate}[noitemsep]
        \item Which operations would you allow on a word RAM? 
        Make a complete list and weigh the pros and cons of each operation.
        \item It was mentioned that registers are allowed to store words of size $\Theta(\log n)$, where $n$ is the input length.
        Why do we need a word length of at least $\log n$?
        What would go wrong if we did not?
        What would happen if we allowed words of size $n$? 
        \item Give a simple \wordRAM program that can find an element $x$ in a sorted list of integers.
        Use only primitive operations. 
        \item Try to give a formal definition of a Turing machine as a mathematical object. 
        The more general you are, the more difficult it might become. Try to find a simple, workable definition. 
        \item Show that if we are only interested in decision problems then we can remove the subtraction operation from the real RAM.
        However, we need to ensure that we can compare two numbers $a<b$, rather than only performing comparisons of the form $a > 0$.
        \item Let $n$ be an integer that needs $b$ bits to be represented in binary. Show that there is a straight-line program of size $O(b)$ that computes $n$. 
        \item It is possible to allow different operations in straight-line programs. 
        For example, we can only allow the operations $O_1 = \{ + , -, \fbox{\rule{0pt}{0.1cm}\rule{0.1cm}{0pt}}^{2} \}$, 
        where $\fbox{\rule{0pt}{0.1cm}\rule{0.1cm}{0pt}}^2 : x \mapsto x^2/2$.
        Another example are the operations 
        $O_2 = \{ + , -, \inv \}$.
        Here $\inv : x \mapsto 1/x$, if $x \neq 0$.
    \begin{enumerate}
        \item 
        Show that for every straight-line program computing $x$ in $n$ steps, there exists another straight-line program computing $x$ in $O(n)$ steps using the operations $O_1$.\\
        Hint: \textcolor{white}{Rewrite $(a+b)^2 = a^2 + 2ab + b^2$.}
        \item Show that for every straight-line program computing $x$ in $n$ steps, there exists another straight-line program computing $x$ in $O(n)$ steps using the operations $O_2$.
        
        Hint: \textcolor{white}{Rewrite $\frac{1}{x} -\frac{1}{x+1} = \frac{1}{x^2 + x}$}
    \end{enumerate}
    \end{enumerate}

\paragraph{Open Pool Exam Questions.}
These questions capture the learning goals of the section.
\begin{enumerate}[noitemsep]
        \item Explain a program on a \wordRAM. (Not a formal definition.)
        \item Explain a Turing machine. (Not a formal definition.)
        \item Give a comparison of advantages and disadvantages of the two models.
        \item Explain what a program on a \realRAM is and give both a motivation and concerns about this model of computation.
        \item Show how to compute $k!$ in $O(\log^2 k)$ time on a \realRAM that is allowed to use rounding.
        \item Show how to compute the greatest common divisor $\gcd(a,b)$ of two numbers in $O(\min \log a, \log b)$ time on a \realRAM that is allowed to use rounding.
        \item Show that we can find a factor of a number in polynomial time on the \realRAM that is allowed to use rounding. You are allowed to use the the answer of the previous two questions.
        \item Define a straight line program and \PosSLP.
        \item Show that $\text{P}^{\PosSLP} = \text{P}\R$.
    \end{enumerate}

\newpage
\section{Existential Theory of the Reals}

\subsection{Motivation}
The \emph{existential theory of the reals} gives rise to a complexity class abbreviated as \ER (pronounced ``ER'' or ``exists R'').
It can be viewed as a ``real-number analogue'' of \NP: instead of binary witnesses, we allow witnesses over $\R$, and the verifier is allowed to perform exact real arithmetic.

This viewpoint is useful because many natural decision problems in geometry are \NP-hard, yet they do not come with an obvious \NP certificate.
Often, a solution exists but cannot be described with a small number of bits (for example, it may require irrational coordinates).
The class \ER captures precisely this phenomenon: problems in \ER have polynomial-time verifiers with real-valued witnesses, and \ER-complete problems form a robust hardness notion for such geometric questions.

Historically, a growing body of geometric and topological problems was shown to be complete for this class, and Marcus Schaefer unified many of these results under the name ``$\exists\mathbb{R}$'' in 2010~\cite{S10}.
First, we give two definitions of the class \ER.
Then, we place \ER in relation to other known complexity classes.


\subsection{Definitions}

\paragraph{Via Logic.}
The algorithmic problem \etr stands for \emph{existential theory of the reals}. It is a decision problem about systems of polynomial constraints over real numbers.
An instance consists of a finite set of real variables
$x_1, x_2, \dots, x_n$
and a finite collection of polynomial constraints of the form
\[
p(x_1,\dots,x_n) \ R \ 0,
\]
where each \(p\) is a polynomial with integer coefficients and $R \in \{<,\leq,=,\geq , >\}$.
We assume here that the polynomial is given in standard form.
The constraints can be combined using logical operations such as “and”, “or”, and “not”.
(Equivalently, one can assume the formula is in a normal form using only conjunctions and disjunctions of such atomic constraints.)
A simple example is 
\[
\exists x,y :
x^2 + y^2 \le 1
\ \land \
x \ge 1/2.
\]
It asks for real numbers $x,y$ such that they are in the unit disk and $x$ being at least a half.
\begin{center}
    \includegraphics{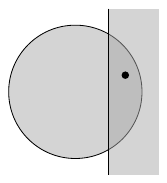}
\captionof{figure}{Here, you can see the region in the plane that satisfies both of those constraints.}
\end{center}

Now, the complexity class \ER is the set of problems that reduce in polynomial time to \etr.
Here reductions are ordinary reductions on a standard discrete machine model (a Turing machine or a \wordRAM), so the input is still a finite binary string.
The real numbers enter only through the quantified variables of the \etr instance.

\paragraph{Via Machine Models}
When we think of the complexity class \NP, we have two definitions:
one that says every problem that reduces to \satisfiability,
and one that says every decision problem $P$ is in \NP if the following holds.
There exists a polynomial-time algorithm $A$ running on a \wordRAM such that 
\[ I \in P \ \Leftrightarrow  \ \exists w\in \{0,1\}^* : A(I,w) = 1.\]
Here, by $I \in P$ we mean that $I$ is a yes-instance of $P$, and $w$ is typically referred to as a witness.
The set $\{0,1\}^* = \bigcup_{i\in \N} \{0,1\}^i$ is the set of all Boolean strings. 
Similarly, the set $\R^* = \bigcup_{i\in \N} \R^i$ is the set of all finite real-valued vectors of any dimension.

In the same way we can define \ER. We formulate the following theorem.
\begin{theorem}
    \label{thm:Machine-Model}
    A problem $P$ is in \ER if and only if the following holds.
There exists a polynomial-time algorithm $A$ running on a \realRAM such that, for every binary-encoded instance $I$,
\[ I \in P \ \Leftrightarrow  \ \exists w\in \R^* : A(I,w) = 1.\]
\end{theorem}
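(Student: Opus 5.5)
We prove the two directions separately. Throughout, we assume (by the lemma on simulating division) that the \realRAM uses only $+,-,\times$ and sign tests on real registers.

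\emph{($\Rightarrow$)} Let $P\in\ER$, so there is a polynomial-time reduction $f$ on a \wordRAM mapping each instance $I$ to an \etr formula $\Phi_I$ with $I\in P \Leftrightarrow \Phi_I$ true. The verifier $A(I,w)$ runs as follows. First it computes $\Phi_I$ using only word operations; these are allowed on a \realRAM. It then reads the first $n$ entries of $w$ as values for $x_1,\dots,x_n$. Each monomial of each polynomial is evaluated by repeated multiplication. Integer coefficients are built by doubling and adding, as in the exercise on straight-line programs of size $O(b)$. Every atomic predicate $p\,R\,0$ is decided by sign tests: for example, $p\ge 0$ iff not $-p>0$, and $p=0$ iff neither $p>0$ nor $-p>0$. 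Finally the Boolean combination is evaluated. Since $\Phi_I$ is in standard form and has polynomial size, all of this takes polynomial time. Correctness is immediate: an accepting $w$ is a satisfying assignment, and a satisfying assignment is an accepting $w$.

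\emph{($\Leftarrow$)} Let $A$ be a polynomial-time \realRAM verifier with running time $T=\mathrm{poly}(|I|)$. We construct in polynomial time an \etr formula $\Phi_I$ expressing ``there exist $w$ and an accepting computation of $A$ on $(I,w)$.'' Only the first $T$ witness entries can be read, so we take variables $w_1,\dots,w_T$. For each time step $t\le T$ and each relevant real register $j$ (at most $O(T)$ registers are touched) we add a variable $r_{t,j}$. The discrete part of the configuration needs care: program line, word registers, and the addresses used for indirect access. For fixed $I$ it is not fixed, because it depends on outcomes of real comparisons.

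Two options handle this.
\begin{enumerate}[noitemsep]
\item Encode the discrete state in variables constrained to $\{0,1\}$ via $z(z-1)=0$. Each transition rule becomes a polynomial constraint that is active under the appropriate Boolean selector. For example, if line $\ell$ is ``$R_i\gets R_j\times R_k$,'' the constraint is
\[
s_{t,\ell}\cdot\bigl(r_{t+1,i}-r_{t,j}r_{t,k}\bigr)=0,
\]
together with frame constraints for the unchanged registers. Branching on ``$R>0$'' is linked to a Boolean variable $c_t$ by constraints forcing $c_t=1 \Rightarrow r>0$ and $c_t=0\Rightarrow r\le 0$; these can be written directly using $\lor$ and $\land$. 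Indirect addressing is handled by a polynomial-size disjunction over all possible addresses $<T$. The formula is completed by requiring that the final state is accepting.
\item Alternatively, note that word values remain polynomially bounded (word size is $\Theta(\log n)$), so all discrete state could be guessed bitwise with $0/1$ variables.
\end{enumerate}
I would take the first option: its formula has size $\mathrm{poly}(T)$ and is produced by a \wordRAM in polynomial time.

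The main obstacle is this reverse direction: arguing that the whole run, including data-dependent control flow and indirect memory access, can be written as a polynomial-size Boolean combination of polynomial constraints. A related issue is that the reduction must output polynomials in standard form. Expanding a product chain $r_{t+1}=r_tr_t$ would blow up exponentially, but this is avoided because every constraint introduces a fresh variable per operation, so each polynomial has degree at most $2$. The correctness check is that satisfying assignments of $\Phi_I$ correspond exactly to pairs ($w$, accepting computation). This follows by induction on $t$, since the constraints force each configuration to be the unique successor of the previous one.
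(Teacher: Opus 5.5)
Your proposal is correct and follows the paper's proof sketch. One direction uses the variables of $\Phi_I$ directly as the real witness, and the other is the Cook--Levin tableau (initial configuration, one legal transition per step, acceptance), which you flesh out with details the sketch leaves implicit: guessing the comparison-dependent discrete state, selector-gated transitions, and one fresh variable per operation so that standard form causes no blow-up. Two small fixes are needed: under random access the reachable addresses, including witness positions, are bounded by $2^{O(\log |I|)}=\mathrm{poly}(|I|)$ rather than by $T$, so your register, witness and address variables should range up to that bound; and monomials with binary-encoded exponents should be evaluated by repeated squaring as in \Cref{lem:fastExponents}, not by repeated multiplication.
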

Note that compared to \NP, the only difference is that the verifier runs on a \realRAM and that the witness $w$ is allowed to be real-valued instead of binary.
In our model of the \realRAM, the instruction set is fixed and the only available real constants are $0$ and $1$.

It is difficult to say when exactly this theorem was first established.
The proof ideas go back to the original proofs of Cook~\cite{Cook1971} and Levin~\cite{Levin1973}.
For real models of computation this was first adapted by Blum, Shub and Smale~\cite{BlumShubSmale1989}.
However, their real model of computation does not exactly match ours.
Another proof can be found by Erickson, van der Hoog, and Miltzow~\cite{EricksonVanDerHoogMiltzow2024, EvdHM20}, who work explicitly with a \realRAM model.
This theorem is clearly easier to prove for the real Turing machine than for the \realRAM, because real Turing machines allow fewer operations.
Conceptually, both proofs are identical though.

\begin{proof}[Proof Sketch.]
    There are two directions.
    First, we show how an \etr-formula fits into the machine-based definition.
    Second, we show how to express a polynomial-time \realRAM computation as an \etr-formula.

    For the first direction, we take the existentially quantified real variables of the \etr-formula as the witness $w$.
    Evaluating the formula is a standard task for the \realRAM: it can compute the involved polynomials using $+$, $-$, and $\times$, and it can combine the resulting comparisons using the Boolean structure of the formula.

    For the second direction, the idea is to use \etr as a very low-level programming language.
    Fix an input instance $I$ of size $n$ and a running-time bound $T(n)$ for the \realRAM algorithm $A$.
    We introduce variables that represent the contents of each register at each time step $t\in\{0,1,\dots,T(n)\}$, together with variables encoding the current instruction.
    See the following illustration.
    \begin{center}
        \includegraphics{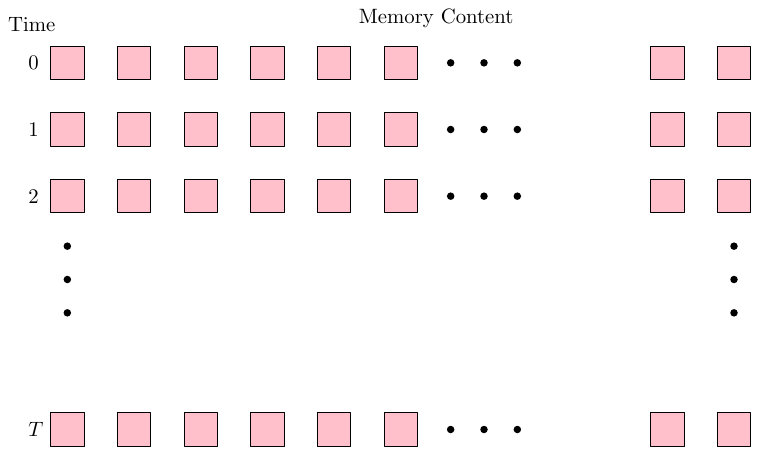}
        \captionof{figure}{The Cook--Levin tableau idea: we encode the contents of memory cells (and the current instruction) at each time step and constrain consecutive time steps to follow the program.}\label{fig:cookLevinTableau}
    \end{center}
    Then we write constraints that enforce:
    (i) \texttt{Initiate}: the time-$0$ registers contain the input encoding,
    (ii) \texttt{Update}: each step follows one legal transition of the \realRAM instruction set, and
    (iii) \texttt{Accept}: at some time step the computation is in an accepting state.
    The resulting \etr-formula is satisfiable if and only if there exists a witness $w\in\R^*$ such that $A(I,w)=1$.
\end{proof}

\paragraph{Why a Machine Model?}
The above theorem is interesting for several reasons.
First, it is a useful tool when we want to show \ER-membership for a concrete problem.
Researchers who are very skilled at manipulating logical formulas might find it less helpful: they already have a good intuition for what can and cannot be expressed by \etr-formulas.
For readers less familiar with this logic-based viewpoint, however, reasoning in terms of a \realRAM verifier can be much more accessible.

Second, it explains why the \realRAM is a natural extension of the \wordRAM when moving from \NP to \ER.
In this sense, the \realRAM is not an arbitrary choice of model, but an answer to the question: what is the ``right'' machine model whose witnesses are real-valued?

Finally, when studying the power of \ER it is often convenient (and sometimes necessary) to formulate conjectures and theorems in terms of a machine model.
For example, this lets us talk about oracle access to \ER and compare it with oracle access to \NP.

\subsection{Comparison to known complexity classes.}
It is known that $\NP \subseteq \ER$ and that $\ER \subseteq \PSPACE$.
The first inclusion is easy, and we give one possible short proof here.
The second inclusion is highly non-trivial and goes back to Canny~\cite{Canny1988RobotMotionPlanning, Canny1988PSPACEAlgebraicGeometric}.
We show here the first inclusion.

\begin{lemma}
  $\NP \subseteq \ER$.
\end{lemma}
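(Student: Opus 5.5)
The plan is to go through the machine characterization of \ER in \Cref{thm:Machine-Model}, since this makes the inclusion almost immediate. Let $P \in \NP$. Then there is a polynomial-time \wordRAM algorithm $A$ such that $I \in P$ if and only if there is $w \in \{0,1\}^*$ with $A(I,w)=1$, and without loss of generality $|w| \le p(|I|)$ for some polynomial $p$. We want a polynomial-time \realRAM algorithm $A'$ such that $I \in P$ if and only if there is $w' \in \R^*$ with $A'(I,w')=1$.

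The construction of $A'$ has two stages: first convert the real witness to a Boolean one, then run $A$. Given a real vector $w' = (y_1,\dots,y_m)$, $A'$ first defines Boolean bits $b_i$ by the sign test: $b_i = 1$ if $y_i > 0$ and $b_i = 0$ otherwise. This uses only comparisons, so it respects the restriction that the \realRAM has no access to the bit representation of its real registers. Then $A'$ writes $b_1,\dots,b_m$ into word registers and runs $A(I, b_1\cdots b_m)$ verbatim. This is allowed because the \realRAM contains the \wordRAM as a sub-machine. The running time is $O(m)$ plus the running time of $A$.

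Correctness is checked in both directions. If $w \in \{0,1\}^*$ is a witness for $A$, take $w'=w$ viewed as reals in $\{0,1\}$; the sign tests recover $w$ exactly, so $A'(I,w')=1$. Conversely, if $A'(I,w')=1$, then the extracted string $b$ satisfies $A(I,b)=1$, so $I \in P$.

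The only delicate point is the running time, since $w'$ may be arbitrarily long. I would handle this by letting $A'$ read only the first $p(|I|)$ entries of $w'$ and ignore the rest. Alternatively, one can note that polynomial time is measured in $|I|$, so $A'$ never touches more than polynomially many witness entries. With this in place, \Cref{thm:Machine-Model} gives $P \in \ER$.

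An alternative direct route would be to reduce \satisfiability to \etr by encoding each Boolean variable with the constraint $x_i(x_i-1)=0$ and each clause as a disjunction of the atoms $x_i=1$ or $x_i=0$. This is also short, but the machine-model argument above is the one I would present.
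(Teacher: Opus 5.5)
Your proof is correct, but its main route differs from the paper's main proof. The paper works directly with the logic definition of \ER. It reduces \textsc{3-Coloring} to \etr by imposing $x_v(x_v-1)(x_v-2)=0$ for each vertex and $x_u\neq x_v$ for each edge, then invokes \NP-completeness of \textsc{3-Coloring}. That is essentially the \satisfiability alternative you sketch at the end.

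Your machine-model argument instead matches the short ``second proof'' the paper mentions afterwards: $\{0,1\}^*\subseteq\R^*$, and the \realRAM simulates the \wordRAM verifier. There is one small difference. The paper adds a check that the witness is binary and rejects otherwise, whereas you round each entry with a sign test. Both work; rounding means every real witness yields some candidate bit string, so no equality test on the witness is needed. You also handle the witness-length issue explicitly by truncating to $p(|I|)$ entries, which the paper's remark glosses over.

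The trade-off between the two routes is what each rests on. Yours is shorter and covers all of \NP at once without choosing a complete problem. However, it concludes $P\in\ER$ via the hard direction of \Cref{thm:Machine-Model}: turning a polynomial-time \realRAM verifier into an \etr formula, a real analogue of Cook--Levin that the paper only sketches. The reduction route needs only the classical Cook--Levin theorem (through \NP-hardness of \textsc{3-Coloring}) and the definition of \ER as the class of problems reducing to \etr. It also shows concretely how \etr formulas encode discrete constraints, which is why the paper leads with it.
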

\begin{proof}
  We reduce \textsc{3-Coloring} to \etr.
  Let $G=(V,E)$ be a graph with $V=\{1,\ldots,n\}$.
  We construct an existential formula over the reals that is satisfiable if and only if $G$ is $3$-colorable.

  Introduce one real variable $x_v$ for every vertex $v\in V$.
  We intend that $x_v$ encodes the color of $v$.
  We encode the colors using the three numbers $0$, $1$, and $2$.
  Concretely, we enforce
  \[
    p(x_v) := (x_v-0)(x_v-1)(x_v-2) = 0.
  \]
  Now we need to enforce that adjacent vertices receive different roots.
  For $x,y$ that represent adjacent vertices, we add the constraint $x \neq y$.
  This completes the reduction from \textsc{3-Coloring} to \etr.
  Since every problem in \NP reduces to \textsc{3-Coloring}, every problem in \NP also reduces to \etr and is therefore contained in \ER.
\end{proof}

The proof we just gave is also interesting, as it shows how \etr-formulas can encode different problems.
We can give a second proof by noting that $\{0,1\}^* \subseteq \R^*$ and that a \realRAM algorithm can simulate every \wordRAM algorithm.
Thus, if an algorithm $A$ shows that some problem $P$ is in \NP, we can use the same algorithm to show that $P$ is in \ER, with the extra check that the witness $w$ is binary.

\subsection{Exercises}
These exercises are meant to review and deepen the content of this section.
 \begin{enumerate}[noitemsep]
        \item 
        Write a logical formula $\varphi$ such that the solution space 
        $S = \{x\in\R^2 : \varphi(x)\}$ forms this geometric shape.
        
        \begin{center}
            \includegraphics{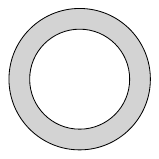}    
        \end{center}

        \item Write a formula $\varphi$ such that $\varphi(x)$ is true if and only if $x$ is an integer between $1$ and $n$.
        (Can you write $\varphi$ with $O(\log(n))$ symbols?)

        \item  Let $U_1,\ldots,U_n$ be the variables and $x$ be an integer between $1$ and $n$.
        Write a formula $\varphi$ such that $\varphi(x,U_1,\ldots,U_n,z)$ is true if and only if $z = U_x$. 
        Note that it is forbidden to write $U_x$ in a formula. This exercise captures the main idea for simulating indirect addressing on the real RAM using logical formulas, which is the most difficult part of \Cref{thm:Machine-Model}.

        \item Show that geometric packing is in \ER. 
        In geometric packing, we are given $n$ polygonal pieces and a polygonal container in the plane. 
        We are allowed to translate and rotate the pieces such that no two pieces overlap and all of the pieces are contained in the container.
        Give one proof using the logic-based definition of \ER and one proof using the machine-model definition of \ER.

        \item Show that the art gallery problem is in \ER. 
        In the art gallery problem, we are given a polygon and a number $k$ and we are asked to find $k$ guards (points) inside the polygon such that every point inside the polygon is seen by at least one of the guards.
        (A guard $g$ sees a point $p$ if and only if the line segment $gp$ is contained in the polygon.)
        Give one proof using the machine-model definition of \ER.
        If you are allowed to use universal (``for all'' $\forall$) quantifiers, how could you encode solvability of the art gallery problem?
        (Warning: It is possible to show that the art gallery problem is in \ER using an \etr formula. It took me several days to find one and the proof spans 5 pages. This was part of the original motivation for the machine model definition of \ER.)
        \item Show (using the logical definition of \ER) that $\textsc{PosSLP} \in \ER$.
        \item In $\textsc{Euclidean Travelling Salesperson Problem}$, we are given $n$ points in the plane and a number $t\in \mathbb{Q}$.
        We are asked to find a route that visits all $n$ points and is shorter than $t$.
        \begin{enumerate}[noitemsep]
          \item Show that the problem lies in \ER.
          \item It is unknown whether the $\textsc{Euclidean Travelling Salesperson Problem}$ is in \NP. What could be the reason?
          Feel free to speculate. A one-line answer suffices.
          \item It is unknown whether the $\textsc{Euclidean Travelling Salesperson Problem}$ is \ER-hard. 
          What might be the reasons? Feel free to speculate. A One-line answer suffices.
        \end{enumerate}
    \end{enumerate}

\paragraph{Open Pool Exam Questions.}
These questions capture the learning goals of the section.
 \begin{enumerate}[noitemsep]
        \item Give a definition of the algorithmic problem \etr and the complexity class \ER based on this definition.
        \item Give the machine model description of \ER and a sketch of a proof that it is equivalent with the logic based definition.
        \item Show that $\NP \subseteq \ER$.
    \end{enumerate}

\newpage
\section{\ordertype and \stretchability}
Numerical instability was an issue that computational geometry researchers noticed early on. 
There was a desire to find algorithms that can solve geometric problems but are in nature combinatorial.
Interestingly many geometric problems can be solved efficiently solely by knowing the order type of a point set.
So what is the order type of a point set.

\begin{center}
    \includegraphics[]{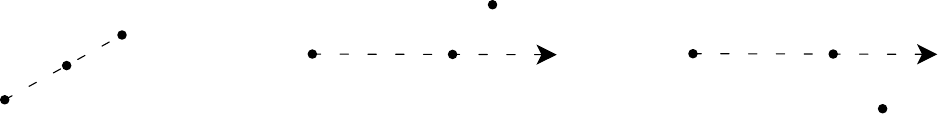}
\end{center}
It specifies, for each ordered triple of points, whether the third point lies on, to the left of, or to the right of the line through the first two points.
Equivalently, for each triple of points it specifies whether the points are collinear or oriented clockwise or counterclockwise.
For example, it is easy to see that the points on the convex hull of a point set are completely determined by the order type.
Thus, once we can extract the order type of a point set, we can use it to compute the convex hull without ever looking at the coordinates again.
Another motivation to study order types is the desire to prove more structural properties of point sets, analogous to how we know many structural properties of graphs.
For example, if we accept that order types contain all essential information about a point set, then for each $n$ there exist only finitely many order types (and hence only finitely many point configurations up to order type).

Out of mathematical curiosity Ringel asked the following question~\cite{Ringel1956} in 1956: ``Are there two different point sets $P,Q$ in the plane such that we cannot continuously move one to the other without changing the order type?''
The surprising answer is that there are indeed such point sets, but it is difficult to give a concrete example or a simple direct proof.
The currently smallest known such point set has 13 points and was found in 2012~\cite{Tsukamoto2012Disconnected}.

To answer Ringel's question Mnëv~\cite{M85} and Shor~\cite{S91} studied the realization space of \ordertype{}s in full generality, see also~\cite{RichterGebert1995}.
Specifically, they formulated it in the equivalent language of \stretchability{}.

\begin{center}
    ``Mnëv proved an even stronger theorem. When translated into complexity theory terms, his theorem implies that determining the stretchability of pseudoline arrangements is equivalent to the existential theory of the reals.'' Shor 1991~\cite{S91}.
\end{center}
Interestingly this statement was done in 1991, around 20 years before the complexity class \ER was named.
The reduction is generally attributed to Mnëv, but it was Shor who made the proof more accessible to a general audience and who also saw the computer science perspective of Mnëvs work.
For an even more streamlined and very readable modern exposition, see Matou{\v{s}}ek’s lecture notes~\cite{Matousek2014ETR}.

The \textsc{Order Type Problem} asks whether, for a given order type, there exists a point set with that order type.
Although this problem might look a bit strange at first, it is closely intertwined with computational geometry and the existential theory of the reals.
The \textsc{Order Type Problem} was not only the first problem shown to be \ER-complete; it also serves as a stepping stone for many other \ER-hardness proofs.
Furthermore, point sets in the plane are among the most fundamental objects in computational geometry and have had a major influence on the field.

\subsection{Partial Order Types}
To give an idea of the flavor of the proof, we focus on partial order types.
A \emph{partial order type} is like an order type, except that we specify the orientation only for some triples.
We define the \partialOrderTypeProblem as follows.
We are given a partial order type, and we ask whether there exists a point set in the plane that agrees with it on the specified triples.

\begin{theorem}
  \label{thm:Order-type-ER}
  The \partialOrderTypeProblem is \ER-complete.
\end{theorem}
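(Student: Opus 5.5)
Membership and hardness are proved separately. For membership in \ER, I use the machine model of \Cref{thm:Machine-Model}. The witness is a list of $2n$ real coordinates, one pair $(x_i,y_i)$ for each point $p_i$. For every specified triple $(i,j,k)$ the verifier computes the orientation determinant
\[
\det\begin{pmatrix} 1 & x_i & y_i\\ 1 & x_j & y_j \\ 1 & x_k & y_k\end{pmatrix}.
\]
This needs only $+,-,\times$. The verifier then checks that the sign is $>0$, $<0$ or $=0$, as prescribed. There are at most $O(n^3)$ triples, so the running time is polynomial. (Equally, one can write down the conjunction of these polynomial sign constraints as an \etr-formula.)

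For hardness, I reduce from \etr. I first bring an \etr instance into a normal form that is easy to draw with points and lines. Using auxiliary variables and standard tricks, I rewrite the formula as a conjunction of atomic constraints of three kinds: $x=1$, $x+y=z$ and $x\cdot y=z$. Inequalities are handled as follows: $x\ge 0$ becomes $\exists s: x=s^2$, and $x>0$ becomes $\exists s: xs=1$. Disjunctions are handled by $p=0\vee q=0 \Leftrightarrow pq=0$, and negation by $p\neq 0 \Leftrightarrow \exists s: ps=1$. Polynomials are flattened into chains of binary operations. This rewriting is routine.

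Next comes the geometric encoding, which is the heart of the proof and follows von Staudt's ``algebra of throws''. I fix a base line $\ell$, which I mean to act as a number line, and place on it points $0$ and $1$, together with a point at infinity. Since we work in the affine plane, I simulate the point at infinity by a projective choice: the partial order type is only preserved up to projective transformations, and I exploit this. Each variable is a point on $\ell$. Collinearity is expressed by orientation $0$ triples, and the ``lies on line'' statements used in the classical ruler constructions for addition and multiplication become collinearity triples. For addition $x+y=z$ I use the parallelogram construction with auxiliary points off $\ell$. For multiplication $x\cdot y=z$ I use the similar-triangles construction. Each gadget has only constantly many points and triples. I then check two things about every configuration realizing the gadget's partial order type: the intersection points forced by the collinearities exist, and the coordinate of $z$ relative to $0,1,\infty$ equals $x+y$ or $xy$. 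This uses the invariance of the cross ratio under projective maps, so the result does not depend on coordinates.

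I expect this step to be the main obstacle. Ordinary von Staudt constructions need parallel lines, that is, a line at infinity, and in the affine plane order type triples cannot directly say ``parallel''. My plan is to work projectively: add an explicit line $\ell_\infty$ with points on it, read all values as cross ratios with respect to $(0,1,\infty)$ on $\ell$, and observe that a suitable projective transformation sends every realization to one whose coordinates solve the system. I also have to make sure that no unintended coincidences or degenerate positions are forced. Since we only need a partial order type, I simply leave all other triples unspecified. This is exactly why the partial version is much easier than full \stretchability, where the normal form of Shor with its careful ordering is needed. Finally, I argue the converse direction: from a real solution of the normal-form system, placing the points by the constructions yields a configuration that agrees with all specified triples. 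The reduction clearly runs in polynomial time.
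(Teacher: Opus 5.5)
Your route is the paper's: normalize to a conjunction of $x=1$, $x+y=z$, $x\cdot y=z$ (the paper's \textsc{ETR-AM}), put the variables on the line through $\mathbf{0}$ and $\mathbf{1}$, use von Staudt's gadgets, and replace parallelism by concurrence on an auxiliary line $\ell_\infty$ that a projective map sends to infinity. Reading values as cross ratios with respect to $(\mathbf{0},\mathbf{1},\infty)$ is a clean, signed version of the paper's ratio. Your determinant-based membership argument is correct; the paper does not spell membership out. Two steps, however, fail as written.

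First, $x>0 \Leftrightarrow \exists s\colon xs=1$ is false, since $x=s=-1$ satisfies $xs=1$. Your reduction would therefore map the unsatisfiable formula $\exists x\colon x>0 \land x+1=0$ to a satisfiable system. Use $\exists s\colon xs^2=1$, as the paper does with $p\cdot x^2=1$. The constraint $xs=1$ only expresses $x\neq 0$, which is what you correctly use for negations.

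Second, you prescribe only collinear triples and leave all other triples unspecified. Then every instance your reduction outputs is realizable, for example by putting all points on one line. So the direction from a realization back to a solution breaks down. The von Staudt incidences determine $z$ only in nondegenerate position. For instance, if the free auxiliary point of an addition gadget lies on $\ell_\infty$, the other constructed points collapse onto it and the last incidence holds for every $z$.

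You must also prescribe nonzero orientations. Examples are $(\mathbf{0},\mathbf{1},a)$ positive for each free auxiliary point $a$, and $(v,p,q)$ positive for every point $v$ that should avoid $\ell_\infty$, where $p,q$ are two fixed points of $\ell_\infty$. This also corrects your remark that the partial order type is preserved by projective maps. Only the zero/nonzero pattern is preserved. Signs survive, up to a global reflection, only for points lying on one side of the line sent to infinity.

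For soundness the zero/nonzero pattern suffices. For completeness, two further things are needed:
\begin{enumerate}
\item You may prescribe only signs that do not depend on the solution. For example, you cannot prescribe the side of $\ell$ of the paper's point $\mathbf{d}$ on $\mathbf{0}\mathbf{c}$ in the multiplication gadget; it lies on the side of $\mathbf{c}$ iff $y>0$.
\item You must choose the projective map so that all finite points built from the solution lie on one side of the image line $\ell_\infty$.
\end{enumerate}
The paper leaves this implicit in ``easily enforced using order type information''. Since you explicitly restrict to collinearities, it is a genuine gap in your write-up.
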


The proof goes in two steps.
We first reduce \etr to the more convenient problem \textsc{ETR-AM}.This part is purely on the symbolic level of manipulating formulas in a smart way.
Thereafter, we reduce to \partialOrderTypeProblem.
The second part of the reduction is geometric in nature.

\paragraph{ETR-AM.}
\textsc{ETR-AM} is simply a conjunction of constraints of the form $x=1$, $x+y=z$ or $x\cdot y=z$.
The letters A and M stand for addition and multiplication.
Note that if we did not have a constraint of the form $x=1$, we could set all variables to zero and have a valid solution.

\begin{lemma}
\textsc{ETR-AM} is \ER-complete.
\end{lemma}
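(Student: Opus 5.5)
Membership is the easy direction: an \textsc{ETR-AM} instance is already an \etr-formula, namely a conjunction of polynomial equations $x-1=0$, $x+y-z=0$, $xy-z=0$. Alternatively, a \realRAM verifier takes the values of the variables as witness $w$ and checks each constraint with one arithmetic operation and one comparison. By \Cref{thm:Machine-Model}, this places \textsc{ETR-AM} in \ER. The work is in hardness, where I would reduce an arbitrary \etr-formula $\Phi$ to an equisatisfiable \textsc{ETR-AM} instance in polynomial time, in a sequence of purely symbolic rewriting steps.

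\textbf{Step 1: remove the Boolean structure and the strict inequalities.}
First I push negations inward, so that only atoms $p\,R\,0$ with $R\in\{<,\le,=\}$ remain, combined by $\land$ and $\lor$. Then I turn every atom into an equation using fresh variables.
\begin{itemize}
\item $p\le 0$ becomes $p+s^2=0$.
\item $p<0$ becomes $p\cdot s^2+1=0$.
\item A disjunction $p=0\lor q=0$ becomes $p\cdot q=0$.
\item A conjunction $p=0\land q=0$ becomes $p^2+q^2=0$.
\end{itemize}
To avoid blowup, I do not expand these products. Instead, I name each subformula's polynomial by a fresh variable and combine the names, for example $z_{\varphi\lor\psi}=z_\varphi z_\psi$. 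This yields one polynomial equation $P=0$ whose size is linear in $|\Phi|$, given as an arithmetic circuit rather than in standard form.

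\textbf{Step 2: flatten to the three allowed constraint types.}
I introduce a variable $u$ with constraint $u=1$. Each integer coefficient $c$ is built from $u$ by doubling and adding according to its binary expansion, which takes $O(\log|c|)$ addition constraints.
\begin{itemize}
\item The constant $0$ is a variable $o$ with $o+o=o$.
\item Subtraction $a-b=c$ is written as $c+b=a$.
\item Each monomial $x_1^{e_1}\cdots$ is computed by a chain of multiplication constraints $x\cdot y=z$. The exponents are written out explicitly in standard form, so the number of multiplications stays linear in the input size.
\item Every node of the circuit from Step 1 gets its own variable and one constraint of the form $x+y=z$ or $x\cdot y=z$.
\end{itemize}
Finally, I force the output variable to equal $o$.

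\textbf{Correctness and the main obstacle.}
Correctness follows node by node: the new variables are uniquely determined by the old ones, except for the slack variables $s$, which exist exactly when the corresponding inequality holds. The step needing most care is Step 1, the elimination of $\lor$ and of strict inequalities without exponential blowup. This is why I keep everything as a circuit with named intermediate values rather than multiplying out, and why I check that $p\cdot s^2=-1$ is solvable exactly when $p<0$. Size bookkeeping is then routine: each step is linear, or $O(\log)$ per coefficient, so the reduction runs in polynomial time.
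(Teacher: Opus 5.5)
Your proof is correct, and it differs from the paper's in the step that matters for polynomial size: how the Boolean structure is eliminated.

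The paper pushes negations down and then \emph{distributes} disjunctions until the formula is a conjunction of clauses of atoms. After the slack-variable substitutions, each clause becomes one product equation. The outer conjunction is simply kept as a list of separate constraints, so $\land$ never has to be encoded arithmetically. You instead keep the formula tree intact, encode $\lor$ by a product and $\land$ by a sum of squares, and name each subformula's value by a fresh variable. The output is then a single equation given as a circuit.

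Your route guarantees linear size. Distributing $\lor$ over $\land$ can blow up exponentially, for example $(a_1\land b_1)\lor\cdots\lor(a_n\land b_n)$ has $2^n$ clauses in CNF, and the paper's sketch passes over this point. The paper's route, on the other hand, is shorter and needs no sum-of-squares trick.

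Everything else matches the paper up to sign conventions. Your slack identities are the paper's $p>0\iff\exists x\colon px^2=1$ and $p\ge 0\iff\exists x\colon p=x^2$ with $p$ replaced by $-p$. Coefficients are built from $1$ by doubling in both, and you force zero by $o+o=o$ where the paper uses $c+1=1$.

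Two small details to add. First, say explicitly that $\lnot(p=0)$ becomes $p<0\lor -p<0$, or directly $\exists s\colon p\cdot s=1$. Second, if exponents in standard form are binary-encoded, build powers by repeated squaring rather than a plain chain of multiplications. A plain chain would be exponential in the bit length of the exponent.
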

\begin{proof}
We first note membership and then prove hardness.
For membership, note that \textsc{ETR-AM} is in \ER because it is a syntactic restriction of \etr.

To show hardness, let $\Phi$ be an arbitrary \etr-instance.
We construct, in polynomial time, an equivalent instance of \textsc{ETR-AM} by a sequence of standard normal-form transformations.

Push negations down to atomic constraints (e.g., $\lnot(p>0)$ becomes $p\le 0$), and distribute disjunctions until they occur only at the level of polynomial equalities/inequalities.
Next, remove (strict) inequalities using a fresh variable; for example,
\[
 p>0 \quad\Longleftrightarrow\quad \exists x\colon p\cdot x^2 = 1.
\]
and similarly 
\[p\geq 0  \quad\Longleftrightarrow\quad \exists x \colon p =x^2.\]
Disjunctions of equalities can then be removed by multiplication; for instance,
\[
(p=0)\ \lor\ (q=0) \quad\Longleftrightarrow\quad p\cdot q = 0.
\]
After this step, the formula is purely existential and consists of a conjunction of polynomial equations.

Next we remove coefficients.
We introduce auxiliary variables that build the required constants from $1$ using additions and multiplications.
For example, writing $2=b$ via $b=1+1$ and $10=e$ via repeated doubling/addition.
For example $4 = c = b+ b$, $8 = d = c + c $, and $10 = e = d + b$.
This allows us to replace an equation such as $10x^2=0$ by an equivalent system that uses only constants $1$ and addition.

 At last, we arithmetize each polynomial into $\{x+y=z,\; x\cdot y=z\}$ constraints.
For each product or sum appearing in a polynomial, introduce a fresh variable for its value and add the corresponding constraint.
To express a final constraint of the form $c=0$ using only $x=1$ and additions, replace it by $c+1=1$.
For instance, $x^2y+z=0$ can be encoded as
\[
\exists a,b,c\colon\ a=x\cdot x\ \land\ b=a\cdot y\ \land\ c=b+z\ \land\ c+1=1.
\]

The resulting formula is a conjunction of constraints of the form $x=1$, $x+y=z$, and $x\cdot y=z$, i.e., an instance of \textsc{ETR-AM}.
This reduction is polynomial-time and preserves satisfiability, proving \ER-hardness.
\end{proof}

\paragraph{Geometric Reduction.}
We are now ready to describe the first part of the geometric reduction.
We start with two points, denoted ${\bf 0}$ and ${\bf 1}$, and think of them as representing the values zero and one. 
Then, for each variable of the \textsc{ETR-AM} instance, we place a point on the line defined by ${\bf 0}$ and ${\bf 1}$.
We can easily enforce this using the partial order type information that we are allowed to specify.
The ratio $a= \frac{|{\bf x}-{\bf 0}|}{|{\bf 1}-{\bf 0}|}$ gives the value of the variable corresponding to the point ${\bf x}$.
Equivalently, there is a linear transformation that maps ${\bf 0}$ to the origin $(0,0)$ and ${\bf 1}$ to the point $(1,0)$.
This maps ${\bf x}$ to the point $(a,0)$, where $a$ is the value that ${\bf x}$ represents.

Then, for each addition or multiplication constraint, we build the corresponding gadget.
We replace every point corresponding to a variable set to the value $1$ by the point ${\bf 1}$.
Note that we do not need to worry about how two distinct gadgets interact, since we do not need to specify a full order type, but only a partial order type.
To describe those addition and multiplication gadgets, it is useful to first assume that we are also allowed to enforce that lines are parallel.
The constructions are described in the figure below.
Note that all lines in the figure that are parallel are actually enforced to be parallel.
\begin{center}
    \begin{minipage}{\linewidth}
      \centering
      \includegraphics{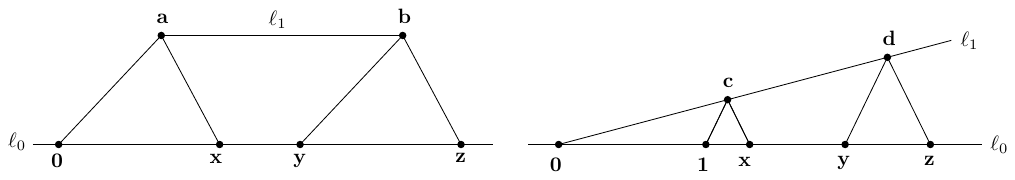}
      \captionof{figure}{Addition and multiplication gadgets with parallel lines.}\label{fig:Staudt-Construction}
    \end{minipage}
\end{center}
We first describe the addition constraint on the left.
We will use the convention that a variable $x$ is represented by the point $\bf x$, simply denoted by the same bold letter.
The lines ${\bf 0} {\bf a}$ and ${\bf y}{\bf b}$ are parallel; similarly, the lines ${\bf a}{\bf x}$ and ${\bf b} {\bf z}$ are parallel.
Finally, the lines ${\bf a}{\bf b}$ and ${\bf 0}{\bf 1}$ are parallel.
It follows that the triangles formed by ${\bf 0},{\bf x},{\bf a}$ and ${\bf y},{\bf z},{\bf b}$ are actually identical.
Thus, $\|{\bf 0} - {\bf x}\| = \|{\bf y} - {\bf z}\|$, which implies that $x + y = z$ under this interpretation.

Now we explain multiplication.
The gadget is shown on the right of \Cref{fig:Staudt-Construction}.
We identify several pairs of similar triangles.
Recall that similar triangles are identical up to scaling.
Specifically, we use that two triangles are similar if their corresponding sides are parallel.

\noindent
The triangles ${\bf 0}{\bf 1}{\bf c}$ and ${\bf 0}{\bf y}{\bf d}$ are similar. So,
\begin{equation}\label{eqn:1}
\frac{|{\bf c} - {\bf 1}|}{|{\bf 1} - {\bf 0}|}
=
\frac{|{\bf d} - {\bf y}|}{|{\bf y} - {\bf 0}|}.
\end{equation}

\noindent
The triangles ${\bf 0}{\bf x}{\bf c}$ and ${\bf 0}{\bf z}{\bf d}$ are similar. So,
\[
\frac{|{\bf c} - {\bf x}|}{|{\bf x} - {\bf 0}|}
=
\frac{|{\bf d} - {\bf z}|}{|{\bf z} - {\bf 0}|}.
\]
Triangles ${\bf 1}{\bf x}{\bf c}$ and ${\bf y}{\bf z}{\bf d}$ are also similar, so
\[
\frac{|{\bf c} - {\bf x}|}{|{\bf c} - {\bf 1}|}
=
\frac{|{\bf d} - {\bf z}|}{|{\bf d} - {\bf y}|}.
\]
This implies that
\begin{equation}\label{eqn:2}
\frac{|{\bf c} - {\bf 1}|}{|{\bf x} - {\bf 0}|}
=
\frac{|{\bf d} - {\bf y}|}{|{\bf z} - {\bf 0}|}.
\end{equation}
Combining Equation~\ref{eqn:1} and \ref{eqn:2} gives
\[\frac{|{\bf x} - {\bf 0}|}{|{\bf 1} - {\bf 0}|} 
= \frac{|{\bf z} - {\bf 0}|}{|{\bf y} - {\bf 0}|}.
\]
This is equivalent to $x/1 = z/y$, and thus $x y = z$, as claimed.

\smallskip

The next step is to modify this reduction so that we no longer require parallel edges. This can be done using the projective plane.

\paragraph{Projective Plane.}
One intuitive way to explain projective transformations is by looking at the visual effect they have on drawings.
\begin{center}
    \includegraphics[page = 1]{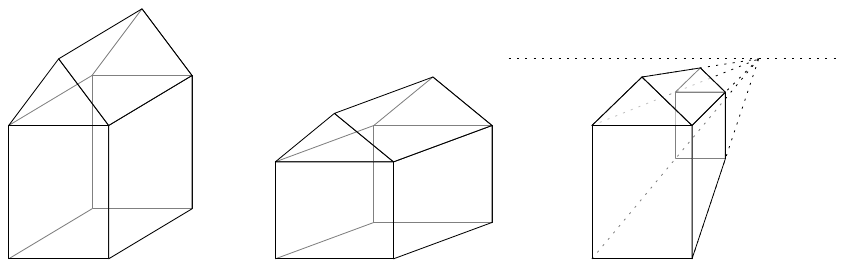}
    \captionof{figure}{On the left, we have a stylized image of a simple three-dimensional house.
    In the middle, we have rescaled the image, and on the right, we applied a projective transformation.
    Note that all edges that are parallel in reality are also drawn parallel.
    Now, if we were to look at the house with our own eyes (and, somewhat strangely, even see through walls), we would see parallel lines meet at the horizon.}
\end{center}
One might wonder how we can model those transformations mathematically; the answer is the projective plane.
The projective plane is like the Euclidean plane, but with a line at infinity added, as if it were any other line.
Now, linear transformations in this projective plane are exactly what we think of as projective transformations.
Crucially, they map lines to lines and preserve incidence and collinearity: if a point lies on a line before the transformation, then its image lies on the image of the line, and if three points are collinear before, then their images are collinear after.
And if those transformations keep the line at infinity at infinity, then they are just ordinary linear transformations.
Specifically, we can transform any line $\ell$ in the plane to the line at infinity.
This has the effect that if two lines $i,j$ meet at a point on $\ell$, then after the transformation $i$ and $j$ will be parallel.
For the following, this is all we need to know about projective lines.
We summarize this in the following lemma.

\begin{lemma}[Projective Transformations]
  A projective transformation of the plane maps lines to lines and preserves incidence and collinearity.
  In particular, if two lines $i$ and $j$ intersect at a point $p$ and $p$ is mapped to a point at infinity, then the images of $i$ and $j$ are parallel.
\end{lemma}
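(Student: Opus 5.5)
We work in homogeneous coordinates and check the two claims of the Projective Transformations lemma directly. A point $(x,y)$ of the Euclidean plane is identified with the class of nonzero vectors $\lambda(x,y,1)$ in $\R^3$, $\lambda\neq 0$. More generally, points of the projective plane are one-dimensional subspaces of $\R^3$, and the points at infinity are the classes $[u:v:0]$. A line is a two-dimensional subspace, that is, the set of classes $[X:Y:Z]$ with $aX+bY+cZ=0$ for some $(a,b,c)\neq 0$. The line at infinity is the case $Z=0$. A projective transformation is the map induced by an invertible $3\times 3$ matrix $M$, sending $[v]\mapsto[Mv]$.

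\textbf{Lines, incidence, collinearity.} Since $M$ is an invertible linear map of $\R^3$, it sends two-dimensional subspaces to two-dimensional subspaces. So lines go to lines. Concretely, the line with coefficient vector $\ell$ is sent to the line with coefficient vector $M^{-T}\ell$, because $(M^{-T}\ell)^T(Mv)=\ell^Tv$. This identity also gives incidence: $p\in\ell$ holds if and only if $Mp\in M^{-T}\ell$. Collinearity is the same statement, since three points are collinear exactly when they lie on a common line. One can also phrase it as $\det(p,q,r)=0$, and this is preserved because $\det(Mp,Mq,Mr)=\det M\det(p,q,r)$.

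\textbf{Parallelism.} Let $i$ and $j$ be distinct lines meeting in the point $p$, and suppose $Mp=[u:v:0]$ lies at infinity. By incidence preservation, the images $Mi$ and $Mj$ are distinct lines through $[u:v:0]$. Write an image line as $aX+bY+cZ=0$. It contains $[u:v:0]$ exactly when $au+bv=0$, so $(a,b)$ is proportional to $(-v,u)$. Neither image can be the line at infinity, since that line contains only one of the two images. Restricted to the affine part $Z=1$, each image is therefore a genuine line $ax+by+c=0$ whose normal direction is proportional to $(-v,u)$. Two distinct such lines are parallel.

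\textbf{Existence.} For the use in the reduction, we also need that any line $\ell$ can be sent to infinity. Take a basis $v_1,v_2$ of the subspace representing $\ell$ and extend it by $v_3$ to a basis of $\R^3$. Let $M$ be the matrix sending $v_1\mapsto e_1$, $v_2\mapsto e_2$, $v_3\mapsto e_3$; then $M\ell$ is the line $Z=0$. Finally, if $M$ keeps $Z=0$ invariant, its bottom row is $(0,0,*)$. Normalizing that entry to $1$, $M$ acts on the affine plane as $x\mapsto Ax+t$, which is an ordinary affine map.

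The only step needing any care is choosing the formal model, namely the definition of the projective plane and of its lines. Once that is fixed, every claim is linear algebra. A minor point is that an affine image can be the line at infinity, and that case is ruled out as above.
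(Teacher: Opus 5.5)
Your setup is a faithful formalization of the paper's informal description: projective plane $=$ Euclidean plane plus a line at infinity, projective maps $=$ linear maps of $\R^3$. Several parts are correct as written:
\begin{itemize}
\item lines go to lines via $M^{-T}$, with incidence and collinearity preserved (also via the determinant);
\item your construction of an $M$ sending a given line to $\{Z=0\}$ is valid;
\item your observation that maps preserving $\{Z=0\}$ are affine, which is the accurate form of the paper's ``ordinary linear transformations''.
\end{itemize}

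The gap is in the parallelism step, at the sentence ``Neither image can be the line at infinity, since that line contains only one of the two images.'' This is not a valid argument, and the claim is false without an extra hypothesis. Distinctness of $Mi$ and $Mj$ only shows that \emph{at most one} of them equals $\{Z=0\}$. Here is a concrete counterexample:
\begin{itemize}
\item take $i=\{Y=0\}$ and $j=\{X=0\}$, so $p=[0:0:1]$ is the origin;
\item take $M\colon [X:Y:Z]\mapsto[X:Z:Y]$;
\item then $Mp=[0:1:0]$ lies at infinity, but $Mi$ is the line at infinity, i.e.\ $(a,b)=(0,0)$ in your notation, and $Mj$ is the $y$-axis.
\end{itemize}
So the images are not two parallel affine lines. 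Your closing remark that this case is ``ruled out as above'' does not hold, since nothing earlier in your argument excludes it.

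The fix is to add the hypothesis that neither $i$ nor $j$ is the line $M^{-1}(\{Z=0\})$ that gets sent to infinity. The paper's statement tacitly assumes this. Since $M$ acts bijectively on lines, this hypothesis is exactly what guarantees $(a,b)\neq(0,0)$ for both image lines, and your argument then goes through unchanged. In the application the hypothesis holds automatically. There, $M$ is chosen with $M\ell_\infty=\{Z=0\}$, and the lines meant to become parallel meet $\ell_\infty$ in a single point, so they differ from $\ell_\infty$ itself.
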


\paragraph{Geometric Reduction Continues.}
We are now ready to descirbe the geometric reduction without the assumption that we can construct lines to be parallel. The idea is tha we create a line $\ell_\infty$ into the Euclidean Plane and treat it as if it were at infinity.
When we apply later an projective transformation to move $\ell_\infty$ to infinity, we end up with the construction as described above.

The following image shows how the addition gadget is constructed.
Note that this configuration can be easily enforced using order type information alone.

\begin{center}
    \includegraphics[page = 2]{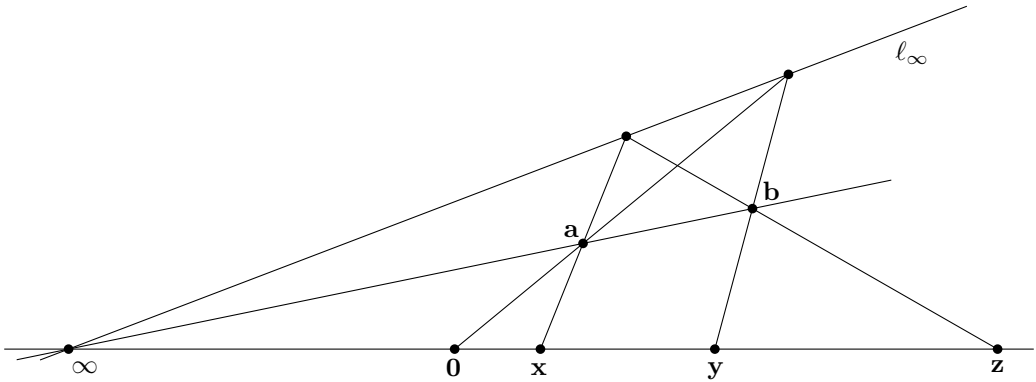}
    \captionof{figure}{Addition gadget without parallel lines.}\label{fig:Non-Parallel-Addition}
\end{center}

The next image shows how the multiplication gadget is constructed.

\begin{center}
    \includegraphics[page = 3]{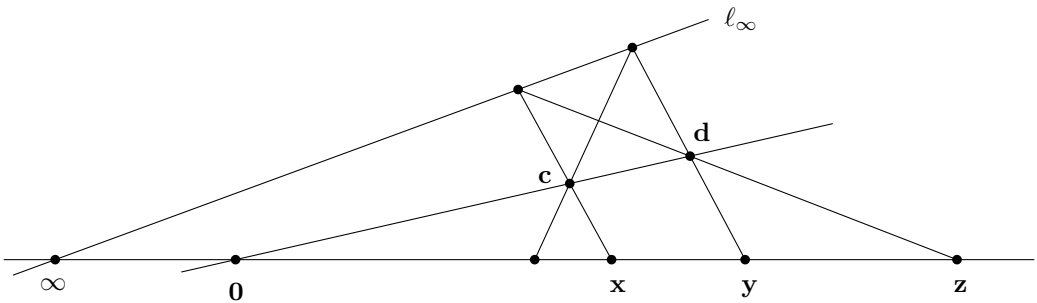}
    \captionof{figure}{Multiplication gadget without parallel lines.}\label{fig:non-parallel-multiplication}
\end{center}

It is now easy to check that if we move the line $\ell_\infty$ to infinity using a projective transformation that the lines that are meant to be parallel in the original addition and multiplication gadgets will be parallel indeed.

\paragraph{Summary.}
Let’s recall the main steps of the proof of \Cref{thm:Order-type-ER}.
First, we reduce from \etr to \textsc{ETR-AM}.
Then we encode each variable as a point on a line, which we can enforce using order types.
Using the two special points $\mathbf{0}$ and $\mathbf{1}$, we can interpret the placement of each point as the value of the corresponding variable.
Assuming that we can enforce lines to be parallel, we then construct gadgets that enforce addition and multiplication.
Finally, we give some basic information on projective transformations and explain how to avoid parallel lines.
We do so by introducing a special line $\ell_\infty$; whenever we want two lines to be parallel, this is equivalent to requiring that they meet at a point on $\ell_\infty$.
If we project $\ell_\infty$ to the line at infinity in the projective plane, the variable values then take the intended values.

\paragraph{Full Simple Order Type.}
The proof we presented here is particularly simple, since we do not have to deal with two obstacles.
First, for most triples of points we do not know their order type, and we also do not have to specify it; gadgets can intersect one another in unforeseen ways.
If we want to show \ER-hardness of full order-type realizability, we need to carefully separate all gadgets and copy and paste variables so that no two gadgets intersect in unintended ways.
The second obstacle is to get rid of collinearities.
Our reduction relies heavily on collinear points.
Interestingly, if all points are constructed in a certain order, it is possible to replace each collinearity by a small constant-size construction that enforces the existence of a collinear point that is not part of the construction.
We skipped these two complications to make the main ideas of the proof more accessible.

\subsection{\stretchability}

In this section, we explain the close connection between \emph{order type realizability} and \emph{stretchability of pseudoline arrangements}.

\paragraph{Pseudoline Arrangements.}
A \emph{pseudoline} is a simple curve in the plane that is unbounded in both directions (topologically, it behaves like a straight line).
A \emph{pseudoline arrangement} is a collection of pseudolines such that any two pseudolines intersect in exactly one point, and they cross there.
Two arrangements are \emph{combinatorially equivalent} if, for every pseudoline, the order in which it meets all other pseudolines is the same in both arrangements.
Algorithmically, we need to agree on an encoding of pseudoline arrangements, which is typically done as follows.
We are given the initial order on the left, from top to bottom, and then a left-to-right specification of the order in which the pseudolines cross.
\stretchability asks whether a given pseudoline arrangement is combinatorially equivalent to an arrangement of straight lines (that is, whether it can be \emph{stretched}).

\begin{center}
    \includegraphics{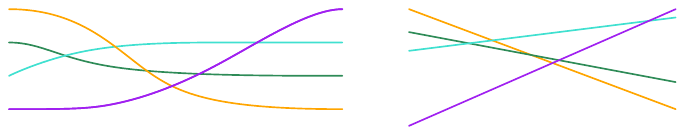}
    \captionof{figure}{A pseudoline arrangement (left) and a combinatorially equivalent straight-line arrangement (right). The \stretchability problem asks whether a given pseudoline arrangement can be stretched to straight lines while preserving the crossing order.}
    \label{fig:stretchability}
\end{center}

We will discuss the following theorem.
\begin{theorem}
  \stretchability is \ER-complete.
\end{theorem}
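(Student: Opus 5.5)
We prove the two directions separately: membership via the machine-model characterization of \ER (\Cref{thm:Machine-Model}), and hardness by reduction from the \OrderTypeProblem using projective duality.

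\textbf{Membership.} We use \Cref{thm:Machine-Model}. The witness $w$ consists of two real numbers $(a_i,b_i)$ for each pseudoline, describing the line $y=a_ix+b_i$. We may assume no line is vertical, since vertical lines can be removed by a small rotation without changing the combinatorial type.

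The verifier runs on the \realRAM and proceeds as follows.
\begin{enumerate}[noitemsep]
  \item It checks that all slopes $a_i$ are distinct, so that every pair of lines crosses exactly once.
  \item It computes all intersection points. The $x$-coordinate of the crossing of lines $i$ and $j$ is $x_{ij}=(b_j-b_i)/(a_i-a_j)$; division is allowed (or can be simulated, as shown earlier).
  \item It sorts, for each line, the crossings along that line by their $x$-coordinates.
  \item It compares the initial top-to-bottom order at $x\to-\infty$ (determined by the slopes, in decreasing order) with the given initial order.
  \item It compares the crossing sequences with the given encoding.
\end{enumerate}
All of this uses only arithmetic and comparisons, in polynomial time. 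The pseudoline encoding allows ties only when it specifies simultaneous crossings; in the simple case, we additionally check that all $x_{ij}$ on a common line are distinct.

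\textbf{Hardness.} We reduce from the \OrderTypeProblem; we assume its \ER-hardness for full (simple) order types, as discussed after \Cref{thm:Order-type-ER}. The tool is point--line duality: the map $p=(p_x,p_y)\mapsto p^*: y=p_xx-p_y$ sends points to non-vertical lines. A point $r$ lies above or below the line through $p$ and $q$ exactly when the corresponding sign appears in the dual arrangement. Concretely, if points are sorted by $x$-coordinate, the orientation of a triple $p<q<r$ is determined by the order in which $q^*$ crosses $p^*$ and $r^*$ along $q^*$. Hence an order type together with the left-to-right order of the points determines a combinatorial line arrangement, and conversely.

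Given an abstract order type $\chi$, we do the following.
\begin{enumerate}[noitemsep]
  \item Check that $\chi$ satisfies the local axioms of an abstract order type (chirotope axioms on $5$-element subsets). This is combinatorial and can be done in polynomial time. By Folkman--Lawrence and Goodman--Pollack, such a $\chi$ then corresponds to a pseudoline arrangement, which we construct explicitly from the orientations, e.g.\ via a sweep that determines the crossing order along each pseudoline by sorting with comparisons given by $\chi$.
  \item Handle the choice of sorting order. We can fix a point on the convex hull, which is read off from $\chi$. The standard trick is to use the projective version, i.e.\ arrangements in the projective plane, or equivalently to add one extra pseudoline playing the role of the line at infinity. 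In this way the order type is encoded without guessing the $x$-order.
\end{enumerate}
Realizability of $\chi$ by points is then equivalent to stretchability of the constructed arrangement: one direction is dualizing a realization, the other is dualizing back a stretching. Collinearities in $\chi$ correspond to non-simple crossings, where three pseudolines meet in a point. Since the earlier reduction for order types needed collinearities unless the extra construction removing them is applied, we rely on that construction, so that simple stretchability is also hard.

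\textbf{Main obstacle.} I expect the main obstacle to be building the pseudoline arrangement from $\chi$ in polynomial time and proving that it is combinatorially equivalent, with the correct initial order and crossing sequences, to the dual of every realization. This requires handling the choice of $x$-order and the projective-versus-affine subtlety carefully. I would address it first by working with a fixed extreme point sent to infinity. A projective transformation, which preserves incidences by the Projective Transformations lemma, then makes the remaining points sortable by angle around it, and that angular order is determined by $\chi$ alone.
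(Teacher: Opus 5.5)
Your plan follows the route the paper indicates. The paper only sketches it: it reduces from the \OrderTypeProblem via point--line duality, stops at \Cref{lem:cyclic-order}, defers to \cite{BokowskiMockStreinu2001}, and does not write out membership at all. Your \realRAM verifier is fine apart from a slip: at $x\to-\infty$ the top-to-bottom order is by \emph{increasing} slope. Normalizing by an extreme point $h$ is exactly the step the paper omits, but as you justify it there is a gap.

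\textbf{The gap.} Preservation of incidences does not give you that the transformed copy of $P\setminus\{h\}$ still has order type $\chi$. Let $c^{T}x+d=0$ be the line through $h$ that is sent to infinity. Then the orientation of $p,q,r$ is multiplied by a global sign times the sign of $(c^{T}p+d)(c^{T}q+d)(c^{T}r+d)$. If this line separates the remaining points, some orientations flip and others do not. Moreover, the resulting $x$-order is only a cyclic shift of the angular order around $h$. So for an arbitrary projective map sending $h$ to infinity, the dual of the transformed realization need not be the arrangement you built from $\chi$, and the forward direction breaks.

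\textbf{The fix.} Send a \emph{supporting} line at $h$ to infinity, with all other points strictly on one side; it exists because $h$ is a vertex of the convex hull. Then the factor is constant, and reflections fix both the global sign and the direction of the $x$-axis. The $x$-order is then exactly the linear order $p\prec p'$ iff $\chi(h,p,p')=+$. The output is the wiring diagram of the $n-1$ lines $p^*$, $p\neq h$. Here $h^*$ is the line at infinity and is not added as an extra wire.

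\textbf{The converse.} This direction needs a step you left out. Dualizing a stretching yields points realizing $\chi$ on $P\setminus\{h\}$ whose $x$-order is $\prec$, read off from the initial order. You must then reinsert $h$, e.g.\ far enough above or below these points. Then the sign of $(h,p,p')$ depends only on the $x$-order of $p,p'$, and so agrees with $\chi$ by the definition of $\prec$.

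With both pieces in place you no longer need Folkman--Lawrence. If $\prec$ is not a linear order, or the local sequences do not assemble into a wiring diagram, then $\chi$ is not realizable. In that case you can output a fixed non-stretchable arrangement.
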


The core idea of the proof is to reduce from the \textsc{Order Type Problem}.
Bokowski, Mock, and Streinu~\cite{BokowskiMockStreinu2001} provided a coherent proof in the language of oriented matroids.
 
We will not show this reduction here; instead, we explain the close relationship between point configurations in the plane and straight-line arrangements, known as point--line duality.

\paragraph{Point--Line Duality.}
A standard tool to relate point configurations and line arrangements is point--line duality.
One convenient choice is the map $\chi$ from the plane $\R^2$ to the set of all lines in the plane $\mathcal{L}$, defined by
\[
(a,b) \longleftrightarrow \ell_{(a,b)} : y = ax-b,
\qquad
y = mx+c \longleftrightarrow (m,-c).
\]
We often denote the line $\chi(p)$ by $p^*$ and the point $\chi^{-1}(\ell)$ by $\ell^*$.
There are free graphical calculators (Like \href{www.desmos.com}{Desmos}) with which one can dynamically visualize points and their corresponding dual line in the same image.
To illustrate point--line duality, we inlude a small example of three points and one line.

\begin{center}
  \includegraphics[page=1]{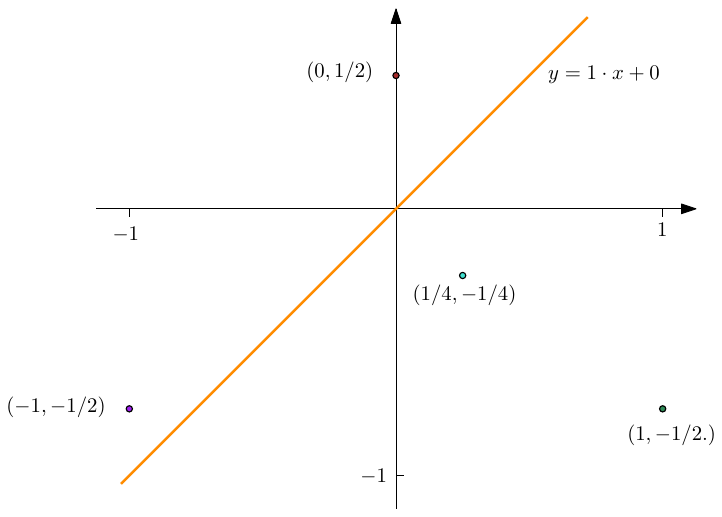}
  \captionof{figure}{Point--line configuration in the primal plane.}
  \label{fig:point-line-duality-primal}
\end{center}
We now apply $\chi$ to translate the geometric situation from points to lines, obtaining the corresponding configuration in the dual plane.
\begin{center}
  \includegraphics[page=2]{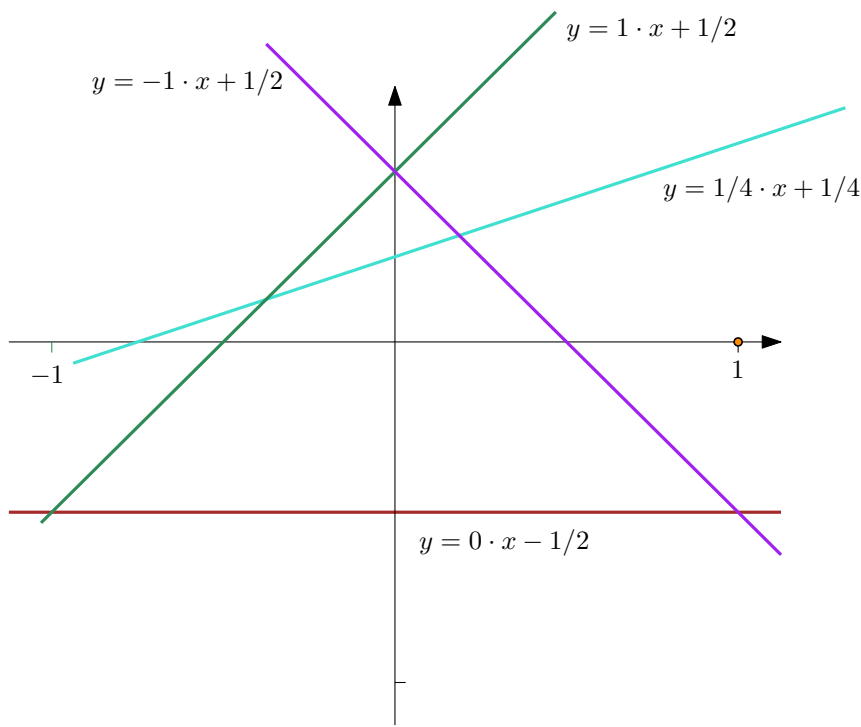}
  \captionof{figure}{The same configuration after applying point--line duality.}
  \label{fig:point-line-duality-dual}
\end{center}
This duality has some key properties, summarized in the following lemma.

\begin{lemma}
  \label{lem:Basic-Duality}
  Let $p,q \in \R^2$ be a point and $\ell \in \L$ be a line in the plane. 
  Then the following holds.
\begin{itemize}
  \item \textbf{verticality:} $p$ and $q$ have the same $x$-coordinate if and only if $p^*$ and $q^*$ are parallel.
  \item \textbf{incidence:} point $p$ lies on line $\ell$ if and only if the dual line $p^*$ passes through the dual point~$\ell^*$.
  \item \textbf{above/below:} $p$ lies above $\ell$ if and only if $\ell^*$ lies above $p^*$.
\end{itemize}
\end{lemma}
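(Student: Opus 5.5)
The plan is to verify all three items by direct computation from the definition of $\chi$. Write $p=(a,b)$, so $p^*$ is the line $y=ax-b$, and write a non-vertical line $\ell$ as $y=mx+c$, so $\ell^*=(m,-c)$. We restrict to non-vertical lines throughout, since $\chi$ is only defined on those. Each claim should then reduce to a one-line algebraic identity. The only care needed is to make sure the sign conventions in $\chi$ are used consistently, since this is where a slip would reverse an inequality.

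\textbf{Verticality.} For $q=(a',b')$, the lines $p^*\colon y=ax-b$ and $q^*\colon y=a'x-b'$ have slopes $a$ and $a'$. Two non-vertical lines are parallel (or coincide) exactly when their slopes agree. So $p^*\parallel q^*$ holds if and only if $a=a'$, that is, if and only if $p$ and $q$ share their $x$-coordinate.

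\textbf{Incidence.} The point $p=(a,b)$ lies on $\ell$ if and only if $b=ma+c$. The dual point $\ell^*=(m,-c)$ lies on $p^*\colon y=ax-b$ if and only if $-c=am-b$, which rearranges to $b=am+c$. Both conditions are the same equation, so the first rewriting is the whole proof.

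\textbf{Above/below.} The point $p$ lies above $\ell$ if and only if $b>ma+c$. The dual point $\ell^*$ lies above $p^*$ if and only if $-c>am-b$. Adding $b+c$ to both sides turns this into $b>am+c$, the same inequality.

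The main obstacle, mild as it is, is checking that the minus signs in $\chi$ (the $-b$ in $p^*$ and the $-c$ in $\ell^*$) are exactly what make the inequality keep its direction in the third item. I would also remark that the bijection $\chi$ is involutive up to these signs, which is what makes the statements symmetric.
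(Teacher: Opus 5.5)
Your proposal is correct and follows the paper's proof essentially line for line. You write $p=(a,b)$ and $\ell\colon y=mx+c$, read off verticality from the equality of the slopes $a,a'$ of the dual lines, and check that $b=ma+c$ (resp.\ $b>ma+c$) is just a rearrangement of $-c=am-b$ (resp.\ $-c>am-b$); your explicit restriction to non-vertical lines is a small clarification that the paper leaves implicit.
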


\begin{proof}
  The first property follows from the definition of the slope of the dual line.

  Let $p = (a,b)$ and $\ell: y= mx+n$.
  Then $p$ lies on $\ell$ if and only if
  \[b = ma + n\]
  which is equivalent to 
  \[-n = a m -b.\]
  And this is equivalent to $\ell^* = (m,-n)$ lying on the line $p^* : y = ax -b$.
  This proves the incidence statement.

  Similarly, $p$ lies above $\ell$ if and only if
  \[b > ma + n\]
  which is equivalent to 
  \[-n > a m -b.\]
  And this is equivalent to $\ell^* = (m,-n)$ lying above the line $p^* : y = ax -b$.
\end{proof}

As a consequence, combinatorial information about a point set can be translated into combinatorial information about an arrangement of lines, and vice versa.
However, this is not as easy as one might initially hope, because point-line duality preserves above--below relations, whereas order types are about being to the left or to the right of a line.
If we reverse the orientation of a line, above and below stay the same, but a point that was to the left is suddenly to the right.
Also, if we rotate a point set, many above--below relationships change, but the order type is preserved.
Still, many combinatorial properties are preserved under duality.

To make this more concrete, we need the notion of the \textit{cyclic order} of a point set $P$ with respect to a point $p$.
Let $\ell$ be a vertical line that we rotate around $p$ in clockwise direction; the order in which we visit the points in $P$ is the cyclic order of $P$ around $p$.

\begin{lemma}
  \label{lem:cyclic-order}
  Let $p_1,\ldots,p_n \in P$ be the points in cyclic order around $p$.
  The line $p^*$ intersects the lines $p_1^*,\ldots,p_n^*$ from left to right in this order.
  
  Similarly, let $\ell_1,\ldots,\ell_n$ be the lines intersecting a line $\ell$ in this order from left to right.
  Then the dual points $\ell_1^*,\ldots,\ell_n^*$ have the same cyclic order around the dual point $\ell^*$.
\end{lemma}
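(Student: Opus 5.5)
The plan is to reduce everything to one explicit computation: the $x$-coordinate where $p^*$ meets $p_i^*$ equals the slope of the line through $p$ and $p_i$. Once that identity is in hand, both parts of \Cref{lem:cyclic-order} become statements about sorting by slope.

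\textbf{The key identity.} Write $p=(a,b)$ and $p_i=(a_i,b_i)$. I will assume $a_i\neq a$ for all $i$. The verticality property of \Cref{lem:Basic-Duality} shows that otherwise $p^*$ and $p_i^*$ are parallel and there is no intersection to order, so this is the implicit general-position assumption. The dual lines are $p^*: y=ax-b$ and $p_i^*: y=a_ix-b_i$. Setting $ax-b=a_ix-b_i$ gives
\[
x_i=\frac{b-b_i}{a-a_i}=\frac{b_i-b}{a_i-a},
\]
which is exactly the slope $s_i$ of the line $pp_i$.

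\textbf{From slopes to the rotating line.} Next I will relate slopes to the rotating line. A line through $p$ rotated by half a turn, starting from vertical, meets each $p_i$ exactly once. Each point is met at the moment when the line's slope equals $s_i$. During the sweep the slope changes monotonically from $\pm\infty$ through $0$ to $\mp\infty$, so the cyclic order of $P$ around $p$ is the order of the $p_i$ sorted by slope. Since $x_i=s_i$, this is the same as sorting the intersection points on $p^*$ by $x$-coordinate, which proves the first statement.

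\textbf{The orientation convention.} This is the one delicate point, and I expect it to be the main obstacle, although it is bookkeeping rather than mathematics. A clockwise rotation decreases the slope, so read literally it lists the $x_i$ from right to left. I would therefore fix the convention explicitly: either the rotation is counterclockwise, or the order on $p^*$ is read right to left. The statement ``same order up to this fixed reversal'' is what the computation gives. I would say this openly rather than hide it, because it affects nothing in later uses, which only need the order up to reversal.

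\textbf{The second statement.} For the converse direction I will redo the computation in the dual rather than invoke an abstract symmetry; this is safer because $\chi$ is not literally an involution. Let $\ell: y=mx+c$ and $\ell_i: y=m_ix+c_i$ with $m_i\neq m$. These lines meet at
\[
x=\frac{c_i-c}{m-m_i}.
\]
The dual points are $\ell^*=(m,-c)$ and $\ell_i^*=(m_i,-c_i)$, and the slope of $\ell^*\ell_i^*$ is
\[
\frac{-c_i+c}{m_i-m}=\frac{c_i-c}{m-m_i}.
\]
This is again the same number. Hence the left-to-right order of the $\ell_i$ along $\ell$ equals the slope order of the $\ell_i^*$ seen from $\ell^*$. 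By the rotating-line argument above, that slope order is the cyclic order around $\ell^*$, with the same convention as before.
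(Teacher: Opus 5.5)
Your proof is correct and uses the same key identity as the paper's proof: the slope of $\ell^*\ell_i^*$ equals the $x$-coordinate of $\ell\cap\ell_i$. The paper gets this from \Cref{lem:Basic-Duality}, since the dual line of $p_i=\ell\cap\ell_i$ passes through both $\ell^*$ and $\ell_i^*$ and its slope is the $x$-coordinate of $p_i$; you verify it by solving the linear equations, and you also prove the first statement, which the paper skips.

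Your orientation remark is correct and addresses something the paper's proof passes over. Under the clockwise convention the sweep lists the slopes in decreasing order, i.e.\ right to left along $p^*$. Your caution about $\chi$ is unnecessary, though. Since $(\ell^*)^*=\ell$, the second statement is just the first applied to the points $\ell^*,\ell_i^*$, so redoing the computation is optional.
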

\begin{proof}
  We only prove the second statement.
  Let $\ell_1,\ldots,\ell_n$ be the lines intersecting the line $\ell$ as in the statement of the lemma.
  We denote by $p_1,\ldots,p_n$ their intersection points.
  See the figure below for an illustration.
  \begin{center}
  \includegraphics[page=1]{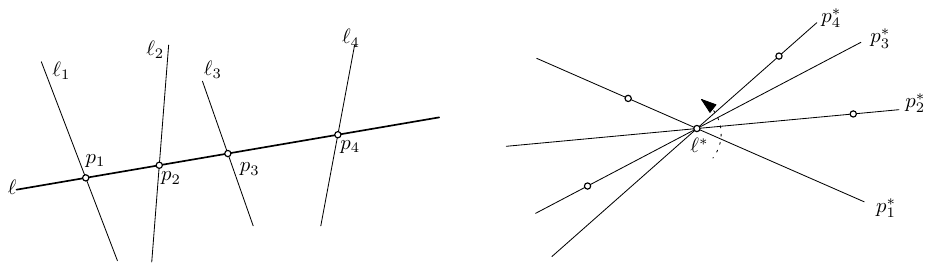}
  \captionof{figure}{Left: The line $\ell$ intersects $\ell_1,\ldots, \ell_n$ from left to right. 
  Right: The cyclic order is the same in the dual.} 
  \label{fig:point-line-duality-dual}
\end{center}
  Clearly, $p_1$ is to the left of $p_2$, and so on, by definition.
  Thus, the line $p_1^*$ has the smallest slope and $p_n^*$ has the largest slope by \Cref{lem:Basic-Duality}.
  Moreover, each dual point $\ell_i^*$ is on the dual line $p_i^*$.
  This shows the lemma.
\end{proof}

\Cref{lem:cyclic-order} gives us a clue on how the reduction from the \OrderTypeProblem to \stretchability. 
As the cyclic sequence determines in which order one pseudoline crosses all other pseudolines. 
Yet it is not enough to construct a complete pseudoline arrangement and then we also need to argue about the equivalence of realizing line arrangement and point sets in the plane.
We omit this here. The central message that we want to bring across is that there is a close connection between points and lines in the plane.
And this connection carries over to the abstract setting although not in a straightforward manner.

\subsection{Geometric Graphs.}
  \stretchability plays a central role in showing that many problems are \ER-complete. Specifically, it has been applied many times to show that recognition of geometric graphs is difficult.
  To understand those techniques, we give two simple examples.
  We first need some definitions.
  Let $W$ be a finite set of objects in the plane.
  We define the intersection graph $G(W)$ of $W$ as follows.
  Every $w\in W$ gives rise to a vertex, and any two vertices are adjacent if the corresponding geometric objects intersect.

\begin{center}
  \includegraphics{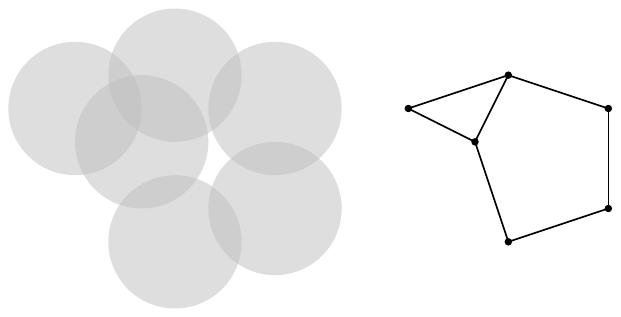}
  \captionof{figure}{Left: a set $W$ of disks; right: the geometric intersection graph $G(W)$.}
\end{center}

\paragraph{Unit Disk Intersection Graphs.}
  We show the following theorem, due to McDiarmid and M\"uller (2013)~\cite{McDiarmidMuller2013IntegerRealizations}.
  They were originally interested in the number of bits that one needs to represent unit disk graphs on a small grid.
  Yet, their techniques readily provide an \ER-hardness reduction.

  \begin{theorem}
    It is \ER-complete to decide whether a graph is a unit disk intersection graph.
  \end{theorem}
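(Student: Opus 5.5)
We prove membership and hardness separately, following the pattern of the earlier proofs in the paper.

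\emph{Membership.} We use the machine-model characterization, \Cref{thm:Machine-Model}. The witness consists of the centers $(x_v,y_v)$ of unit-diameter disks, one per vertex. The \realRAM verifier checks, for each pair $u\neq v$, whether $(x_u-x_v)^2+(y_u-y_v)^2\le 1$ holds exactly when $uv\in E$. This uses only $+,-,\times$ and comparisons, and takes $O(n^2)$ steps. Equivalently, this is directly an \etr formula.

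\emph{Hardness.} We reduce from \stretchability, which was shown \ER-complete above. Given a simple pseudoline arrangement $\mathcal{A}$ of $n$ pseudolines, we build a graph $G_{\mathcal{A}}$ that is a unit disk graph if and only if $\mathcal{A}$ is stretchable.

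\begin{enumerate}
\item \textbf{Lines.} Each pseudoline $\ell$ is replaced by two long paths of unit disks running on either side of it. These two ``rails'' encode the two sides of the line.
\item \textbf{Cells.} Each face of the arrangement receives a representative disk. This disk is adjacent to the rail disks bounding the face from the appropriate side. The adjacencies force the face vertex to lie between the correct rails.
\item \textbf{Crossings.} At each crossing of two pseudolines, the rails are made to cross in the combinatorial order prescribed by $\mathcal{A}$.
\end{enumerate}

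\emph{Stretchable implies unit disk graph.} Take a straight-line realization and scale it up enormously. Then all crossings are far apart and all angles are bounded below. Disks can then be placed along both sides of every line at spacing slightly below $1$, with the required adjacencies and non-adjacencies holding with slack.

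\emph{Unit disk graph implies stretchable.} This is the main obstacle: we must show that any unit disk realization of $G_{\mathcal{A}}$ yields a straight-line arrangement combinatorially equivalent to $\mathcal{A}$. The plan is as follows.
\begin{itemize}
\item Make every rail very long, with length polynomial and at least a large constant times $n$.
\item Add ``rigidifying'' structure, such as thick ladder-like graphs, so that the centers of each rail are forced to stay within bounded distance of a straight line. The key tool is that an induced path in a unit disk graph cannot bend much without creating extra edges. This is a packing argument: too many disks in a small region force cliques or adjacencies that the graph forbids.
\item For each pseudoline, take the line through the first and last centers of its rail pair. This gives a line arrangement.
\item Argue that the crossing order is preserved. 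Two rails can cross only where their disks intersect. The face vertices, together with the non-adjacencies, force each crossing to occur in the prescribed order along each line.
\end{itemize}
The delicate part is quantitative: the deviation from straightness must be small compared to the distances between crossings, so that nearby crossings are not swapped. I would first try making the rails length $\Theta(n^2)$ and the crossings spaced $\Theta(n)$ apart along each line. I would then bound the deviation of each rail by a constant using the packing argument. Finally, I would check that a constant deviation cannot flip the order of crossings separated by $\Theta(n)$ along a line.
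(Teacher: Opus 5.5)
Your membership argument is fine. The hardness part, however, has a genuine gap at exactly the step you flag as the main obstacle, and the plan you sketch cannot close it.

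\textbf{The rigidity tool is false.} With unit-diameter disks, an induced path $c_1c_2c_3$ only needs $|c_1c_2|,|c_2c_3|\le 1<|c_1c_3|$, so it can turn by nearly $120^\circ$ at a vertex. Even a triangulated strip of width $w$, realized as a triangular lattice with spacing about $0.8$, can be bent along a circle of radius $O(w)$ while keeping every adjacency and non-adjacency. Unit disk realizations always have local slack, and packing arguments only yield local, approximate constraints. This slack accumulates along a rail: an arc of radius $O(w)$ and length $\Theta(n^2)$ deviates from its chord by order $n^4/w$. So nothing you describe keeps the rails near straight lines, or prevents them from following the slowly curving pseudolines of a non-stretchable arrangement.

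\textbf{The other direction breaks as well.} Scaling a realization changes neither its angles nor its distance ratios. Yet $G_{\mathcal{A}}$ must fix, from $\mathcal{A}$ alone, the number of rail disks between consecutive crossings and the adjacency pattern at each crossing, and that pattern depends on the crossing angle. Stretchable arrangements can force extreme, even doubly exponential, distance ratios in every realization (e.g.\ via huge cross-ratios). So your polynomial-size graph with evenly spaced crossings need not be a unit disk graph even when $\mathcal{A}$ is stretchable. Making the rails flexible enough to absorb this is precisely what destroys the rigidity you need for soundness.

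\textbf{The missing idea.} The paper does not trace the lines with small disks at all. Each pseudoline gets only two vertices $v_i^+,v_i^-$. In a stretched and squeezed arrangement, these are realized by disks of huge radius that act like the two half-planes of $\ell_i$. The radii are then halved, and one disk of the same radius $R/2$ is centered in each cell. Now a cell vertex is adjacent to $v_i^+$ or to $v_i^-$ according to its side of $\ell_i$, the $v_i^+$ form a clique, and the $v_i^-$ form a clique. This graph records only combinatorial side information. It is therefore computable from $\mathcal{A}$, has $O(n^2)$ vertices, and carries no metric data.

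Conversely, given any realization, let $k_i$ be the perpendicular bisector of the centers of $v_i^+$ and $v_i^-$. A cell vertex adjacent to $v_i^+$ but not to $v_i^-$ has its center within distance $R$ of the first center and farther than $R$ from the second. Hence it lies strictly on the correct side of $k_i$. So $k_1,\dots,k_n$ reproduce every cell of $\mathcal{A}$ with the right sign vector and form an equivalent line arrangement. Straightness comes for free from the bisectors, so no rigidity or quantitative packing argument is needed.
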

  \begin{proof}
    It is easy to show \ER-membership.
    One way is to use the machine model of \ER:
    we first guess the coordinates of all disks and then verify the edges and non-edges by computing the distances between their centers.
    Another way is to build an \etr-formula.
    For each vertex $v \in V$ we introduce two variables $x_v$ and $y_v$ representing the coordinates of the disk center.
    For each edge $uv\in E$ we add the constraint
    $(x_v-x_u)^2 + (y_v-y_u)^2 \leq 2$, and for each non-edge $uv \notin E$ we add the constraint
    $(x_v-x_u)^2 + (y_v-y_u)^2 > 2$. 
    (We assume that unit disk means the radius to be equal to $1$.)

    To show \ER-hardness, we reduce from \stretchability.
    We begin with a high-level overview of the proof.
    For intuition, we first assume that \L is an arrangement of straight lines.
    This allows us to construct unit disks and their intersection graph $G$ in a systematic way.
    We then observe that $G$ is defined purely by the combinatorics of \L, and this forms the core of our construction.
    This mental picture---assuming that \L is stretchable---guides the intuition and clarifies how $G$ is defined.

    Let \L be a pseudoline arrangement. We will construct a graph from it.
    Actually, it makes sense to think of \L as a line arrangement to understand the underlying ideas.
    Given a line arrangement \L on $n$ lines $\ell_1,\ldots, \ell_n$, we can place one points in each cell arbitrarily. 
    We call those points $P$.
    Then, for each line $\ell$, there exist two disks $D^+$ and $D^-$ of equal radius such that $D^+$ contains all points above $\ell$, and $D^-$ contains all points below $\ell$.
    \begin{center}
      \includegraphics{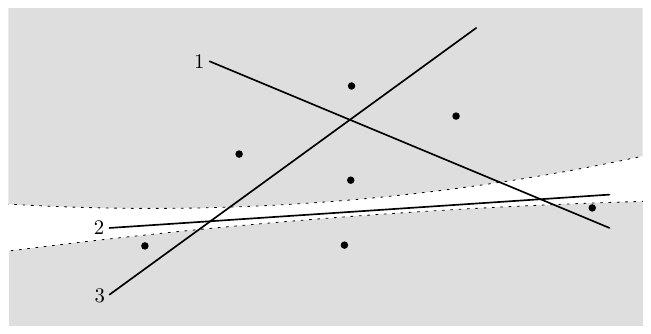}
      \captionof{figure}{Illustration of the large disks. 
      Note that if we blow up a disk, it becomes more and more similar to a half plane.}
    \end{center}
    We do this for every line and obtain a set
    $\mathcal{D}_1$ of $2n$ disks of equal radius $R$.
    (We can choose all radii to be equal as increasing the radius of a disk makes it more similar to a half plane.)
    In the next step, we halve the radius of each disk and keep its center to obtain a second set of disks $\mathcal{D}_2$.
    Next, we place for each point $p\in P$ a disk with radius $R/2$ and center $p$. 
    We define $G$ as the intersection graph of all those disks with radius $R/2$.
    We denote the vertices for the top and bottom disks by $v_i^+$ and $v_i^-$ respectively.
    \begin{center}
      \includegraphics[page = 2]{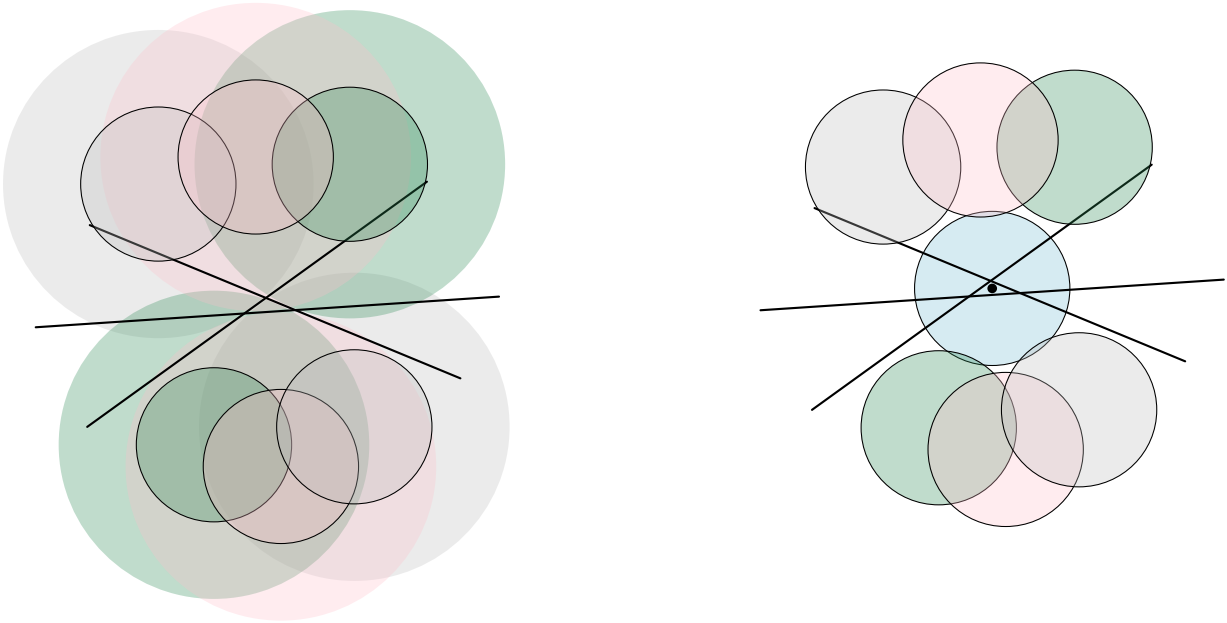}
      \captionof{figure}{Left: We halve the radius of each disk.
      Right: Then we add a disk with radius $R/2$ at each point of $P$. Here, we only indicate one disk in blue.}
      \label{fig:disk-reduction}
    \end{center}
    We will now argue the following points.
    \begin{itemize}
      \item The upper disks form a clique, and the lower disks form a clique.
      \item The graph $G$ can be defined purely from the pseudoline arrangement \L, without knowing if \L can be stretched.
      \item $G$ has at most $O(n^2)$ vertices  and thus the reduction is polynomial.
      \item If \L is stretchable, then $G$ is a unit disk intersection graph.
      \item Any realization of $G$ by unit disks defines a line arrangement equivalent to \L. 
    \end{itemize}

    For the first point, note that we can ``'squeeze'' the line arrangement without changing its combinatorics, as in \Cref{fig:disk-reduction-squeeze}.
    (In \Cref{fig:disk-reduction}, the upper disks do not form a clique to keep it readable.)
    \begin{center}
      \includegraphics[page = 4]{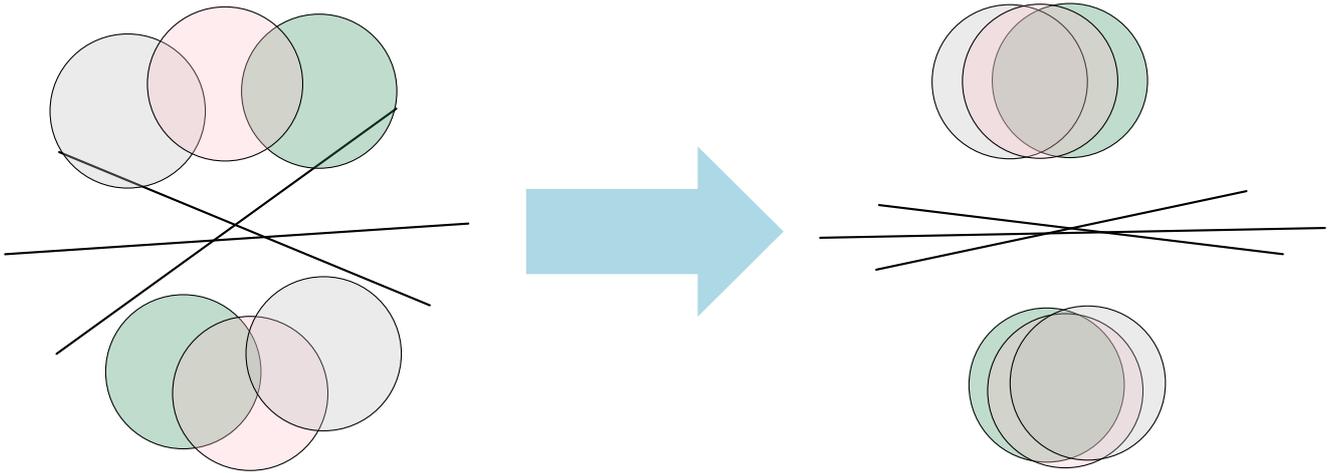}
      \captionof{figure}{When we squeeze the lines, we do not change the combinatorics, but the disks will all start to intersect.}
      \label{fig:disk-reduction-squeeze}
    \end{center}
    
    For the second point, we note that we know for each cell whether it is above or below a given pseudoline from the pseudoline arrangement and that is all we need to construct $G$.
    For the disks that come from each cell, we can compute the intersections merely by knowing whether they lie above or below a certain line.
    If the cell is above, we add edges to the $D^+$ disks; otherwise, we add edges to the $D^-$ disks.

    The third point is simple counting. 
    We construct two vertices per line and one vertex per cell.
    As there are only quadratically many cells, this gives the upper bound.

    The fourth point is clear as we used a stretched arrangement to construct a representation of  $G$ using unit disks.

    For the last point, given a line $\ell_i$, let $D^+_i$ and $D_i^-$ be the corresponding pair of disks, and let $k_i$ denote their bisector. 
    We claim that $k_1,\ldots,k_n$ forms a line arrangement equivalent to \L. 
    To see this, it is sufficient to show that it has combinatorially exactly the same set of cells.
    Let \L be our pseudoline arrangement and let $c$ be a cell in it.
    We denote by $D$ the disk with center $m$ inside $c$, and by $u$ the corresponding vertex in $G$.
    We will argue that $c$ is above $\ell_i$ if and only if $m$ is above $k_i$, for all $i$.
    If $c$ is above $\ell_i$, then $u$ is adjacent to $v^+_i$ and not to $v_i^-$. This implies that $m$ is closer to $D^+_i$ than to $D_i^-$, and hence that $m$ is above the bisector $k_i$.
    The other direction is identical.
    If $c$ is below $\ell_i$, then $u$ is adjacent to $v^-_i$ and not to $v_i^+$. This implies that $m$ is closer to $D^-_i$ than to $D_i^+$, and hence that $m$ is below the bisector $k_i$.

    This finishes the proof.
  \end{proof}

  While this proof is among the most elegant ones, it is a bit atypical; historically, it was not the first such reduction.
  For most interesting sets of geometric objects, we now know that the recognition problem is \ER-complete~\cite{SchaeferCardinalMiltzow2024Compendium}.
  Often these proofs are straightforward, but sometimes they require substantial scaffolding to encode the pseudoline arrangement properly.
  All of them are based on encoding pseudoline arrangements.
  To get a better idea of these scaffolding techniques, we prove another theorem.

\paragraph{Optimal Curve Straightening.}
We briefly motivate the \textsc{Optimal Curve Straightening Problem} and give a clean definition.
Suppose we are given a closed curve $\gamma$ in the plane, possibly with self-intersections.
We can naturally describe $\gamma$ by the $4$-regular plane graph induced by its self-intersection pattern.
In many settings (visualization, compression, or geometric algorithms) we want a simpler curve that \emph{represents} $\gamma$ and uses as few segments as possible.
Specifically, we do not want to change its topological behavior, that is, the plane graph arising from the self-intersection pattern should be the same.
See \Cref{fig:curve-straightening-definition} for an illustration.
\begin{center}
    \includegraphics[page = 1]{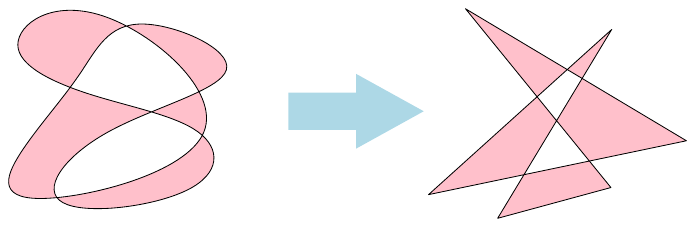}
    \captionof{figure}{A closed curve in the plane and a representation with six segments.}
    \label{fig:curve-straightening-definition}
  \end{center}
In the \textsc{Optimal Curve Straightening Problem}, we are given a closed curve $\gamma$ and an integer $k$ and we are asked if there is a way to represent $\gamma$ with at most $k$ segments.
We show the following theorem due to Erickson~\cite{Erickson2019OptimalCurveStraightening}.

\begin{theorem}
  \label{thm:Curve-Straightening}
  The \textsc{Optimal Curve Straightening Problem} is \ER-complete.
\end{theorem}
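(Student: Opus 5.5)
We prove membership and hardness separately, following the pattern used for unit disk graphs.

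\textbf{Membership.} We use the machine model of \Cref{thm:Machine-Model}. The witness is the list of the $k$ segment endpoints, given as real coordinates $q_1,\ldots,q_k$ with the polygon closing back to $q_1$. The \realRAM verifier does three things.
\begin{itemize}
\item It computes all pairwise intersections of segments. Orientation tests give the predicates, and intersection points are computed by solving $2\times 2$ linear systems, using division (or the pair representation from earlier).
\item It sorts the crossings along each segment and builds the resulting $4$-regular plane graph with its rotation system. Here it must check that no three segments meet at a point and that no segments overlap or touch without crossing.
\item It tests whether this embedded graph is isomorphic to the graph of $\gamma$ as a plane graph, including the order in which the curve traverses it.
\end{itemize}
The last test is purely combinatorial: the curve visits vertices in a sequence, so one fixes a starting edge and direction ($O(n)$ choices) and compares rotation systems. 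Everything runs in polynomial time, so the problem is in \ER.

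\textbf{Hardness.} We reduce from \stretchability. Given a pseudoline arrangement $\L$ on $n$ pseudolines, we build a closed curve $\gamma$ and set $k=n$, or $n$ plus a constant. The idea is that $\gamma$ traverses each pseudoline once in a long stretch, and the stretches are connected near the boundary by short turning pieces. We choose the number of allowed segments so tight that each pseudoline stretch must be a single segment.

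Concretely, we clip $\L$ to a large disk, where all crossings lie inside. Consecutive pseudolines are joined outside the disk by connectors in a zig-zag order: enter along $\ell_1$, exit, turn, and enter along $\ell_2$ in the opposite direction. We add extra scaffolding crossings in the connector region so that no connector can be straightened together with a pseudoline. Then:
\begin{itemize}
\item If $\L$ is stretchable, we take the straight lines, cut them to long segments, and join the endpoints by a constant number of segments per turn. This gives a representation, which fixes $k$.
\item Conversely, in any representation with $k$ segments, a counting argument shows that each pseudoline portion, which carries all $n-1$ crossings with the other pseudolines in the prescribed order, must lie on a single segment. Extending these segments to lines gives a line arrangement combinatorially equivalent to $\L$.
\end{itemize}

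\textbf{Main obstacle.} The hard step is the converse direction: designing the connector scaffolding and the count so that segments cannot be shared between a pseudoline stretch and a connector. We also have to rule out bends inside a stretch that are compensated by savings elsewhere. A simple alternative is to make $\gamma$ contain a pseudoline arrangement and then force lines by the total count.

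My first attempt would be the following. Any segment contains at most one ``straight run'' of the curve. Each turning gadget forces at least some fixed number $c$ of segments, because it contains small self-crossing loops: a loop needs at least $3$ segments to form a crossing triangle. If the budget is exactly $n + c\cdot(\#\text{turns})$, then each pseudoline stretch gets exactly one segment.
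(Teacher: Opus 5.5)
Your membership argument is the paper's (guess the $k$ segments, compute the crossing pattern on a real RAM, compare the plane graphs), and it is fine. Your hardness plan is also the paper's: thread the pseudolines of \L into one closed curve in zig-zag fashion, put curls at the turns, and make the budget so tight that every pseudoline portion must be straight. The gap is in the step you yourself call the main obstacle. You count \emph{segments}, and in that form the count does not work.

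A curl is a triangle. Its two sides that meet at the self-crossing are the segments on which the curve enters and leaves the curl, and these can be the strand's own segment and a connector segment that also serves the next curl. So ``a loop needs $3$ segments'' is not additive over gadgets and strands. If you take the budget $n+c\cdot(\text{number of turns})$ with $c$ derived from that bound, it lies well above the true optimum. The slack then allows bends inside the strands, which is exactly what a non-stretchable \L needs, so the reverse direction fails. If instead you lower the budget to the optimum, your argument no longer shows it is a lower bound. You also never fix $k$ (you mention both ``$n$ plus a constant'' and $n+c\cdot(\text{number of turns})$), and you do not say how the zig-zag closes into a single curve.

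The missing idea is to count \emph{vertices}. The sub-arc of a curl, from its self-crossing back to that crossing, must contain at least two polygon vertices in its interior. With none or one it cannot close into a loop with a proper crossing. Distinct curls are disjoint sub-arcs, so their vertex sets are disjoint. A closed polygon has as many vertices as segments.

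The paper makes this count exact:
\begin{itemize}
\item Assume $n$ odd; otherwise add a pseudoline crossing all others at the far left.
\item Add one extra strand on top that crosses nothing, so the $n+1$ strands zig-zag into a closed curve.
\item Put a curl at each of the $2(n+1)$ strand ends, and set $k=4(n+1)$.
\end{itemize}
Any drawing with at most $4(n+1)$ segments then has all its vertices inside curls and none on a pseudoline portion. So each pseudoline is a single segment, and extending these segments gives an equivalent line arrangement. No additional scaffolding is needed.
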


\subparagraph{\ER-Membership.}
  While it is easy to describe $k$ segments using an \etr-formula it is difficult to encode that the planar graph derived from its self-intersection pattern is the same.
  Erickson wrote a five page proof describing an \etr-formula that does the job. 
  This complicated \ER-membership proof was one of the motivations to find a machine model.
  Here, we only give the \ER-membership argument using the machine model.
  Our witnesses are the coordinates of the $k$ segments, and the verification algorithm computes the intersection pattern and checks that the two planar graphs are identical.

\subparagraph{\ER-Hardness.}
We reduce from \stretchability. 
For this purpose, assume that \L is an arrangement of $n$ pseudolines.
We construct a curve $\gamma$ such that $\gamma$ can be represented with $4(n+1)$ segments if and only if \L is stretchable.

To begin the construction, we assume that $n$ is odd. 
Otherwise, add one more pseudoline that intersects all pseudolines to the far left. This does not affect whether \L is stretchable.

The next step is to take a big rectangle that contains all intersections of the pseudolines.
Then, we add one extra line on top that does not intersect any other pseudoline.
Finally, we stitch the ends together with loops as displayed in \Cref{fig:curve-straightening-reduction}.
We set $k = 4(n+1)$ and this finishes the construction of $\gamma$.

  \begin{center}
    \includegraphics[page = 2]{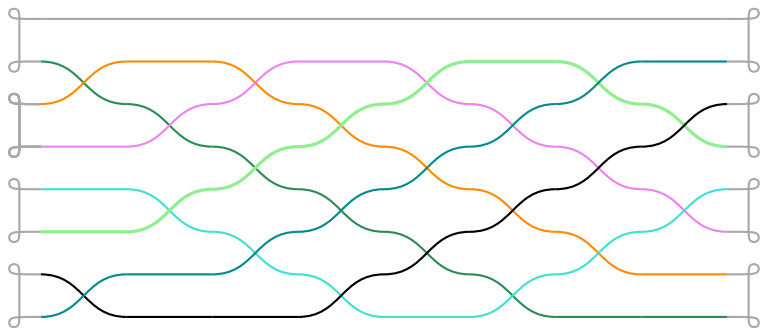}
    \captionof{figure}{Constructing the curve $\gamma$ from the pseudoline arrangement \L.}
    \label{fig:curve-straightening-reduction}
  \end{center}

It is clear that we can construct $\gamma$ in polynomial time.
Next, we have to show that 
\L is stretchable
if and only if 
$\gamma$ can be represented with $4(n+1)$ line segments .
We start with the first implication.
For this, let $L$ be the set of (stretched) straight lines representing \L.
We again find a big rectangle containing all intersections of $L$ and restrict all the lines to the rectangle.
We add a line on top of all other lines, not intersecting any of them above the rectangle.
Finally, we add all the little curls at the left and right ends of each pair of adjacent line segments, as in \Cref{fig:curve-straightening-straight-curve}.

\begin{center}
    \includegraphics[page = 4]{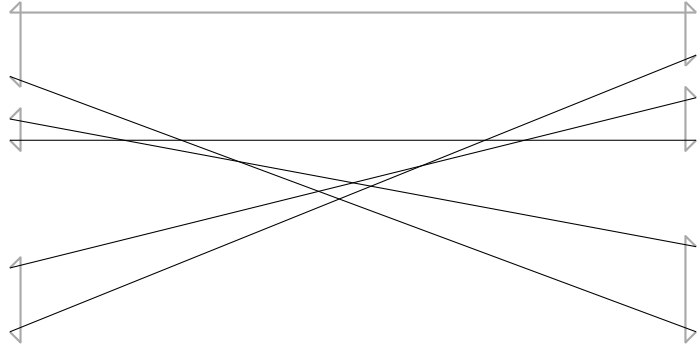}
    \captionof{figure}{Constructing a curve from a stretched arrangement.}
    \label{fig:curve-straightening-straight-curve}
  \end{center}

To prove the reverse direction, let $D$ be a straight line drawing of $\gamma$ with $4(n+1)$ vertices.
  \begin{center}
    \includegraphics[page = 3]{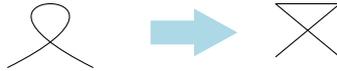}
    \captionof{figure}{Any representation of the curl needs at least $2$ vertices.}
    \label{fig:curve-straightening-curl}
  \end{center}

Note that every little curl at the end needs at least two vertices to be represented. We have $2(n+1)$ of those curls, and thus any polygonal line representing $\gamma$ needs at least $4(n+1)$ vertices, and thus at least that many line segments.
See \Cref{fig:curve-straightening-curl}.
Now, if we consider the parts of the drawing $D$ that does correspond to the original pseudolines, each such pseudoline must be represented by a straight line segment, by a simple counting argument.
(If there would be a vertex on one of the pseudolines, we had not enough vertices left for the curls.)
When we extend all of those line segments ot infinite lines, we receive a line arrangement $L$ equivalent to the original pseudoline arrangement \L.

This finishes the proof of \Cref{thm:Curve-Straightening}.

\subsection{Exercises}
These exercises are meant to review and deepen the content of this section.
\begin{enumerate}[noitemsep]
  \item Let $P$ be a set of $n$ points in the plane.
  \begin{enumerate}[noitemsep]
    \item Find an algorithm that returns a line $\ell$ such that half the points are on the left and half are on the right of the line. You are allowed to use the coordinates of the points.
    \item Let $\ell$ be a line defined by two points. We say $\ell$ is a halving line if at least $n/2$ points are to the left and to the right of $\ell$.
    This time, find a halving line using only the order type of the point set. (Can you do it in linear time?)
    \item Show that the convex hull of a point set is completely determined by its order type.
  \end{enumerate}
  \item Let $p,q$ be two polynomials with integer coefficients.
  Can you express $(p=0 \land q=0)$ as a single equation?
  \item In \textsc{4-Feasibility}, we are given a polynomial of degree four with integer coefficients. We have to decide whether there are real numbers that make this polynomial evaluate to zero.
  Show that \textsc{4-Feasibility} is \ER-complete.
  \item Let $s_1,\ldots,s_n$ be segments in the upper half plane ($H = \{(x,y)\in \R^2 : y \geq 0\}$.), with one endpoint on the $x$-axis.
  We call those \textit{grounded segments} and the corresponding intersection graph a grounded segment graph.
  In the same manner, we can define upward ray graphs as the intersection graphs formed by finite collections of upward rays. 
  A ray is an upward ray, if its direction vectror $v = (x,y)$ has positive $y$-coordinate.
  Show that the upward ray graphs and grounded segment graphs form the same set of graphs.
  (Hint: Find the right projective transformation.)
  \item Let $s= ab$ be a segment from the point $a$ to the point $b$.
  Can you describe the dual of $s$?
  \item When we rotate a point set very slowly, how does the dual line arrangement of the point set change combinatorially?
  (Hint: Consider the left and right most intersection of two lines in the dual.)
  \item Find a graph $G$ that cannot be represented as unit disk intersection graph.
  \item In the proof of \Cref{thm:Curve-Straightening} we assumed that $n$ is odd. What goes wrong if $n$ is even?
\end{enumerate}

\paragraph{Open Pool Exam Questions.}
These questions capture the learning goals of the section.
 \begin{enumerate}[noitemsep]
        \item Give a definition of the algorithmic problem \OrderTypeProblem and the \partialOrderTypeProblem.
        \item Define \textsc{ETR-AM} and show that it is \ER-complete.
        \item Describe how to encode the value of a variable using a point on a line.
        \item Describe the van Staudt construction for addition and multiplication assuming that you can enforce parallelism.
        Show that they indeed enforce addition and multiplication using the interpretation from the previous assignment.
        \item State the key properties of the projective plane that we use.
        \item Describe the van Staudt construction for addition without using parallelism.
        \item Describe point-line duality.
        \item Show the above-below property of point-line duality.
        \item Define the cyclic order of a point with respect to a point set. Describe what the cyclic order corresponds to in the dual.
        \item Define geometric intersection graphs and unit disk graphs in particular.
        \item Show that deciding whether a given graph is a unit disk intersection graph is \ER-complete.
        \item Show that \textsc{Optimal Curve Straightening} problem is \ER-complete.
    \end{enumerate}

\newpage
\section{Inversion and Compactness.}
While \stretchability served as a central problem for proving \ER-hardness, a number of problems required new techniques to show \ER-hardness.
The new central problem is called \etrinv. 
It has two main properties. 
First, the domain is a small compact interval, for example $[1/2,2]$.
Second, instead of multiplication, we only allow inversion constraints of the form $x\cdot y = 1$. 
In this section, we define \etrinv and explain two techniques for proving \ER-hardness. 
We will not combine these two techniques, since doing so is tedious and offers little additional insight.
Thereafter, we discuss the historical context, linked to the art gallery problem, and the impact this had.
Finally, we apply these ideas to show that training neural networks is \ER-complete---via a surprisingly short and elegant proof.

\subsection{\etrinv}

The algorithmic problem \etrinv asks whether there exist real variables $x_1,\ldots,x_n$ in the compact range $[1/2,2]$ such that a set of constraints is satisfied.
Each constraint has the form $x+y=z$ or $x\cdot y = 1$, with $x,y,z\in\{x_1,\ldots,x_n\}$.

\paragraph{Inversion.}
We first show how we can restrict ourselves to inversion constraints.
For that purpose, we define the problem \textsc{UN-INV}, an abbreviation for \emph{unbounded inversion}, referring to the fact that the variables range over all real numbers.
\textsc{UN-INV} is defined exactly as \etrinv, with the small difference that all variables are allowed to take any real value.
Additionally, we need the constraint $x = 1$.
We show the following lemma.

\begin{lemma}
  \textsc{UN-INV} is \ER-complete.
\end{lemma}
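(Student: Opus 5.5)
Membership is immediate: \textsc{UN-INV} is a syntactic restriction of \etr, since $x=1$, $x+y=z$ and $x\cdot y=1$ are polynomial equations. For hardness I would reduce from \textsc{ETR-AM}, which the earlier lemma shows to be \ER-complete. Constraints of the form $x=1$ and $x+y=z$ carry over unchanged. It therefore suffices to express each multiplication constraint $x\cdot y=z$ by a polynomial-size conjunction of constraints of the forms $u=1$, $u+v=w$ and $u\cdot v=1$ over fresh variables, preserving satisfiability.

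I would build small gadgets in stages. \emph{Constants and subtraction:} from the constraint $o=1$, get $0$ via $0+o=o$, and express $w=u-v$ as $v+w=u$. \emph{Squaring:} this is the key step, following the identity from the exercises,
\[
\frac{1}{x}-\frac{1}{x+1}=\frac{1}{x^2+x}.
\]
Introduce $a=x+1$, $b$ with $x\cdot b=1$, $c$ with $a\cdot c=1$, $d=b-c$, $e$ with $d\cdot e=1$, and $s=e-x$. Then $s=x^2$. \emph{Multiplication:} use polarization,
\[
xy=\tfrac12\bigl((x+y)^2-x^2-y^2\bigr).
\]
Concretely, compute $t=(x+y)^2-x^2-y^2$ with three squaring gadgets and subtractions, then enforce $z+z=t$.

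The main obstacle is that inversion constraints are only satisfiable for nonzero arguments. The squaring gadget therefore fails when $x\in\{0,-1\}$, because $b$, $c$ or $e$ cannot exist, so we would lose solutions and break the backward direction. I would first try to fix this by shifting. Instead of squaring $x$ directly, square $x+K$ for a fresh variable $K$, using $(x+K)^2-2Kx-K^2=x^2$. Here $K$ is fresh and unconstrained except that all inversions along the way must succeed. That fails, though: the $2Kx$ term needs multiplication. So the cleaner fix is the following. The original \textsc{ETR-AM} instance is satisfiable iff it is satisfiable after an affine change of the \emph{gadget inputs}, but that again needs products.

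The fallback I would actually commit to is this. Since the squaring identity is a polynomial identity, we only need some shift that avoids the bad set, and there are finitely many bad values. So represent $x^2$ as
\[
\tfrac12\Bigl((x+r)^2+(x-r)^2\Bigr)-r^2
\]
and try several rational $r$; this needs only constants built from $1$. The catch is that for a given solution $x$ one needed choice of $r$ might be bad, which cannot be chosen disjunctively in a conjunction. The cleanest resolution I expect is to let the squaring gadget compute $(x+r)^2$ with $r$ a fresh free variable, together with a separate gadget certifying $r^2$ and $xr$ consistently. If that becomes circular, I would instead first reduce to the case where all \textsc{ETR-AM} variables can be assumed to lie in a range like $(1,\infty)$ by known bounding arguments. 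Then $x\neq0,-1$ automatically and the naive gadget works, since $x^2+x\neq0$ there. Verifying that this degenerate-value issue is handled without losing completeness is the step I expect to take the most care.
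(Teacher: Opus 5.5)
Your route is exactly the paper's: polarization $2xy=(x+y)^2-x^2-y^2$ to pass from multiplication to squaring, then $\frac1x-\frac1{x+1}=\frac1{x(x+1)}$ to pass from squaring to inversion. Your final step $s=e-x$ is correct; the paper's displayed constraint $y=E+x$ has a sign slip. The degeneracy you raise is genuine, and the paper's proof never addresses it. In its gadget, the constraints $A\cdot x=1$ and $C\cdot(x+1)=1$ force every squared quantity, including $x+y$ from the polarization, to avoid $0$ and $-1$. For instance, $o=1,\ x+o=o,\ x\cdot x=z$ is a satisfiable \textsc{ETR-AM} instance (it forces $x=0$), but its image is unsatisfiable. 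So the implication ``\textsc{ETR-AM} instance satisfiable $\Rightarrow$ \textsc{UN-INV} instance satisfiable'' is exactly where both write-ups need more.

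The gap in your proposal is that you never close this. Your committed fallback, reducing to variables in $(1,\infty)$ ``by known bounding arguments'', is not yet an argument. A uniform shift $x'=x+K$ needs a bound on some solution (the Ball Theorem, which the paper only introduces later) and a huge constant $K$ built by repeated squaring. It also turns $xy=z$ into $(x'-K)(y'-K)=z'-K$, whose products $Kx'$, $Ky'$, $x'y'$ must again be built by polarization. That can be made to work, since all squared quantities are then at least $2$, but none of it is in your text.

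The idea you abandoned one sentence earlier already suffices with no machinery. Make $r$ a fresh existential variable and use
\[
x^2=\tfrac12\bigl((x+r)^2+(x-r)^2\bigr)-r^2 ,
\]
with $x-r$ realized by $w+r=x$ and the factor $\tfrac12$ by a constraint $s+s=\cdot$ as in your multiplication step. The cross terms $\pm 2xr$ cancel, so no gadget for $xr$ is needed, and $r^2$ is just a third copy of the inversion gadget applied to $r$. Each copy forces the exact square whenever it is satisfiable, so every solution of the new instance has $y=x^2$. Conversely, for any real $x$, choose $r\notin\{0,-1,x,x+1,-x,-x-1\}$. Then $r$, $x+r$ and $x-r$ all avoid $\{0,-1\}$, and every inversion is defined. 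Your worry that ``$r$ cannot be chosen disjunctively'' applies only to a fixed constant $r$; an existentially quantified $r$ may depend on $x$.
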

\begin{proof}
  We start with \ER-membership. We simply note that \textsc{UN-INV} is a special case of \etr and thus lies in \ER by definition.

  To show \ER-hardness, we reduce from \textsc{ETR-AM}.
  Let $\varphi$ be an instance of \textsc{ETR-AM}. We show how to transform it in polynomial time into an instance of \textsc{UN-INV}.
  We do so in two steps. First, we replace multiplication by squaring ($x^2 = y$) and then we replace squaring by inversion.

  Let $x\cdot y = z$ be a multiplication constraint of $\varphi$.
  First note that
  \[(x+y)^2 - x^2 - y^2 = 2x\cdot y.\]
  To use this insight, we need to define intermediate variables as follows.

  \begin{align*}
    A &= x+y\\
    B &= A^2\\
    C &= x^2\\
    D &= y^2\\
    E &= C + D\\
    B &= F + E\\
    F &= z + z
  \end{align*}

A short calculation shows that $z = x\cdot y$ and that we have used only addition and squaring constraints.
All new variables are existentially quantified.
We perform this replacement for every multiplication constraint.

The next step is to replace every squaring constraint ($x^2 = y$), by an inversion constraint.
To do so, we observe that \[\frac{1}{x} - \frac{1}{x+1} = \frac{1}{x(x+1)}.\]
And thus it follows
\[\frac{1}{\frac{1}{x} - \frac{1}{x+1}} - x = x^2.\]
We use this insight to construct appropriate intermediate variables.

\begin{align*}
    A \cdot x &= 1 \\
    B &= x+1\\
    C \cdot B &= 1\\
    D + C &= A\\
    E\cdot D &= 1\\
    y &= E + x
  \end{align*}
A short calculation shows that this enforces $y = x^2$ and that we have used only inversion and addition constraints.
In the final step, we replace all squaring constraints in this way in linear time.
This finishes the proof.
\end{proof}

Finally, we mention that $x\cdot x = 1$ implies that $x = 1$ or $x = -1$.
Since in \etrinv all variables lie in the range $[1/2,2]$, we have $x=1$, and the constraint $x=1$ is therefore not needed.

\paragraph{Compactification.}
  In many applications, it is easy to encode a variable as a geometric object.
  However, the object necessarily lies in a compact domain, that is, bounded and closed.
  Thus, for a reduction to be correct, the domain of the variable we want to encode must be compact as well.
  We first show how to encode a bounded variable and then, using a different technique, how to encode a compact variable.
  For the purpose of illustration, we define \textsc{Bounded-ETR} in the same way as \etr, with the difference that every variable is bounded to the open interval $(-1,1)$.
  We show the following lemma.
  \begin{lemma}
    \textsc{Bounded-ETR} is \ER-complete.
  \end{lemma}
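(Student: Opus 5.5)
Membership is immediate: a \textsc{Bounded-ETR} instance is an \etr-formula with the extra atomic constraints $-1<x_i<1$ for every variable $x_i$, so it lies in \ER by definition. For hardness, I would reduce from \etr (or, equivalently, \textsc{ETR-AM}) using a bijection between $\R$ and the open interval $(-1,1)$ that can be expressed by polynomial constraints. My first choice is
\[
  x = \frac{t}{1-t^2}, \qquad t\in(-1,1).
\]
This map is a continuous, strictly increasing bijection from $(-1,1)$ onto $\R$: its derivative is $(1+t^2)/(1-t^2)^2>0$, and it tends to $\pm\infty$ as $t\to\pm1$. The alternative $x = t/(1-t^2)^{1/2}$ would need square roots, so I avoid it.

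Given an \etr-formula $\Phi(x_1,\ldots,x_n)$, I introduce for each original variable $x_i$ a bounded variable $t_i$ together with a bounded auxiliary variable $s_i$ playing the role of $x_i$'s reciprocal scale. The obstacle is that $x_i$ itself is unbounded, so it cannot be kept as a variable. Instead I substitute directly. Each atomic constraint $p(x_1,\ldots,x_n)\,R\,0$ becomes
\[
  p\!\left(\frac{t_1}{1-t_1^2},\ldots,\frac{t_n}{1-t_n^2}\right) R\ 0 .
\]
Multiplying through by $\prod_i (1-t_i^2)^{d_i}$, where $d_i$ is the degree of $p$ in $x_i$, clears all denominators. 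Because each factor $1-t_i^2$ is strictly positive on $(-1,1)$, the direction of the relation $R$ is preserved.

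The main obstacle is the size of this expansion. Written out in standard form, the polynomial can blow up exponentially if degrees are large, since the paper assumes polynomials are given in standard form. To avoid this, I would first pass to \textsc{ETR-AM}, whose constraints have degree at most two and involve at most three variables. Then each constraint clears to a polynomial of constant degree in at most three variables, which has constant size. Concretely:
\begin{itemize}
  \item $x=1$ becomes $t = 1-t^2$.
  \item $x+y=z$ becomes $t_x(1-t_y^2)(1-t_z^2)+t_y(1-t_x^2)(1-t_z^2) = t_z(1-t_x^2)(1-t_y^2)$.
  \item $x\cdot y=z$ becomes $t_xt_y(1-t_z^2) = t_z(1-t_x^2)(1-t_y^2)$.
\end{itemize}
The only concern with this step is that the integer coefficients (such as $1$) must also be available. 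In \etr this is automatic, since coefficients are allowed.

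Correctness then follows from the bijection. A solution $(x_i)$ of the \textsc{ETR-AM} instance yields $t_i\in(-1,1)$ via the inverse map, and conversely any bounded solution $(t_i)$ yields real values $x_i$ satisfying the original constraints, because the cleared denominators are nonzero. The resulting formula has linear size, which gives polynomial-time hardness.
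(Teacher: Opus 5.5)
Your proposal is correct and is essentially the paper's proof: reduce from \textsc{ETR-AM}, substitute the rational bijection $t\mapsto t/(1-t^2)$ from $(-1,1)$ onto $\mathbb{R}$, clear the nonvanishing denominators, and conclude correctness from the bijection. The paper's $f(x)=x/((x-1)(x+1))$ is the same map up to sign; your remark that passing to \textsc{ETR-AM} keeps each cleared constraint of constant size is a useful justification the paper leaves implicit, and the auxiliary variables $s_i$ you introduce are never used and can simply be dropped.
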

  \begin{proof}
    \textsc{Bounded-ETR} is a special case of \etr and this implies \ER-membership.

    To show \ER-hardness, we reduce from \textsc{ETR-AM}.
    Consider the following mapping $f : (-1,1) \rightarrow \R$, defined by
    \[f(x) = \frac{x}{(x-1)(x+1)}.\]
    Note that $f$ is a continuous bijection.
    Now, let $\varphi$ be an \textsc{ETR-AM} instance.
    We replace each occurrence of each variable in a constraint by $f(x)$ and then multiply out all denominators.
    For example, $x+y = z$ becomes
    \[\frac{x}{(x-1)(x+1)} + \frac{y}{(y-1)(y+1)} = \frac{z}{(z-1)(z+1)},\]
    which becomes
    \[
      x (y-1)(y+1)(z-1)(z+1)
      + y (x-1)(x+1)(z-1)(z+1)
      = z (x-1)(x+1)(y-1)(y+1).
    \]
    Clearly, this creates an instance of \textsc{Bounded-ETR} $\psi$.
    Now, if $(x_1,\ldots,x_n) \in (-1,1)^n$ is a solution to $\psi$, then
    $(f(x_1),\ldots,f(x_n))$ is a solution to $\varphi$.
    Conversely, if $(x_1,\ldots,x_n) \in \R^n$ is a solution to $\varphi$, then
    $(f^{-1}(x_1),\ldots,f^{-1}(x_n))$ is a solution to $\psi$.
  \end{proof}

  It is interesting that the previous lemma can be shown in such an elementary way.
  While in some applications open bounded domains are fine, this is typically not the case as we will see.
  Compactification needs much more sophisticated tools.
  One way to see this is that we used the existence of a continuous bijection between an open interval and the reals. There is no continuous bijection between a closed interval and the reals as they are topologically different spaces.
  The next tool we introduce is the so-called ``Ball Theorem''.

\begin{theorem}[Ball Theorem]
  \label{thm:ball}
  Every semi-algebraic set in $R^n$ of complexity at most $L \geq 4$ 
  contains a point of distance at most $2^{L^{8n}}$ from the origin.  
\end{theorem}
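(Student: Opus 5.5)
The plan is to reduce the statement to a bound on the size of isolated solutions of a polynomial system, and then bound those solutions with a univariate root bound. Here the complexity $L$ is taken to be the total bit-size of the defining formula (number of symbols, degrees and coefficient lengths), so degrees and coefficient heights are at most $2^{L}$.

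\textbf{Step 1: reduce to one polynomial.} Let $S$ be nonempty; if it is empty there is nothing to prove. I would first make $S$ closed and then take the point of $S$ closest to the origin, but $S$ need not be closed. So instead I would pick one nonempty cell of the sign conditions. For that cell, replace each strict inequality $p>0$ by $p\cdot t^2=1$, as in the proof for \textsc{ETR-AM}. This lifts the cell to a real algebraic set $V=\{q=0\}$, with $q$ a sum of squares, in $n'\le n+L$ variables. Its degree and height are at most $2^{O(L)}$, and any point of $V$ projects to a point of $S$ no farther from the origin. Non-strict inequalities are handled the same way with $p=t^2$.

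\textbf{Step 2: find a point of bounded norm on $V$.} Consider the function $\|x\|^2$ on the closed set $V$. It attains a minimum, since the sublevel sets are compact and $V$ is nonempty. A minimizer is a critical point of the distance function, so it lies in the set defined by $q=0$ together with the Lagrange conditions that $x$ is parallel to $\nabla q$ (the $2\times2$ minors of $(x,\nabla q)$ vanish).

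Singularities and infinitely many critical points have to be handled. I would perturb: use $q_\varepsilon = q^2+\varepsilon(\|x\|^2-\ldots)$, or use the standard deformation $q+\varepsilon\sum x_i^{2d}$. For this deformation the critical set is finite, and a minimizer of the original problem is a limit of the deformed minimizers as $\varepsilon\to 0$. Each of these is an isolated solution of a square polynomial system of degree $D\le 2^{O(L)}$.

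\textbf{Step 3: bound the isolated solutions.} By B\'ezout there are at most $D^{n'}$ isolated solutions. Eliminating, via a resultant or a rational univariate representation, gives a univariate polynomial for $\|x\|^2$ whose degree is at most $D^{O(n')}$ and whose coefficient bit-size is at most $L\cdot D^{O(n')}$. Cauchy's root bound (a root is at most $1+\max|a_i/a_d|$) then gives $\|x\|\le 2^{L\cdot D^{O(n')}}$. With $D=2^{O(L)}$ and $n'\le n+L$ this is of the shape $2^{L^{O(n)}}$. Tracking the constants should then give $2^{L^{8n}}$, using $L\ge 4$ to absorb the lower-order terms.

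\textbf{Main obstacle.} The hardest step is Step 2. Making the critical-point system zero-dimensional while controlling the coefficient growth of the resultants, and justifying the limit $\varepsilon\to0$, is delicate. It may be easier to appeal to quantitative quantifier-elimination bounds (Basu--Pollack--Roy) for sample points in every connected component. Keeping the exponent at $n$ rather than $n+L$, which requires not counting the slack variables $t$ as extra dimensions, also needs care. I would expect to have to absorb them into the sign-condition formulation instead of lifting.
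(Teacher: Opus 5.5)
The paper does not prove this statement. It imports it from Schaefer and \v{S}tefankovi\v{c}, who take it from the quantitative bounds of real algebraic geometry going back to Vorob'ev. Those bounds say: every connected component of every realizable sign condition of integer polynomials of degree at most $d$ and coefficient bit-size at most $\tau$ in $k$ variables meets the ball of radius $2^{\tau d^{O(k)}}$. The number of polynomials does not enter the exponent $O(k)$, and substituting $d,\tau\le L$ and $k=n$ is what gives $2^{L^{O(n)}}$.

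Your outline has the right general shape, but it loses exactly this feature, and the conclusion of Step~3 is an arithmetic error. With $D=2^{O(L)}$ and $n'\le n+L$ your estimate is $2^{L\cdot D^{O(n')}}=2^{L\cdot 2^{O(L(n+L))}}$, which is not of the form $2^{L^{O(n)}}$. Even with $n'=n$ you would get $D^{O(n)}=2^{O(Ln)}$, not $L^{O(n)}$.

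Under your convention that degrees can be as large as $2^{L}$, the statement is in fact false. The set $\{(x,y) : x=1+1 \land y=x^{N}\}$ with $N=2^{m}$ written in binary has a defining formula of length $L=m+O(1)$. Yet its only point has norm larger than $2^{2^{m}}$, which exceeds $2^{L^{16}}$ for large $m$.

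So you need a complexity measure in which degrees and bit-sizes are $O(L)$. You must also keep the number of variables equal to $n$: one slack variable per atom pushes the exponent up to $L^{O(n+L)}$. This cannot be ``absorbed'' afterwards; it is the core of the cited proofs. Those proofs perturb the polynomials into general position so that at most $n$ of the perturbed hypersurfaces meet at any point. They work with at most $n$ of them at a time in (essentially) the original variables, and they treat strict inequalities by an infinitesimal thickening such as $p\ge\varepsilon$ rather than by new variables.

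Step~2 also fails as written. Since $q$ is a sum of squares that vanishes on $V$, its gradient vanishes at every point of $V$. So the Lagrange conditions ``$x\parallel\nabla q$'' hold everywhere on $V$ and cut out nothing. The deformation $q+\varepsilon\sum_i x_i^{2d}$ does not repair this: because $q\ge 0$, its zero set is contained in $\{0\}$. What is needed is:
\begin{itemize}
\item a deformation in the opposite direction, turning (a truncation of $V$ to a large ball) into a bounded smooth hypersurface surrounding it;
\item an argument that the limits of its critical points as $\varepsilon\to 0$ meet every connected component of $V$;
\item control of the bit-size of the eliminant with $\varepsilon$ present as a parameter.
\end{itemize}
The realistic repair is the alternative you mention at the end. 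Invoke the Basu--Pollack--Roy sample-point bound directly in $k=n$ variables with $d,\tau\le L$ and track the constants; this is exactly what the source the paper cites does.
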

Here, the complexity $L$ of a semialgebraic set $S$ is the length of the shortest formula $\varphi$ such that \[S = \{x\in \R^n : \varphi(x)\}.\]

We use a version distilled by Schaefer and {\v{S}}tefankovi{\v{c}}~\cite{Schaefer2017FixedPointsNashETR}.
The original research was done in algebraic geometry, and the first version of a ball theorem can be traced back to 1984~\cite{Vorobev1984EstimatesRealRoots}.
Here we focus on how to apply the theorem.
For concreteness, we define \textsc{Compact-ETR} in the same way as \etr, but with every variable restricted to the compact domain $[-1,1]$.
We prove the following lemma.

\begin{lemma}
  \textsc{Compact-ETR} is \ER-complete.
\end{lemma}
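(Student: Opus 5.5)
Membership is immediate: \textsc{Compact-ETR} is a syntactic special case of \etr (equivalently, the constraints $-1\le x_i\le 1$ can be conjoined to the formula). For hardness I would reduce from \etr (or \textsc{ETR-AM}), using the Ball Theorem (\Cref{thm:ball}) to bound where a solution must lie if one exists, and then rescaling so that this bounded region fits into $[-1,1]^n$.

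Let $\varphi(x_1,\ldots,x_n)$ be an \etr-instance of length $L$, where we may assume $L\ge 4$. Its solution set $S$ is semialgebraic of complexity at most $L$. By \Cref{thm:ball}, if $S\neq\emptyset$ then $S$ contains a point $x$ with $\|x\|\le 2^{L^{8n}}$. Hence $\varphi$ is satisfiable if and only if it has a solution with every $|x_i|\le 2^{L^{8n}}$.

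The number $2^{L^{8n}}$ is doubly exponential, so it cannot be written down as a coefficient. Instead I would construct it inside the formula by repeated squaring, in the spirit of the straight-line programs from the first section. The key move is to run the squaring chain on \emph{small} numbers. Introduce fresh variables $\varepsilon_0,\ldots,\varepsilon_m$ with
\[
\varepsilon_0 = \tfrac12 \qquad\text{and}\qquad \varepsilon_{j+1} = \varepsilon_j^2 .
\]
Then $\varepsilon_m = 2^{-2^m}$, and all $\varepsilon_j$ lie in $[0,1/2]\subseteq[-1,1]$. The constraint $\varepsilon_0=\tfrac12$ is written as $2\varepsilon_0=1$. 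Choose $m=O(n\log L)$ so that $2^m\ge L^{8n}$; this needs only polynomially many constraints. With $\varepsilon:=\varepsilon_m$, substitute $x_i = y_i/\varepsilon$ for new variables $y_i\in[-1,1]$. Then the solutions with $|x_i|\le 1/\varepsilon$ correspond exactly to $y\in[-1,1]^n$. To avoid division, multiply each atomic constraint $p(x)\,R\,0$ by $\varepsilon^{\deg p}$, obtaining $\tilde p(y,\varepsilon)\,R\,0$. Since $\varepsilon>0$, this multiplication preserves the sign and therefore preserves every relation $R$.

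The step I expect to be the main obstacle is keeping the reduction polynomial. Homogenizing by $\varepsilon^{\deg p}$ is harmless if the polynomials are first arithmetized as in the \textsc{ETR-AM} lemma, because then every constraint has degree at most $2$: $x+y=z$ becomes $y_1+y_2=y_3$, and $x\cdot y=z$ becomes $y_1y_2=\varepsilon y_3$. The catch is that arithmetization introduces auxiliary variables, and the Ball Theorem must then be applied to the \emph{arithmetized} formula, with $n$ and $L$ the new polynomial sizes. This is fine, since the exponent $m$ grows only logarithmically in $L$ and linearly in $n$. Finally, the constraint $x=1$ becomes $y=\varepsilon$, which again stays in range. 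I would carefully check both directions. For soundness, any compact solution $y$ gives the solution $x=y/\varepsilon$ of the original formula. For completeness, the small solution provided by the Ball Theorem gives $y=\varepsilon x\in[-1,1]^n$.
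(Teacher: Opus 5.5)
Your proof is correct and follows the paper's argument essentially step for step. You use the Ball Theorem to confine a solution to a box of side $2^{L^{8n}}$, build $\varepsilon = 2^{-2^m}$ inside $[0,1/2]$ by repeated squaring from $1/2$, and substitute $x = y/\varepsilon$ with denominators cleared, so that $x\cdot y = z$ becomes $y_1y_2 = \varepsilon y_3$. Your remark that the Ball Theorem must be applied to the already-arithmetized instance, with its own $n$ and $L$, matches the paper's choice of starting the reduction directly from \textsc{ETR-AM}.
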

\begin{proof}
  Membership in \ER is immediate, as \textsc{Compact-ETR} is a special case of \etr.

  To show hardness, let $\exists x\in \R^n : \varphi(x)$ be an instance of \textsc{ETR-AM} with description length $L$.
  We assume that $L\geq 4$, since our reduction only needs to work for larger instances.
  
  We now construct an equivalent instance
  $\exists x\in \R^m :\psi(x)$ of \textsc{Compact-ETR}.
  Let $M \geq 2^{L^{8n}}$ as in \Cref{thm:ball}.
  We immediately obtain that
  $\exists x\in \R^n : \varphi(x)$ is equivalent to 
  $\exists x\in [-M,M]^n : \varphi(x)$.
  To see this, let $S = \{x\in \R^n : \varphi(x)\}$.
  By \Cref{thm:ball}, if $S$ is non-empty, then it contains a point $x$ at distance at most $M$ from the origin,
  and hence in the cube $[-M,M]^n$.
  Therefore, we can restrict to this compact range.
  All that remains is to shrink the range.
  We use a similar trick as before, defining a continuous bijection $f : [-1,1] \rightarrow [-M,M]$ by $f(x) = M \cdot x$.
  To use this bijection, we do need to be able to construct $M$.
  And $M$ itself is not in the range $[-1,1]$.
  To get around this, we construct $\varepsilon = 1/M$, and our function $f$ can be encoded as
  $f(x) = x/\varepsilon$. 
  In fact, we will construct $\varepsilon$ such that $\varepsilon^{-1} = M = 2^{2^k} \geq 2^{L^{8n}}$ using a formula of size $O(k) = O(n \log L)$.
  We are using repeated squaring for this.
  Let $x_1 = 1/2$ and \[x_{i+1} = x_i^2.\]
  It is easy to see that $x_k = 2^{-2^k} = \varepsilon$.
  
  Now the formula $\Psi = \exists x\in [-1,1]^m \psi(x)$ has the following form.
  The first part constructs a variable that must hold the value $\varepsilon$ as above.
  The second part of $\psi$ is exactly the same as $\varphi$, with every occurrence of every variable $x$ replaced by $x/\varepsilon$.
  We multiply this out everywhere to avoid divisions.
  For example 
  $x\cdot y = z$ becomes
  \[\frac{x}{\varepsilon}\frac{y}{\varepsilon} = \frac{z}{\varepsilon},\]
  which is equivalent to 
  \[xy = z \varepsilon.\]
\end{proof}

\subsection{The \artgalleryProblem.}
The \artgalleryProblem is a classical problem in computational geometry, motivated by a simple and intuitive scenario.
Imagine an art gallery with a complicated floor plan consisting of walls, corners, and narrow corridors.
A guard placed inside the gallery can observe its surroundings, but walls block visibility.
The central question is: how many guards are necessary, and where should they be placed, so that every point of the gallery is visible to at least one guard?

Despite its simple formulation, the problem reveals deep connections between geometry, combinatorics, and algorithms.
It gave rise to fundamental results such as Chvátal’s theorem~\cite{Chvatal1975, Fisk1978}, which states that any simple polygon with \(n\) vertices can be guarded by at most \(\lfloor n/3 \rfloor\) guards.
The art gallery problem has since become a cornerstone for the study of visibility.

Let \(P \subset \mathbb{R}^2\) be a \emph{simple polygon}.
For two points \(x,y \in P\), we say that \(x\) \emph{sees} \(y\) if the closed line segment $\overline{xy} \subseteq P$.
A set \(G \subseteq P\) is called a \emph{guarding set} for \(P\) if for every point \(p \in P\) there exists a guard \(g \in G\) such that $g$ sees $p$.
The \emph{art gallery problem} asks to determine, for a given polygon \(P\), a guarding set \(G\) of minimum cardinality.

\begin{center}
\includegraphics[scale = 0.3]{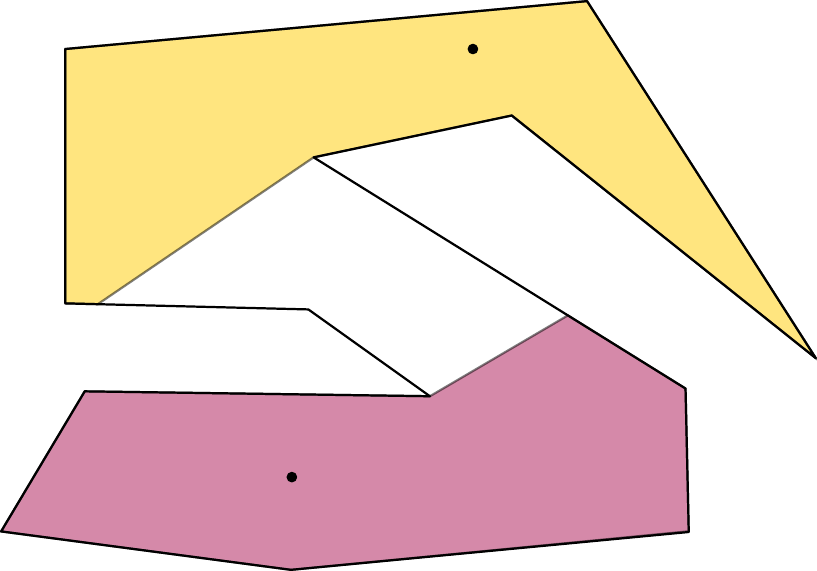}
\end{center}
In the example, we can see a polygon together with two guards that are guarding part of the entire polygon.

In 1984 Aggarwal showed that the \artgalleryProblem is \NP-hard~\cite{aggarwal1984artgallery}.
Later, many more hardness proofs for special cases and variants followed, but only very few researchers mentioned the problem of \NP-membership~\cite{orourke1987artgallery}. 
In fact, many researchers that I spoke to during my time as a PhD student told me that there is probably an easy argument that the \artgalleryProblem is in \NP, but they happen to have it not ready at the moment.
But maybe it would be an instructive exercise for me.
This intuition from those at the time senior researchers came likely from the fact that in most instances you can wiggle the solution a little bit.
Then you find a rational solution with low coordinate complexity and this will be your NP-membership witness.
There were only few cases known where the guard positions are fixed and in those cases one could easily compute the coordinates and thus they can also be described with few bits.
While this intuition is appealing, I did not manage to turn it into a rigorous proof.
And there is a good reason for that.
In 2017 Abrahamsen, Adamaszek and Miltzow~\cite{abrahamsen2017irrational} found a polygon that required irrational coordinates.
\begin{center}
    \includegraphics{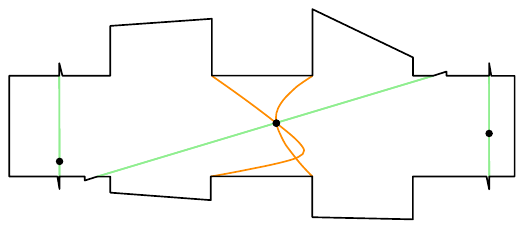}
\end{center}
The polygon is surprisingly simple and it boils down to some simple calculations.
As a matter of fact the same authors showed later that the following theorem~\cite{abrahamsen2022artgallery}.
We recommend to read the more accessible proof by Stade~\cite{Stade2025PointBoundaryArtGallery}.

\begin{theorem}
  The \artgalleryProblem is \ER-complete.  
\end{theorem}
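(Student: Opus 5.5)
The proof has two parts: membership and hardness. Hardness is where almost all the work lies.

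\emph{Membership.} I would use the machine-model characterization from \Cref{thm:Machine-Model}. The witness is the real coordinates of the $k$ guards. The \realRAM verifier then has to decide whether the union of the visibility regions $\Vis(g_1)\cup\dots\cup\Vis(g_k)$ covers $P$. Each $\Vis(g_i)$ is a polygon whose vertices are polygon vertices or intersections of rays from $g_i$ through reflex vertices with edges of $P$. All of these can be computed with $+,-,\times,\div$ and sign tests in polynomial time. The verifier then computes the arrangement of all visibility-region edges together with the edges of $P$, and for every face, edge and vertex of this arrangement that lies in $P$ it tests one representative point against each $\Vis(g_i)$. This takes polynomially many exact operations and no rounding, so the problem is in \ER. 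This is precisely the exercise the paper mentions, where an \etr formula is painful but the machine model is easy.

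\emph{Hardness.} I would reduce from \etrinv, since the previous subsection was designed exactly for geometric encodings of this kind. The construction has four steps.
\begin{enumerate}
\item \textbf{Variables.} For each variable $x_i\in[1/2,2]$, build a variable gadget: a segment on the boundary of $P$ with a few pockets that force a guard to lie on that segment. The guard's position along the segment encodes $x_i$ affinely. The compact range $[1/2,2]$ is essential here, because a guard on a bounded segment cannot represent an unbounded value. This is why the compactification was needed.
\item \textbf{Copying.} Build copy gadgets, in the style of Stade's presentation. Two guards on different segments are forced to encode the same value by a pair of narrow ``nooks''. Each nook is seen only from the two segments, and coverage of both nooks forces the guards' parametrizations to be related monotonically, and in fact identically once the nooks are placed carefully. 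A pair of complementary nooks gives $\le$ in both directions and hence equality.
\item \textbf{Addition and inversion.} Implement $x+y=z$ and $x\cdot y=1$ by nooks whose coverage conditions produce the corresponding curves. A nook seen by two guards yields a condition whose boundary is a line for addition, and for inversion a hyperbola arising from the projective nature of visibility rays. Combining two opposite inequalities again forces equality.
\item \textbf{Counting.} Set $k$ equal to the number of guard segments, so that every guard is pinned to its own segment. Any guarding set with $k$ guards then encodes a solution to the \etrinv instance, and conversely every solution yields a guarding set.
\end{enumerate}

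The main obstacle is step 3, together with global layout. The nooks must be seen by exactly the intended pairs of guards, and all the unintended visibilities, including corridors from the variable segments into gadgets of other variables and a leftover region that must be guarded without spending extra guards, have to be controlled. I would first realize the constraints with the idealized nooks, and then argue that a polynomial-size layout with rational coordinates keeps all unintended sightlines blocked. I expect this last part to be the most technical.
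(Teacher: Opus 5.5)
Your membership argument is sound; it is the machine-model verification the paper leaves as an exercise. Several pieces of your hardness plan also match the paper's sketch: the variable gadgets, the role of the compact domain of \etrinv, and the inversion gadget built from one affine and one projective shadow map. In the paper these are $s=x$ and $t=1/y$, with coverage if and only if $s\ge t$.

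The step that fails is addition. You propose to realize $x+y=z$ by a nook seen by two guards whose coverage condition has a line as boundary. But such a nook involves only two of the three variables. It can only enforce a relation between those two, such as a copy or a shift, and never the ternary constraint. Putting a third guard on the same nook does not help. The part of a segment that a guard sees is a union of intervals whose endpoints depend only on that guard's position. Whether the segment is covered is therefore decided by the order of these endpoints, that is, by a Boolean combination of comparisons between \emph{pairs} of variables. No such combination defines $\{x+y=z\}$. One of its pieces $A(x,y)\cap B(y,z)\cap C(x,z)$ would have to contain an open patch of this plane. Since the plane projects bijectively onto each coordinate plane, that piece would then contain a ball of $\R^3$, which is impossible.

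What is missing is a gadget with a genuine three-way interaction. Three guards must jointly cover a region of positive area, so that whether an unseen spot survives depends on where two shadow boundaries cross relative to the third. You must then choose the geometry so that this condition is \emph{exactly} $x+y\ge z$, and in a mirrored copy $x+y\le z$, rather than some nonlinear relation. One way is to put the three guard segments on a common line and the three pivots on a parallel line. Then the condition that the three boundary rays pass through a common point is affine in the guard positions. You also have to keep this region hidden from all other guards.

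The paper's sketch does not construct an addition gadget either; it covers only variables and inversion and defers to Abrahamsen, Adamaszek, and Miltzow and to Stade. Your plan, however, states a mechanism for addition that cannot work, so the reduction as proposed does not go through.
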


This proof is very long and at times very technical.
There are a few very simple key ideas here that are instructive and became useful also for later reductions and other problems.
Most prominently the introduction of \etrinv.
We give here two small parts of the proof that reflect some of the core ideas, namely, on how to encode variables into the \artgalleryProblem and how to encode the inversion constraint.
Similar ideas have been used in many reductions that followed.

\paragraph{Variable Representation.}
  We represent a variable using a guard, and we enforce that the guard can only be placed on a single compact segment, as displayed in \Cref{fig:variable-guard-segment}. 

    \begin{center}
      \includegraphics{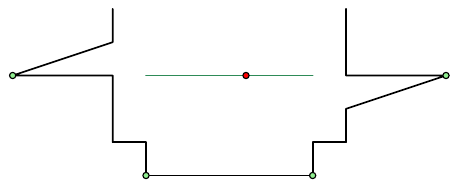}
      \captionof{figure}{Any guard that wants to see the four green dots simultaneously must be placed on the green segment.}
      \label{fig:variable-guard-segment}
    \end{center}

It is clear that this variable segment must have bounded length, since it lies inside a polygon that is itself bounded.

It is necessary that this segment be compact for the following reason.
We enforce the guard $g$ to see a set of points $p_1,\ldots,p_k$.
This means that $g$ must be in the intersection of the visibility regions, i.e., 
$g \in \Vis(p_1)\cap \ldots \cap \Vis(p_k)$.
Here $\Vis(p)$ is the region seen by point $p$ and this region is by definition of visibility compact.
Thus, the region to which $g$ is restricted is compact as well.
We stress this point, because this is the justification for using the ball theorem and not a simpler way to restrict the variable range.

\paragraph{Inversion Representation.}
  We construct the constraint $x\cdot y \geq 1$.
  The construction of the constraint 
  $x\cdot y \leq 1$ is technically more involved, but essentially the same. 
  The construction is depicted in the figure below.
  It consists of two guard segments (we omitted the nooks that enforce the guards to lie on the segments, for clarity), a critical wall to the right, and two pivot points that block the visibility of the green and red guards.

  \begin{center}
    \includegraphics{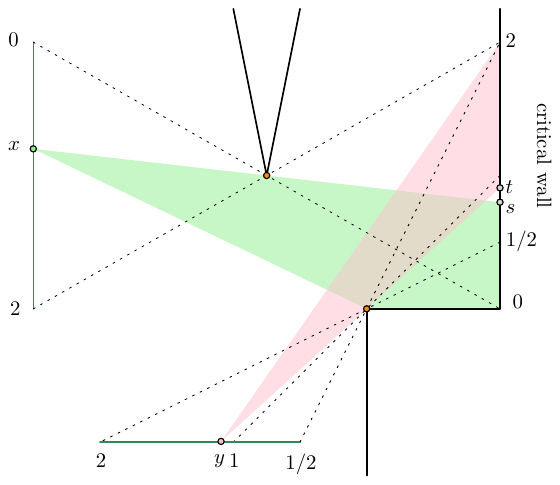}
    \captionof{figure}{The two guards can see the wall to the right together if and only if they together satisfy the inequality $x\cdot y \geq 1$.}
      \label{fig:art-inversion}
  \end{center}

  We show the following lemma.
  \begin{lemma}
    Assume that no other guard interferes and that the two guards in \Cref{fig:art-inversion} are on their respective guard segments.
    Let $x,y$ be the values that their position represents.
    Then they can see together the critical wall on the right if and only if $x\cdot y \geq 1$. 
  \end{lemma}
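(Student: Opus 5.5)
Since the figure is not reproduced in the text, I will first fix coordinates that I believe match it. The critical wall is a segment on a vertical line, say $w=\{(c,t): t\in[t_0,t_1]\}$. The pivot $p_1$ blocks the green guard's view from below and the pivot $p_2$ blocks the red guard's view from above. Each guard position $g$ then sees exactly a subsegment of $w$ that ends where the ray from $g$ through its pivot hits the wall. For the green guard this is $[h_1(x),t_1]$, and for the red guard it is $[t_0,h_2(y)]$. Here $x$ and $y$ are the values encoded by the positions on the two guard segments.

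The key observation is then immediate. The union of the two visible parts covers $w$ if and only if the two intervals overlap, that is,
\[
h_1(x)\le h_2(y).
\]
The boundary case of equality is included, because visibility regions are closed. By the standing assumption, no other guard helps. So the lemma reduces to showing that $h_1(x)\le h_2(y)$ is equivalent to $x\cdot y\ge 1$.

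Next I compute $h_1$ and $h_2$ by similar triangles. The wall point is a projection of the guard from the pivot onto the wall line. This map is a projective (fractional linear) function of the guard position, and hence of $x$. So both functions have the form
\[
h_1(x)=\frac{a_1x+b_1}{c_1x+d_1}, \qquad h_2(y)=\frac{a_2y+b_2}{c_2y+d_2}.
\]
The task is to choose the placement of the guard segments and pivots so that the inequality simplifies to $xy\ge 1$.

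The natural choice is to make one guard segment parallel to the wall, so that its map is affine. Place the pivot $p_1$ on the wall line at height $0$ and put the green segment so that $h_1(x)=-x$. Then arrange the red segment and the pivot $p_2$ so that $h_2(y)=-1/y$. This works if the red segment lies on a line through the pivot's "horizon", so that the projection is a pure inversion. With these choices, $h_1(x)\le h_2(y)$ reads $-x\le -1/y$, which for positive $y$ is equivalent to $xy\ge 1$. Finally, I verify that over the whole range $[1/2,2]$ the endpoints $t_0$ and $t_1$ are never cut off. This means choosing $t_0\le -2$ and $t_1\ge -1/2$, so that the only obstruction is the gap between the two intervals.

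The main obstacle is justifying the geometric realization. This has three parts. First, $h_2$ must be exactly an inversion with the right orientation and a sign that keeps monotonicity. Second, the visible part must be exactly one interval, with no further occlusion by other walls. Third, the direction of monotonicity must give $\ge$ rather than $\le$. I would handle this by an explicit coordinate computation with a line through $p_2$ and a point on the segment. I would check that the map $y\mapsto$ wall height is a Möbius map sending the encoded $y$ to $-1/y$, adjusting the parametrization of the segment if needed.
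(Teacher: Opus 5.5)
Your skeleton matches the paper's proof. Covering the wall is equivalent to the two visible subintervals overlapping: your $h_1(x)\le h_2(y)$ is the paper's $s\ge t$. The guard segment parallel to the wall gives a linear relation, and the other segment must give an inversion.

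The gap is that this last step, which is the real content of the lemma, is never established. You state the target $h_2(y)=-1/y$ and describe the configuration only as ``the red segment lies on a line through the pivot's horizon''. Read literally, this is degenerate: if the red guard's line passed through $p_2$, every sight line through $p_2$ would coincide with it. You then postpone the verification. Appealing to ``adjusting the parametrization'' does not close the gap either. The encoding along a guard segment is affine, so your only freedom is where the encoded $0$ and the unit length of $y$ sit on the red line. The origin of wall heights is already fixed by your choice of $h_1$. The lemma is precisely the claim that these can be placed so that projection through $p_2$ becomes $y\mapsto -1/y$.

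There is a related slip for the green guard. A pivot \emph{on} the wall line would make $h_1$ constant. What gives $h_1(x)=-x$ is a pivot midway between the green segment and the parallel wall, at wall height $0$. This is exactly the paper's justification of $s=x$.

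The missing argument is a short similar-triangles computation. Let $A$ be the point where the parallel to the wall through $p_2$ meets the red guard's line. Let $B$ be the point where the parallel to the red guard's line through $p_2$ meets the wall. Measure $y$ from $A$ and wall heights from $B$; in particular, $B$ must sit at the same height $0$ you used for $h_1$. Suppose the sight line from a guard $g$ through $p_2$ hits the wall at $W$. The triangles $gAp_2$ and $p_2BW$ have pairwise parallel sides: $gA\parallel p_2B$, $Ap_2\parallel BW$, and $gp_2$, $p_2W$ lie on one line. Hence they are similar, and
\[
  \frac{|gA|}{|p_2B|}=\frac{|Ap_2|}{|BW|},
  \qquad\text{so}\qquad
  |BW|=\frac{|Ap_2|\cdot|p_2B|}{y}.
\]
With $|Ap_2|=|p_2B|=1$ this gives $|BW|=1/y$. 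These are exactly the paper's triangles: $\Delta_1$ has legs $1$ and $y$, and $\Delta_2$ has legs $t$ and $1$. The result is the paper's $t=1/y$, which is your $h_2(y)=-1/y$. Combined with $s=x$, the overlap condition becomes $x\ge 1/y$, i.e.\ $xy\ge 1$ since $y>0$.
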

  \begin{proof}
    For convenience, we mark the critical wall also with values that help us to explain the proof.
    
    Let's first consider the green guard representing $x$. 
    We denote by $s$ the value on the critical segment of the most upper point that is seen by the green guard.
    Similarly, we define $t$ to be the value of the lowest point seen by the red guard.
    It is clear that the two guards see the entire critical segment if and only if $s\geq t$.

    Since the green guard segment is parallel to the critical wall and the pivot point lies exactly midway between the two segments, we have $x = s$.

    It remains to show that $y = 1/t$.
    To see this, consider two similar triangles, $\Delta_1$ and $\Delta_2$.
    \begin{center}
    \includegraphics[page = 2]{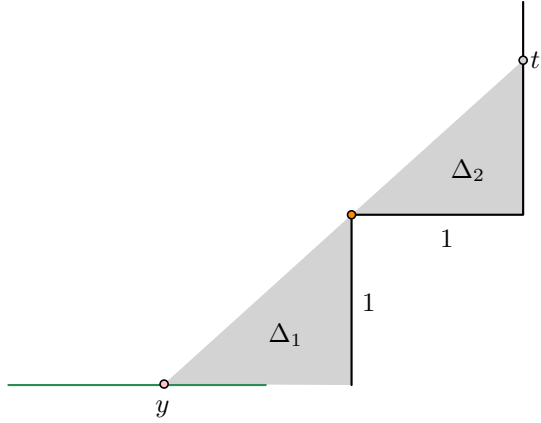}
    \captionof{figure}{The two triangles $\Delta_1$ and $\Delta_2$ are similar. }
      \label{fig:triangle-inversion}
    \end{center}
    Note that $\Delta_1$ has height $1$ and width $y$.
    Similarly, $\Delta_2$ has height $t$ and width $1$.
    Thus, we have $\frac{1}{y} = \frac{t}{1}$, which shows the claim and finishes the proof.
  \end{proof}

  Representing inversion is typically much easier than representing multiplication, because we have a non-linear interaction between three objects that we have to control precisely. 
  In case of inversion, we only have to control the interaction between two objects. 

  Interestingly, neither known \ER-hardness proof of the \artgalleryProblem uses this gadget to enforce inversion.
  While this gadget is much easier than the gadgets that were used, it has two downsides.
  One is that the two guard segments are not parallel, and copying values from a horizontal guard segment to a vertical guard segment is not easy. 
  Secondly, it is hard to connect this gadget to the rest of the construction.
  Still, this gadget shows that it is possible to encode inversion, and it remains an intellectual challenge to find a gadget that works within an \ER-hardness reduction.

\vfill

\paragraph{Acknowledgements}
We thank Michael Dobbins for helpful discussions on pseudolines and polytopes.
We thank Sebastiaan Jans for numerous comments on typos in Section 1.
We thank Rens de Wit for pointing out a mistake in the calculation of the multiplication gadget in the partial order type realizability proof.
We thank Taylan Batman for suggestions that improved the clarity of the greatest common divisor algorithm, the \ER-completeness  proof of recognizing unit disk intersection graphs and the \ER-hardness of optimal curve straightening.
We thank Dinand Blom for pointing out possible improvements to the presentation of the \ER-hardness proof for recognizing unit disk intersection graphs.
We thank Daan van Dam for pointing out a typo for the linear search algorithm on the word RAM.

\newpage
\bibliographystyle{abbrv}
\bibliography{library}

\end{document}